\documentclass[a4paper,10pt]{article}

\addtolength{\hoffset}{-1cm}
\addtolength{\textwidth}{2cm} 

\addtolength{\voffset}{-1.0cm}
\addtolength{\textheight}{1.0cm}

\title{A Characterization of Semi-Synchrony for Asynchronous Robots with Limited
Visibility, and its Application to Luminous Synchronizer Design} 

\author{Paola Flocchini\thanks{University of Ottawa, Canada. E-mail: \texttt{paola.flocchini@uottawa.ca}} \and 
Nicola Santoro\thanks{Carleton University, Canada. E-mail. \texttt{santoro@scs.carleton.ca}} \and 
Masafumi Yamashita\thanks{Kyushu University, Japan. E-mail. \texttt{mak@inf.kyushu-u.ac.jp}} \and 
Yukiko Yamauchi\thanks{Kyushu University, Japan. E-mail. \texttt{yamauchi@inf.kyushu-u.ac.jp}} 
}

\usepackage{amsmath,amsfonts} 
\usepackage[pdftex]{graphicx}
\usepackage{latexsym}

\newtheorem{theorem}{Theorem}
\newtheorem{lemma}[theorem]{Lemma}
\newtheorem{corollary}{Corollary}
\newtheorem{assumption}{Assumption} 
\newtheorem{proposition}{Proposition}
\newtheorem{ex}{Example} 

\newenvironment{proof}{{\bf Proof. } } 

\newcommand{\qed}{\hfill $\Box$}

\newcommand{\FSYNC}{\cal{FSYNC}}
\newcommand{\SSYNC}{\cal{SSYNC}}
\newcommand{\ASYNC}{\cal{ASYNC}}

\begin{document}
\date{}
\maketitle

\begin{abstract}
A mobile robot system consists of anonymous mobile robots, each of which 
autonomously performs sensing, computation, and movement
according to a common algorithm,
so that the robots collectively achieve a given task.
There are two main  models of time and activation of the robots.
In the {\em semi-synchronous model} (SSYNC),
the robots share a common notion of time; at each time unit, a subset of 
the robots is
activated,  and each performs all three actions (sensing, computation, 
and movement) in that time unit.
In the {\em asynchronous model} (ASYNC), there is no common notion of time,
the robots are activated at arbitrary times, and the duration of each 
action is arbitrary  but finite.

In this paper, we investigate  the problem of synchronizing  ASNYC 
robots with limited sensing range, i.e., limited visibility. We first 
present a sufficient condition for an ASYNC execution of a common 
algorithm $\mathcal{A}$ to have a corresponding SSYNC execution of $\mathcal{A}$; 
our condition 
imposes timing constraints on the activation schedule of the robots 
and visibility constraints during movement. 
Then, we prove that this condition is necessary (with probability $1$) 
under a randomized ASYNC  adversary.
Finally, we present a synchronization algorithm for luminous ASYNC 
robots
with limited visibility, each equipped with a light that can take a 
constant number of colors.
Our algorithm enables  luminous ASYNC robots
to simulate any algorithm $\mathcal{A}$, designed for the (non-luminous) SSYNC robots 
and satisfying visibility constraints. 

\noindent{\bf Keywords.~} 
mobile robots, synchronization, limited visibility, light. 
\end{abstract}

\section{Introduction} 

Distributed computing by mobile computing entities has 
attracted much attention in the past two decades 
and many distributed system models have been considered, for example, 
the {\em autonomous mobile robot system}~\cite{SY99} modeled after 
cheap hardware robots with very weak capabilities, 
the {\em population protocol model}~\cite{AADFP06} 
motivated by delay tolerant networks, 
the {\em programmable particle model}~\cite{DDGRSS14} 
inspired by movement of amoebae, and 
the {\em metamorphic robotic system}~\cite{DSY04a,DSY04b} 
considering modular robots. 
The computational power of these distributed systems 
has been investigated in distributed computing theory and 
many fundamental problems have been proposed that require  a degree of 
agreement among the mobile computing entities. 
Typical problems are  {\em leader election}~\cite{DFSBRS15,DPV10}, 
which  requires the entities to 
agree on a single entity,  
{\em gathering}~\cite{CFPS12,FPSW05,SY99}, which  
requires the entities to gather at a point not known apriori, and
{\em shape formation}~\cite{DFSVY20,DSY04a,FYOKY15,SY99,YS10,YY13}
(also called the transformability problem), which 
requires the entities to form a specified shape. 
These results are considered as theoretical foundations 
in several related areas like 
ad-hoc networks, sensor networks, robotics, 
molecular computing, chemical reaction circuits, and so on. 

In this paper, we focus on the autonomous mobile robot system. 
Let $\mathcal{R} = \{r_1, r_2, \ldots, r_n\}$ be a set of $n$ robots. 
Each robot is an {\em anonymous} (indistinguishable) point 
moving in the 2D space. 
The robots are {\em silent} (communication-less) 
and do not have   access to a global coordinate system. 
A robot's  behavior is a repetition of    {\em Look-Compute-Move} cycles:
 in the {\em Look} phase,   it observes the positions of the other robots within 
 its visibility range; in the {\em Compute} phase, it 
computes its next position and a continuous route to the next position 
with a common deterministic algorithm; in the {\em Move} phase, it   moves to the computed position along the computed route. 
The essential properties of mobile robot systems are 
the visibility range, obliviousness, and the timing and activation models. 
A robot $r_i$ is equipped with its own local coordinate system $\mathcal{Z}_i$, 
which is a right-handed $x$-$y$ coordinate system.  
The origin of $\mathcal{Z}_i $ is always the current position of $r_i$, 
while the unit distance and the directions and orientations of 
the $x$ and $y$ axes are arbitrary. 
The observation at $r_i$ is a snapshot in $\mathcal{Z}_i$ containing no additional 
information other than the positions of the robots. 
If the visibility range of a robot is unlimited, 
it can observe all robots, 
otherwise it can observe the robots within its visibility. 
In {\em Compute}, when the input to the common algorithm is the snapshot 
taken in the preceding Look, we say the robots are {\em oblivious}. 
When the input includes past observations and computations, 
we say the robots are {\em non-oblivious}. 
In {\em Move}, 
the movement of a robot is {\em rigid} when the robot always reaches the 
next position, and {\em non-rigid} 
when the robot may stop en route after moving a minimum distance $\delta$ (in $\mathcal{Z}_0$)
(if the  length of the route to the destination is smaller than $\delta$, the destination is reached). 

Three different types of timing and activation models have been proposed: 
In the {\em fully-synchronous model} (FSYNC), 
at each discrete time $t=0, 1, 2, \ldots$, 
all robots execute a Look-Compute-Move cycle 
with each of the Look, Compute, and Move completely synchronized. 
In the {\em semi-synchronous model} (SSYNC), 
at each discrete time $t=0, 1, 2, \ldots$, 
a non-empty subset of robots are activated and execute a Look-Compute-Move cycle 
with each of the Look, Compute, and Move completely synchronized. 
For fairness, we assume that each robot executes infinitely many cycles. 
In the {\em asynchronous model} (ASYNC), 
the robots do not have a common notion of time and 
the length of each cycle is arbitrary but finite. 
We also assume fairness in   ASYNC. 
The main difference between   SSYNC (thus, FSYNC) and 
  ASYNC is that 
in   SSYNC, all robots simultaneously take  a snapshot in Look, 
while in   ASYNC a robot may observe moving robots 
although the robot cannot recognize which robots are moving. 

The effect of obliviousness, asynchrony, and visibility 
on the computational power of autonomous mobile robot systems has been 
extensively investigated~\cite{DPV10,FPSW08,FYOKY15,SY99,YS10}. 
Since the only 
output by the oblivious robots is their geometric positions, 
a fundamental problem is the {\em pattern formation problem}, that 
requires the robots to form a target pattern from an initial configuration. 
Existing literature~\cite{FYOKY15,SY99,YS10}\footnote{
An erratum of \cite{FYOKY15} is available at \cite{FYOKY17}.} 
showed that the initial symmetry among the 
anonymous robots determines the set of formable patterns,
irrespective of obliviousness and asynchrony. The only exception is  
  the point formation problem of two robots, 
  also called the {\em rendezvous problem}. 
In fact, Suzuki and Yamashita have shown  that 
the rendezvous problem is solved by oblivious FSYNC robots, 
but cannot be solved by oblivious SSYNC robots~\cite{SY99}. 
In other words, the rendezvous problem demonstrates the difference between 
  FSYNC and   SSYNC (thus, ASYNC). 
These results consider the robots with unlimited visibility. 
Yamauchi and Yamashita have shown that 
limited visibility substantially shrinks the set of formable patterns 
by oblivious ASYNC robots because the robots do not know their global 
symmetry~\cite{YY13}. 

The robots can overcome the limits 
by distributed coordination or additional capabilities. 
Di Luna et al. have shown that a constant number of oblivious ASYNC robots 
can simulate a single non-oblivious ASYNC robot by encoding the memory contents 
to the geometric positions of the robots~\cite{DFSV18}. 
Das et al. have shown that oblivious ASYNC robots {\em with lights} can 
simulate oblivious SSYNC robots~\cite{DFPSY16}. 
A {\em luminous robot}
is equipped with a light whose color is 
changed in every Look-Compute-Move cycle at the end of Compute 
and observed by other robots. 
The authors showed that luminous ASYNC robots with a constant 
number of colors can simulate an algorithm $\mathcal{A}$ designed 
for oblivious SSYNC robots. 
They presented a {\em synchronizer} 
that makes an activated robot accept or reject the current cycle 
so that the snapshot of an accepted cycle does not contain any moving robot. 
In an accepted cycle, the robot changes the color of its light to ``moving''
and moves to the next position computed by $\mathcal{A}$. 
The synchronizer guarantees   fairness by making all robots wait 
with the ``waiting'' color after it accepts a cycle 
until all the other robots accept a cycle. 
All these techniques are heavily based on the fact that   robots have unlimited visibility. 

In this paper, we investigate synchronization by   oblivious ASYNC robots 
with limited visibility and we make some fundamental contributions. 

We start with a formal definition of simulation by mobile robots. 
A {\em configuration} is the set of positions of the robots in $\mathcal{Z}_0$ 
and an {\em execution} of algorithm $\mathcal{A}$ from an initial configuration $I$ 
is an infinite sequence of configurations. 
In   SSYNC, an execution is a sequence of configurations 
$C_0(=I), C_1, C_2, \ldots$, where $C_t$ is the configuration at time $t$. 
An ASYNC execution is the sequence of configurations 
 $C_{t_0}(=I),  C_{t_1}, C_{t_2}, \ldots$  where at least one robot takes a snapshot, 
with  $t_i < t_{i+1}$  for all $i=1,2, \ldots$. 
Then, the {\em footprint} of a robot is 
the sequence of the positions of the robot in each configuration. 
We say that two executions $E$ and $E'$ 
(possibly in different timing and activation models) 
are {\em similar} when the footprints 
and local observations at the robots 
are identical. 

We then  present a sufficient condition for an ASYNC execution 
to have a similar SSYNC execution, and 
we also show that the condition is necessary with probability $1$ 
under a randomized ASYNC adversary. 
The  randomized impossibility result is novel, based on a Borel 
probability measure space for 
non-rigid movement and asynchronous observations, 
and it provides a stronger argument  than a worst-case (deterministic) analysis.  
Our condition not only requires snapshots of static robots 
but also considers a chain of concurrent observations, 
that cannot be treated separately in a SSYNC execution. 
The transitive closure with respect to the concurrent observations 
forms an equivalence relation 
and cycles  of the ASYNC execution  
are decomposed into equivalence classes. 
We then introduce a ``happened-before'' relation among the equivalence classes 
based on the local happened-before relation at a single robot 
or a pair of visible robots. 
Our condition also requires the happened-before relation to form 
a directed acyclic graph so that we construct a similar SSYNC execution 
by applying the equivalence classes one by one 
in the order of one of their  topological sort. 

We then present a synchronizer for oblivious ASYNC luminous robots 
to simulate an execution of an algorithm for oblivious 
(non- luminous) SSYNC robots satisfying those conditions, 
as well as some limitations of synchronizers that make use of visible lights.

\noindent{\bf Related work.~}
Existing literature established a rich class of distributed problems 
for mobile robot systems. 
The pattern formation problem~\cite{SY99} is one of the most important 
static problems, that is, 
the robots stop moving once they reach a terminal configuration.  
Suzuki and Yamashita showed 
the oblivious FSYNC robots can solve the rendezvous problem, 
while the oblivious SSYNC robots cannot~\cite{SY99}. 
Flocchini et al. further discussed the rendezvous problem 
to show the power of lights. 
A robot with {\em externally visible light} can change   but cannot see 
the color of its own light, while the other robots can observe it. 
A robot with {\em internally visible light} can change and see the color 
of its own light, while the other robots cannot observe it. 
They showed that the ASYNC robots with externally visible lights can solve 
the rendezvous problem, 
while the SSYNC robots with internally visible lights cannot~\cite{FSVY16}. 
To demonstrate computational power of the luminous robots, 
many dynamic problems has been proposed. 
Das et al. proposed the {\em oscillating points} problem, 
that requires the robots to alternately come closer and go farther 
from each other~\cite{DFPSY16}. 
This problem shows the difference between  luminous ASYNC robots and 
 (non-luminous) FSYNC robots. 
Flocchini et al. examined the power of internally visible lights 
and that of externally visible lights in  FSYNC and SSYNC 
with a variety of static problems such as 
{\em triangle rotation}, 
  {\em center of gravity expansion}, 
and dynamic problems such as 
the {\em perpetual center of gravity expansion} and 
{\em shrinking rotation}~\cite{FSW19}.  

Synchronization was first presented in \cite{DFPSY16} to 
overcome the limit of the ASYNC robots with unlimited visibility. 
In this paper, we further investigate synchronization 
to demonstrate the difference between   ASYNC and SSYNC 
with limited visibility. 

\noindent{\bf Organization.~} 
We provide detailed definitions of ASYNC and SSYNC executions 
and the similarity between two executions in Section~\ref{sec:preliminary}. 
We then provide  a sufficient condition for an ASYNC execution 
to have a similar SSYNC execution, 
and investigate its necessity under a randomized ASYNC  adversary   
in Section~\ref{sec:condition}. 
Section~\ref{sec:synchronizer} provides a synchronizer algorithm 
for oblivious luminous ASYNC robots that satisfy the necessary and sufficient conditions. 
We conclude our paper in Section~\ref{sec:conclusion}.

\section{Preliminary} 
\label{sec:preliminary}

We investigate a system $\mathcal{R}$ of $n$ anonymous oblivious mobile robots 
$\{r_1, r_2, \ldots, r_n \}$ in the 2D space. 
We use $r_i$ just for notation. 
We assume that at most one robot can occupy any given position 
at any time.\footnote{
We can remove this assumption with multiplicity 
detection capability.}

We consider SSYNC and ASYNC 
as the {\em semi-synchronous scheduler} $\SSYNC$ and 
the {\em asynchronous scheduler} $\ASYNC$, respectively. 
We regard a scheduler as a set of schedules that it can produce.  
Consider an infinite execution $E$ of a deterministic algorithm $\mathcal{A}$ 
from an initial configuration $I$ under $\ASYNC$. 
Independently of $\mathcal{A}$ and $I$,\footnote{
Thus, $ASYNC$ produces any schedule including the worst-case 
schedule for $\mathcal{A}$ and $I$.}
$\ASYNC$ nondeterministically produces a schedule $\Omega$, 
which specifies for each $r_i$ when it is activated and 
executes Look-Compute-Move cycles. 

Formally, $\Omega$ is a set of schedules $\Omega_i$ for each robot $r_i$, 
where $\Omega_i$ is an infinite sequence of Look-Compute-Move cycles. 
The $j$th cycle $\omega_i(j)$ of $\Omega_i$ is denoted by a triple 
$(o_i(j), s_i(j), f_i(j))$, 
where $o_i(j)$, $s_i(j)$, and $f_i(j)$ are the time instants 
that $r_i$ takes a snapshot in the Look, 
starts and ends the Move, respectively. 
We assume that the time interval assigned to $\omega_i(j)$ is 
$[o_i(j), f_i(j)]$, 
and $o_i(j) < s_i(j) < f_i(j) < o_i(j+1)$ for all $j=1, 2, \ldots$. 
Scheduler $\Omega$ is {\em fair} in the sense that each $\Omega_i$ 
satisfies that, for any $t \in \mathbf{R}^+$, there is a $j \in \mathbf{N}$ 
such that $o_i(j) > t$, where 
$\mathbf{R}^+$ and $\mathbf{N}$ are the sets of positive real numbers 
and non-negative integers, respectively.
The execution is not uniquely determined 
by $I$, $\mathcal{A}$, and $\Omega$, 
due to the only source of non-determinism, that is, 
non-rigid movement of robots. 
The set of possible executions of $\mathcal{R}$ given $I$, $\mathcal{A}$, and $\Omega$ 
is denoted by $\mathcal{E}(\Omega, \mathcal{A}, I)$. 

The visibility range of each robot is the unit distance 
of the global coordinate system $\mathcal{Z}_0$.\footnote{The common 
visibility range does not promise common unit distance among the robots.} 
The snapshot $P_i(j)$ taken by $r_i$ at $o_i(j)$ in $\omega_i(j)$ 
is the set of positions of robots in $\mathcal{Z}_i$ visible from $r_i$ at $o_i(j)$. 
$P_i(j)$ always contains its origin, 
because it is the position of $r_i$ in $\mathcal{Z}_i$. 
The number of robots visible from $r_i$ at $o_i(i)$ is denoted by 
$|P_i(j)|$. 
Let $P_i(E) = \{P_i(j) \mid j \in {\mathbf N} \}$ and 
$P(E) = \{P_i(E) \mid r_i \in \mathcal{R}\}$. 

The position of $r_i$ at $o_i(j)$ in the global coordinate system $\mathcal{Z}_0$ 
is denoted by $\pi_i(j)$. 
Note that $r_i$ cannot recognize its position $\pi_i(j)$.
The {\em footprint} of $r_i$ in $E$ is 
$\Pi_i(E) = \{\pi_i(j) \mid j \in {\mathbf N} \}$ 
and let $\Pi(E) = \{\Pi_i(E) \mid r_i \in \mathcal{R}\}$ be the set of footprints 
of all robots of $\mathcal{R}$. 
Since the system is not rigid, $\pi_i(j+1)$ is not uniquely determined by 
$\pi_i(j)$, $P_i(j)$, $\mathcal{Z}_i$, $\mathcal{A}$, and $I$. 

Let $E \in \mathcal{E}(\Omega, \mathcal{A}, I)$ and 
$\tilde E \in \mathcal{E}(\tilde \Omega, \mathcal{A}, I)$. 
If the system is rigid, 
$\Pi(E) = \Pi(\tilde E)$ if $P(E) = P(\tilde E)$. 
Otherwise, for some $\Omega$, $\mathcal{A}$, and $I$, 
there are executions $E, \tilde E \in \mathcal{E}(\Omega, \mathcal{A}, I)$ 
such that $P(E) = P(\tilde E)$ and $\Pi(E) \neq \Pi(\tilde E)$. 

\begin{ex}
When $n=1$, no matter where $r_1$ goes, $P_1(j) = \{(0,0)\}$. 
When $n=2$, suppose that $r_1$ and $r_2$ are initially at $(0,0)$ and $(0,1)$, 
respectively, and synchronously move in parallel along the $x$-axis of $\mathcal{Z}_0$ 
at the same speed. 
As long as their tracks are truncated at the same $x$-coordinate, 
independently of where they are truncated, 
$P_i(j) = P_i(j')$ for all $i \in \{1,2\}$ and $i, j' \in \mathbf{N}$. 
\end{ex} 

We say two executions $E \in \mathcal{E}(\Omega, \mathcal{A}, I)$ and 
$\tilde E \in \mathcal{E}(\Omega, \mathcal{A}, I)$ are {\em similar}, 
denoted by $E \sim \tilde E$, if $P(E) = P(\tilde E)$ and 
$\Pi(E) = \Pi(\tilde E)$. 

Without loss of generality, we assume that 
$\SSYNC$ produces a schedule $\Omega$ 
such that every cycle $\omega_i(j)$ has a form $(t, t+1/4, t+3/4)$ 
for some $t \in \mathbf{N}$.

\section{ASYNC execution with a similar SSYNC execution}
\label{sec:condition}

We provide a necessary and sufficient condition for 
an ASYNC execution $E \in \mathcal{E}(\Omega, \mathcal{A}, I)$ to have 
an SSYNC execution $\tilde E \in \mathcal{E}(\tilde \Omega, \mathcal{A}, I)$ 
such that $P(E) = P(\tilde E)$ and $\Pi(E) = \Pi (\tilde E)$ 
for some $\tilde \Omega \in \SSYNC$.

\subsection{Sufficiency} 

In an SSYNC execution, at each discrete time $t=0, 1, 2, \ldots$ 
the activated robots execute a Look-Compute-Move cycle 
with each of the three phases completely synchronized. 
From this definition, we directly obtain the following three 
assumptions on an ASYNC execution to have a 
corresponding SSYNC execution. 
(i) No robot observes other robots moving. 
(ii) When two robots observe each other, 
they are activated at the same time in a SSYNC execution. 
(iii) Due to limited visibility, 
the above ``mutually observed'' relationship is transitive. 
We formally describe these assumptions. 

Let $S_i(j)$ be the set of robots visible from $r_i$ at time $o_i(j)$. 
Then, $|S_i(j)| = |P_i(j)|$ and $r_i \in S_i(j)$. 
Recall that $r_i$ cannot recognize the correspondence between 
$S_i(j)$ and $P_i(j)$. 
We say that $E$ is stationary, 
if every snapshot $P_i(j)$ in $E$ is 
``stationary'' in the sense that $r_i$ does not observe another 
robot $r_{i'}$ in its move phase. 
Formally, $E \in \mathcal{E}(\Omega, \mathcal{A}, I)$ 
is {\em stationary} if $o_i(j) \not\in (s_{i'}(j'), f_{i'}(j'))$ 
holds for any pair of cycles $\omega_i(j)$ and $\omega_{i'}(j') \in \Omega$ 
such that $r_{i'} \in S_i(j)$.\footnote{We exclude $s_{i'}(j')$ 
and $f_{i'}(j')$, because $r_{i'}$ is not moving at these time instants.} 

\begin{assumption}
\label{ass1}
We assume that $E$ is stationary.
\end{assumption}

Let $\omega_i(j)$ and $\omega_{i'}(j') \in \Omega$ be two cycles such that
$i \neq i'$ and $o_i(j) \leq o_{i'}(j')$. 
If $[o_i(j), f_i(j)] \cap [o_{i'}(j'), f_{i'}(j')] \neq \emptyset$ 
and $r_i \in S_{i'}(j')$, 
we say that $\omega_i(j)$ and $\omega_{i'}(j')$ {\em overlap each other}. 

We say $\omega_i(j)$ and $\omega_{i'}(j')$ are {\em concurrent}, 
denoted by $\omega_i(j) \parallel \omega_{i'}(j')$, 
if one of the following conditions holds: 
\begin{enumerate}
\item 
$i = i'$ and $j = j'$.  
\item 
$i \neq i'$, $o_i(j) \in (f_{i'}(j'-1), o_{i'}(j')]$, 
$o_{i'}(j') \in [o_i(j), s_i(j)]$, and $r_{i'} \in S_i(j)$ 
(thus, $r_i \in S_{i'}(j')$). 
\item 
$i \neq i'$, $o_{i'}(j') \in (f_i(j-1), o_i(j)]$, 
$o_i(j) \in [o_{i'}(j'), s_{i'}(j')]$, and $r_i \in S_{i'}(j')$ 
(thus, $r_{i'} \in S_i(j)$). 
\end{enumerate}
The concurrency relation $\parallel$ is symmetric and reflexive, 
but is not always transitive. 
By definition, we have the following proposition. 

\begin{proposition}
\label{prop:1}
For any $i$, $j$, and $j'$, 
$\omega_i(j) \parallel \omega_i(j')$ if and only if $j=j'$. 
\end{proposition}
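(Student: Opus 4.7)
The plan is to prove Proposition~\ref{prop:1} directly from the definition of $\parallel$, since the statement is essentially an immediate consequence of how the three clauses of concurrency are phrased. I would handle the two directions separately, though both reduce to inspection.

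For the reverse implication, if $j = j'$ then $\omega_i(j)$ and $\omega_i(j')$ are literally the same cycle, so clause~1 of the definition (which covers exactly the case $i = i'$ and $j = j'$) applies and yields $\omega_i(j)\parallel\omega_i(j')$. This is also consistent with the remark preceding the proposition that $\parallel$ is reflexive.

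For the forward implication, I would argue by eliminating cases. Assume $\omega_i(j)\parallel\omega_i(j')$; then one of the three clauses of the definition must witness the concurrency, with the second robot index equal to $i$. Clauses~2 and~3 both carry $i \neq i'$ as an explicit precondition, so neither can apply when both cycles belong to the same robot. Hence only clause~1 is available, and clause~1 forces $j = j'$.

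There is essentially no obstacle here: the proposition just records that cycles of a single robot are totally ordered in time and that the definition of $\parallel$ isolates ``cross-robot'' overlaps into clauses~2 and~3. The one thing to verify carefully is that clauses~2 and~3 really do include $i\neq i'$ as part of their hypotheses (which they do, by the way the definition is written), together with the global ordering $o_i(j) < s_i(j) < f_i(j) < o_i(j+1)$ on a single robot's own cycles, ensuring that no two distinct cycles of the same robot could spuriously satisfy the timing conditions in those clauses even if one tried to drop the index constraint.
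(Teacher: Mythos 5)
Your proof is correct and matches the paper's intent exactly: the paper dispenses with this proposition as following ``by definition,'' and your case analysis (clause~1 gives reflexivity; clauses~2 and~3 are barred by their explicit $i \neq i'$ hypothesis, so only clause~1 can witness concurrency between two cycles of the same robot, forcing $j = j'$) is precisely the routine verification being elided.
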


If $i \neq i'$ and $\omega_i(j)$ and $\omega_{i'}(j')$ are concurrent, 
then they overlap each other. 
Moreover, both $r_i$ and $r_j$ observe each other in $\omega_i(j)$ and 
$\omega_{i'}(j')$, respectively. 
If $\omega_{i}(j)$ and $\omega_{i'}(j')$ overlap each other, 
the two robots do not always observe each other. 
However, at least one of them observes the other. 

We say that $E$ is {\em pairwise aligned}, 
if two cycles $\omega_i(j)$ and $\omega_{i'}(j')$ that overlap 
each other are always concurrent. 

\begin{assumption}
\label{ass2}
We assume that $E$ is pairwise aligned. 
\end{assumption}

Let $\stackrel{*}{\parallel}$ be the transitive closure of $\parallel$, 
that is an equivalence relation on $\Omega$. 
We abuse the term so that $\omega_i(j)$ and $\omega_{i'}(j')$ are 
{\em concurrent} if $\omega_i(j) \stackrel{*}{\parallel} \omega_{i'}(j')$. 
Since $\parallel$ is not always transitive, 
$\parallel \neq \stackrel{*}{\parallel}$ may hold.
Let $\Omega = \Omega_{0}, \Omega_{1}, \Omega_{2}, \ldots$ 
be the equivalence class partition of $\Omega$ with respect to 
$\stackrel{*}{\parallel}$. 
Intuitively, the cycles in each $\Omega_i$ must be executed 
at the same time in the corresponding SSYNC execution. 
However, because $\stackrel{*}{\parallel}$ is transitive, 
we need to consider observations, i.e., 
$S_i(j)$ and $S_{i'}(j')$ for each $\omega_i(j), \omega_{i'}(j') \in \Omega_i$. 

Let $dist(p,q)$ denote the Euclidean distance between two points 
$p$ and $q$ in $\mathcal{Z}_0$. 
We say that $E$ is {\em consistent}, 
if the following conditions hold for any pair of cycles $\omega_i(j)$ 
and $\omega_{i'}(j')$ such that 
$\omega_i(j) \stackrel{*}{\parallel} \omega_{i'}(j')$: 
\begin{enumerate}
\item 
$r_{i'} \in S_i(j)$ if and only if $r_i \in S_{i'}(j')$.
\item
If $r_{i'} \in S_i(j)$, or equivalently $r_i \in S_{i'}(j')$, 
$\omega_i(j) \parallel \omega_{i'}(j')$.
\item
If $r_{i'} \not\in S_i(j)$, or equivalently $r_i \not\in S_{i'}(j')$,
$dist(\pi_i(j), \pi_{i'}(j')) > 1$.
\end{enumerate}


\begin{assumption}
\label{ass3}
We assume that $E$ is consistent. 
\end{assumption}

The following proposition is an extension of Proposition~\ref{prop:1}. 
\begin{proposition}
\label{prop:2}
Suppose that $E$ is stationary, pairwise aligned, 
and consistent. 
For any $i$, $j$, and $j'$, 
$\omega_{i}(j) \stackrel{*}{\parallel} \omega_{i}(j')$ if and only if $j=j'$. 
\end{proposition}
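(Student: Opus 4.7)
The plan is to exploit the fact that a robot always appears in its own snapshot, and to combine this with consistency (Assumption~\ref{ass3}) and Proposition~\ref{prop:1} to rule out any non-trivial chain of concurrent cycles within a single robot's schedule. The forward direction (``if'') is immediate: if $j = j'$, clause~1 of the definition of $\parallel$ gives $\omega_i(j) \parallel \omega_i(j')$, and since $\stackrel{*}{\parallel}$ is the transitive closure of $\parallel$, we get $\omega_i(j) \stackrel{*}{\parallel} \omega_i(j')$.

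For the converse (``only if''), I would suppose $\omega_i(j) \stackrel{*}{\parallel} \omega_i(j')$ and apply Assumption~\ref{ass3} to this pair with $i' = i$. The key observation is that every snapshot $P_i(k)$ contains its own origin by definition, so $r_i \in S_i(k)$ always. In particular $r_i \in S_i(j)$ and $r_i \in S_i(j')$, so condition~2 of consistency applies and yields the stronger relation $\omega_i(j) \parallel \omega_i(j')$. Proposition~\ref{prop:1} then forces $j = j'$, completing the proof.

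The only point I would double-check is whether the consistency definition is intended to cover pairs with $i' = i$. The statement of Assumption~\ref{ass3} in the excerpt ranges over ``any pair of cycles $\omega_i(j)$ and $\omega_{i'}(j')$'' with $\omega_i(j) \stackrel{*}{\parallel} \omega_{i'}(j')$, and does not restrict to $i \neq i'$, so the application is legitimate; if a reader were to object, one could equivalently argue directly along any chain $\omega_i(j) = \omega_{i_0}(k_0) \parallel \cdots \parallel \omega_{i_m}(k_m) = \omega_i(j')$ and observe that consistency makes the two endpoints mutually visible, hence directly concurrent, reducing the chain to length~$1$. Neither stationarity nor pairwise alignment is invoked here; those assumptions become important only when analyzing pairs on \emph{distinct} robots. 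The main (and only) obstacle I anticipate is this interpretive point about the domain of the consistency definition, rather than anything technical.
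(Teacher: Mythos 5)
Your proof is correct and matches the paper's own argument essentially verbatim: both directions rely on $r_i \in S_i(j')$ always holding, so that consistency (condition~2) upgrades $\stackrel{*}{\parallel}$ to $\parallel$, and Proposition~\ref{prop:1} then forces $j = j'$. Your interpretive remark about whether consistency covers $i' = i$ is the same reading the paper implicitly adopts, so no further justification is needed.
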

\begin{proof}
If $j = j'$, $\omega_i(j) \stackrel{*}{\parallel} \omega_i(j')$ for any $i$. 
Otherwise, suppose that $\omega_i(j) \stackrel{*}{\parallel} \omega_i(j')$ 
holds. 
Since $r_i \in S_i(j')$ for any $i$ and $j'$, 
$\omega_i(j) \parallel \omega_i(j')$ by the consistency, 
which is a contradiction by Proposition~\ref{prop:1}. 
Thus, if $j \neq j'$, then 
$\omega_i(j) \not\stackrel{*}{\parallel} \omega_i(j')$. 
\qed 
\end{proof}

Let $\omega_i(j), \omega_{i'}(j') \in \Omega$ be two cycles. 
We say that $\omega_i(j)$ happens {\em immediately before} $\omega_{i'}(j')$, 
denoted by $\omega_i(j) \rightarrow \omega_{i'}(j')$, 
if one of the following conditions holds: 
\begin{enumerate}
\item 
$i' = i$ and $j' = j+1$. 
\item 
$i' \neq i$, $r_i \in S_{i'}(j')$ and $o_{i'}(j') \in (f_i(j), s_i(j+1)]$. 
\item 
$i' \neq i$, $r_{i'} \in S_i(j)$ and 
$f_{i'}(j'-1) < o_i(j) < f_i(j) < o_{i'}(j')$. 
\end{enumerate}

We may call $\rightarrow$ a ``happened-before'' relation on $\Omega$, 
because $\omega_i(j) \rightarrow \omega_{i'}(j')$ denotes the fact 
that $\omega_i(j)$ happens before $\omega_{i'}(j')$. 
Thus, $\rightarrow$ is neither reflexive nor symmetric. 
It is worth emphasizing that in general, $\rightarrow$ is not transitive 
either, because it is defined based on the visibility relation between robots 
like $\parallel$. 

Note that if $\omega_i(j) \rightarrow \omega_{i'}(j')$, 
then $\omega_i(j) \not\parallel \omega_{i'}(j')$, 
because they do not overlap each other. 
Note also that either $r_i \in S_{i'}(j')$ or $r_{i'} \in S_i(j)$ holds, 
but generally not both. 

\begin{proposition}
\label{prop:3} 
Suppose that $E$ is stationary, pairwise aligned, 
and consistent. 
No equivalence class $\Omega_k$ of $\Omega$ contains cycles 
$\omega_i(j)$ and $\omega_{i'}(j')$ such that 
$\omega_i(j) \rightarrow \omega_{i'}(j')$. 
\end{proposition}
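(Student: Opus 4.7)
The plan is to argue by contradiction: assume some equivalence class $\Omega_k$ contains two cycles $\omega_i(j)$ and $\omega_{i'}(j')$ with $\omega_i(j) \rightarrow \omega_{i'}(j')$, and hence $\omega_i(j) \stackrel{*}{\parallel} \omega_{i'}(j')$. I will dispose of the same-robot case quickly with Proposition~\ref{prop:2}, and then handle the distinct-robot case by showing that the happened-before relation forces non-overlapping time intervals, while the consistency assumption would require overlap.

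First, if $i = i'$, clause 1 of the definition of $\rightarrow$ gives $j' = j+1$, so $j \neq j'$. But Proposition~\ref{prop:2} says $\omega_i(j) \stackrel{*}{\parallel} \omega_i(j')$ can occur only when $j = j'$, contradiction.

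Next, suppose $i \neq i'$, so we fall under clause 2 or clause 3 of the definition of $\rightarrow$. In either case, inspection of the time constraints shows that $f_i(j) < o_{i'}(j')$ strictly (clause 2 requires $o_{i'}(j') \in (f_i(j), s_i(j+1)]$, and clause 3 states it outright), so the time intervals $[o_i(j), f_i(j)]$ and $[o_{i'}(j'), f_{i'}(j')]$ are disjoint. Therefore $\omega_i(j)$ and $\omega_{i'}(j')$ do not overlap. Consequently $\omega_i(j) \not\parallel \omega_{i'}(j')$, since clauses 2 and 3 of the definition of $\parallel$ demand that the later snapshot time lie in $[o_i(j), s_i(j)]$ or $[o_{i'}(j'), s_{i'}(j')]$ respectively, both of which are contained in the corresponding cycle intervals and hence would entail an overlap.

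On the other hand, whichever of clauses 2 or 3 of the definition of $\rightarrow$ applies, at least one of $r_i \in S_{i'}(j')$ or $r_{i'} \in S_i(j)$ holds. Since $\omega_i(j) \stackrel{*}{\parallel} \omega_{i'}(j')$, clause 1 of the consistency assumption (Assumption~\ref{ass3}) upgrades this to \emph{both} visibility relations holding simultaneously, and then clause 2 of consistency forces $\omega_i(j) \parallel \omega_{i'}(j')$. This contradicts what we derived in the previous paragraph, completing the proof. The only subtlety worth double-checking carefully is the inequality $f_i(j) < o_{i'}(j')$ in both subcases of $\rightarrow$ and the resulting exclusion from all clauses of $\parallel$; everything else is an immediate application of the definitions and of Proposition~\ref{prop:2}.
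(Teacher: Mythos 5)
Your proof is correct and follows essentially the same route as the paper's: the same-robot case is dispatched by Proposition~\ref{prop:2}, and the distinct-robot case derives a contradiction between non-overlap (hence $\not\parallel$) forced by the definition of $\rightarrow$ and the concurrency $\parallel$ forced by consistency. You simply make explicit the verification of $f_i(j) < o_{i'}(j')$ and the exclusion from the clauses of $\parallel$, which the paper asserts in the remark preceding the proposition.
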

\begin{proof}
Let $\omega_i(j)$ and $\omega_{i'}(j')$ be any cycles in $\Omega_k$, 
i.e., $\omega_i(j) \stackrel{*}{\parallel} \omega_{i'}(j')$. 
We assume that $\omega_i(j) \rightarrow \omega_{i'}(j')$. 

If $i' = i$, then $j' = j$ by Proposition~\ref{prop:2}, 
which is a contradiction. 
Thus, $i' \neq i$. 
Since $\omega_i(j) \rightarrow \omega_{i'}(j')$, 
either $r_i \in S_{i'}(j')$ or $r_{i'} \in S_i(j)$ holds. 
This implies $\omega_i(j) \parallel \omega_{i'}(j')$ by the consistency. 
This is a contradiction. 
\qed
\end{proof}

Let $\Omega_k$ and $\Omega_{k'}$ be two equivalence classes of $\Omega$. 
If there are cycles $\omega_i(j) \in \Omega_k$ and 
$\omega_{i'}(j') \in \Omega_{k'}$ 
such that $\omega_i(j) \rightarrow \omega_{i'}(j')$, 
we use the notation $\Omega_k \Rightarrow \Omega_{k'}$. 
By Proposition~\ref{prop:3}, binary relation $\Rightarrow$ is not reflexive. 
We say that $E$ is {\em serializable} if the 
infinite graph 
$\mathcal{G} = (\{\Omega_0, \Omega_1, \Omega_2, \ldots\}, \Rightarrow)$ is acyclic. 
\begin{assumption}
\label{ass4}
We assume that $E$ is serializable. 
\end{assumption}

While $\rightarrow$ can be considered as a ``happened-before'' relation 
on cycles, 
it is not adequate to consider $\Rightarrow$ as a ``happened-before'' relation 
on the equivalence classes. 
There can be cycles $\omega_i(j) \in \Omega_k$ and $\omega_{i'}(j') \in \Omega_{k'}$ 
such that $f_{i'}(j') < o_i(j)$ even if $\Omega_k \Rightarrow \Omega_{k'}$. 
In fact, $\Omega_k \Rightarrow \Omega_{k'}$ and 
$\Omega_k \Leftarrow \Omega_{k'}$ may hold at the same time. 
There is a stationary, pairwise aligned, 
and consistent execution  
which is not serializable. 

The following proposition is clear by definition. 
\begin{proposition}
\label{prop:4}
Suppose that $E$ is stationary, pairwise aligned, 
consistent, and 
serializable. 
If $\omega_i(j)$ and $\omega_i(j')$ are in an equivalence class $\Omega_k$ 
for some $i$, $j$, and $j'$, then $j=j'$. 
That is, each $\Omega_k$ contains at most one cycle for every robot $r_i$. 
\end{proposition}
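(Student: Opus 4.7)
The plan is to observe that this proposition is an immediate reformulation of Proposition~\ref{prop:2} in the language of equivalence classes, so essentially no new work is required beyond unpacking definitions. The serializability hypothesis is not even needed; it is inherited for uniformity with the later results.

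First I would recall that by construction the equivalence classes $\Omega_0, \Omega_1, \Omega_2, \ldots$ are precisely the equivalence classes of the relation $\stackrel{*}{\parallel}$ on $\Omega$. Hence two cycles belong to the same class $\Omega_k$ if and only if they are related by $\stackrel{*}{\parallel}$. In particular, the hypothesis that $\omega_i(j), \omega_i(j') \in \Omega_k$ is equivalent to $\omega_i(j) \stackrel{*}{\parallel} \omega_i(j')$.

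Next I would invoke Proposition~\ref{prop:2}, which under the assumptions of stationarity, pairwise alignment, and consistency already states that $\omega_i(j) \stackrel{*}{\parallel} \omega_i(j')$ forces $j = j'$. Combining this with the previous observation yields $j=j'$ immediately, proving the proposition; the conclusion that each $\Omega_k$ contains at most one cycle per robot is then just the contrapositive phrasing. No step here is a real obstacle, since the content has been absorbed into Proposition~\ref{prop:2}; the only thing worth checking is that the definition of the equivalence classes matches the definition of $\stackrel{*}{\parallel}$, which is true by construction.
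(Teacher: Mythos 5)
Your proposal is correct and matches the paper's intent: the paper offers no explicit proof (it declares the proposition ``clear by definition''), and the intended justification is exactly your reduction to Proposition~\ref{prop:2} via the fact that the classes $\Omega_k$ are the equivalence classes of $\stackrel{*}{\parallel}$. Your observation that serializability is not actually used is also accurate.
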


Let $E \in \mathcal{E}(\Omega, \mathcal{A}, I)$ be an execution of algorithm $\mathcal{A}$ from 
initial configuration $I$ under a schedule $\Omega \in \ASYNC$ and 
assume that $E$ is stationary, pairwise aligned, 
consistent, and serializable. 
Let $\mathcal{G} = (\{\Omega_0, \Omega_1, \Omega_2, \ldots\}, \Rightarrow)$ 
be the acyclic graph obtained from $E$. 
Without loss of generality, 
let 
$T = (\Omega_0, \Omega_1, \Omega_2, \ldots)$ be a topological sort of $\mathcal{G}$. 
Let $A_k = \{r_i \mid \omega_i(j) \in \Omega_k \ \text{for some $j$}\}$, 
keeping in mind that $\omega_i(j)$ is unique for each $r_i \in A_k$ 
by Proposition~\ref{prop:4}. 

We construct a schedule $\tilde{\Omega} \in \SSYNC$ from $T$. 
Intuitively, $\tilde{\Omega}$ activates all robots in $A_k$ at time $k$ for all 
$k \in \mathbf{N}$. 
Formally, $\tilde{\Omega} = \{\tilde{\omega}_i(j) \mid 
i=1,2,\ldots,n, \ j \in \mathbf{N} \}$, where 
$\tilde{\omega}_i(j) = (\tilde{o}_i(j), \tilde{s}_i(j), \tilde{f}_i(j)) = (k, k+1/4, k+3/4)$ 
if $\omega_i(j) \in \Omega_k$ for $k \in \mathbf{N}$. 
Obviously, $\tilde{\Omega}$ is well-defined and $\tilde{\Omega} \in \SSYNC$. 

Consider $\tilde{E} \in \mathcal{E}(\tilde{\Omega}, \mathcal{A}, I)$. 
Let $\tilde{S}_i(j)$ and $\tilde{P}_i(j)$ be the set of robots 
visible from $r_i$ at $\tilde{o}_i(j)$ and the snapshot that $r_i$ takes 
at $\tilde{o}_i(j)$, respectively. 
Then, $P(\tilde{E}) = \{\tilde{P}_i(j) \mid i = 1,2, \ldots, n, \ \text{and} \ j \in \mathbf{N}\}$. 
Let $\tilde{\pi}_i(j)$ be the position of $r_i$ in $\mathcal{Z}_0$ at $\tilde{o}_i(j)$. 
Then, $\Pi(\tilde{E}) = \{\tilde{\pi}_i(j) \mid i = 1, 2, \ldots, n, \ \text{and} \ j \in {\mathbf{N}}\}$. 

Finally, we need to examine each pair of cycles $\omega_i(j)$ and $\omega_{i'}(j')$ 
assigned to different discrete time in $\tilde{\Omega}$. 
Thus, $\omega_i(j) \not\stackrel{*}{\parallel} \omega_{i'}(j')$. 
Consider the case where 
$\omega_i(j) \in \Omega_k$ and $\omega_{i'}(j') \in \Omega_{k''}$ for 
$k < k''$ and $r_{i'}$ executes no cycle 
during $\Omega_{k}, \Omega_{k+1}, \ldots, \Omega_{k''-1}$. 
Thus, $r_{i'}$ does not move during the time period $[k, k''-1]$ 
in $\tilde{\Omega}$. 
If $r_i$ observes $r_{i'}$ in $\omega_i(j)$, 
$\omega_i(j)$ and $\omega_{i'}(j')$ do not overlap each other. 
Otherwise, $\Omega_k$ contains $\omega_{i'}(j')$. 
If $r_i$ does not observe $r_{i'}$ in $\omega_i(j)$, 
$dist(\pi_i(j), \pi_{i'}(j')) > 1$. 
Otherwise, $r_{i'}$ need to move during the time period $[k, k''-1]$ 
in $\tilde{\Omega}$. 
We describe this situation with the notion of naturality. 

For any $\omega_i(j)$ and $i'(\neq i)$, 
there is a $j'$ such that $k' \leq k < k''$, 
where $\omega_i(j) \in \Omega_k$, $\omega_{i'}(j'-1) \in \Omega_k'$, 
and $\omega_{i'}(j') \in \Omega_{k''}$.\footnote{We consider $k' = -1$ 
if $j' = 1$ and $\omega_{i'}(j')$ is not defined.} 
A topological sort $T = (\Omega_0, \Omega_1, \Omega_2, \ldots)$ is 
{\em natural} if the following conditions hold for any pair of such cycles 
$\omega_i(j)$ and $\omega_{i'}(j')$: 
\begin{enumerate}
\item 
Suppose that $r_{i'} \in S_i(j)$.  
Thus $\omega_i(j) $ and $\omega_{i'}(j')$ do not overlap each other.
Then $o_i(j) < o_{i'}(j')$.
\item 
Suppose that $r_{i'} \not\in S_i(j)$. 
Then $dist(\pi_i(j), \pi_{i'}(j')) > 1$. 
\end{enumerate}
We say that $E$ is {\em natural} if $E$ has a natural topological sort. 

\begin{assumption}
\label{ass5}
We assume that $E$ is natural. 
\end{assumption}

\begin{theorem}
\label{theorem:sufficiency} 
If an execution $E \in \mathcal{E}(\Omega, \mathcal{A}, I)$ satisfies 
Assumptions~\ref{ass1}, \ref{ass2}, \ref{ass3}, \ref{ass4}, \ref{ass5}, 
there is an execution $\tilde{E} \in \mathcal{E}(\tilde{\Omega}, \mathcal{A}, I)$ 
for some $\tilde{\Omega} \in \SSYNC$ such that $E \sim \tilde{E}$, 
i.e., $P(E) = P(\tilde{E})$ and $\Pi(E) = \Pi(\tilde{E})$. 
\end{theorem}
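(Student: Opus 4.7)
The plan is to carry out the construction already sketched in the text immediately preceding the theorem. Namely, fix a natural topological sort $T=(\Omega_0,\Omega_1,\Omega_2,\ldots)$ of the DAG $\mathcal{G}$ (whose existence is ensured by Assumptions~\ref{ass4} and~\ref{ass5}) and define $\tilde{\omega}_i(j)=(k,k+1/4,k+3/4)$ whenever $\omega_i(j)\in\Omega_k$. This is well defined by Proposition~\ref{prop:4}, respects fairness because every cycle of $r_i$ lies in some $\Omega_k$, and visibly lies in $\SSYNC$. I then pick a particular $\tilde{E}\in\mathcal{E}(\tilde{\Omega},\mathcal{A},I)$ by running $\mathcal{A}$ under $\tilde{\Omega}$ and resolving the only source of non-determinism---non-rigid movement---adversarially, choosing the stopping point of each $\tilde{\omega}_i(j)$ to be exactly $\pi_i(j+1)$. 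This is a legal resolution provided that $r_i$'s starting position and the snapshot it reads at time $k$ match $\pi_i(j)$ and $P_i(j)$, respectively; both will follow inductively.

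The core of the argument is a strong induction on the topological index $k$. The induction hypothesis asserts that, for all $k'<k$ and every $\omega_i(j)\in\Omega_{k'}$, the equalities $\tilde{\pi}_i(j)=\pi_i(j)$ and $\tilde{P}_i(j)=P_i(j)$ hold and the move of $\tilde{\omega}_i(j)$ ends at $\pi_i(j+1)$. Since the intra-robot arrows $\omega_{i'}(j)\rightarrow\omega_{i'}(j+1)$ force the topological index to be strictly increasing along each robot's cycle sequence, the induction hypothesis implies that at time $k$ in $\tilde{E}$ every robot $r_{i'}$ stands at $\pi_{i'}(j')$, where $\omega_{i'}(j')$ is the cycle of $r_{i'}$ in $\Omega_k$ (if $r_{i'}\in A_k$) or the topologically earliest cycle of $r_{i'}$ placed in some $\Omega_{k''}$ with $k''>k$ (if $r_{i'}\notin A_k$). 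Given this, the inductive step reduces to the snapshot equality $\tilde{P}_i(j)=P_i(j)$ for each $r_i\in A_k$; determinism of $\mathcal{A}$ then reproduces the route used in $E$, and the adversarial choice of stopping point delivers $\pi_i(j+1)$.

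The snapshot equality is the main obstacle, handled by case analysis on each potential neighbour $r_{i'}\neq r_i$. When $r_{i'}\in A_k$ with $\omega_{i'}(j')\in\Omega_k$, Assumption~\ref{ass3} supplies the equivalence $r_{i'}\in S_i(j)\iff r_i\in S_{i'}(j')$ together with the distance bound $\operatorname{dist}(\pi_i(j),\pi_{i'}(j'))>1$ in the non-visible case; combined with the inductive positions, this transfers $E$'s visibility pattern inside $\Omega_k$ into $\tilde{E}$ verbatim. When $r_{i'}\notin A_k$, Assumption~\ref{ass5} is invoked at the pair $(\omega_i(j),\omega_{i'}(j'))$: in the invisible case condition~(2) forces $\operatorname{dist}(\pi_i(j),\pi_{i'}(j'))>1$, preserving invisibility in $\tilde{E}$; in the visible case condition~(1) together with stationarity, pairwise alignment, and the topological-sort property combine to show that $r_{i'}$'s position in $E$ at $o_i(j)$ must coincide with $\pi_{i'}(j')$---here the happened-before relation $\rightarrow$ and the overlap/concurrency machinery are used to rule out any scenario in which $r_{i'}$ could still be at some earlier $\pi_{i'}(j'')$ with $j''<j'$, since such a scenario would either create a backward edge $\Omega_k\Rightarrow\Omega_{k''}$ with $k''<k$ in $T$ or place $\omega_{i'}(j'')$ into the same equivalence class as $\omega_i(j)$, both contradicting the hypotheses. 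Because $\mathcal{Z}_i$ is intrinsic to $r_i$ and unchanged by the simulation, the resulting coincidence of point sets yields $\tilde{P}_i(j)=P_i(j)$ and closes the induction, delivering $E\sim\tilde{E}$.
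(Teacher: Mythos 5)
Your proposal follows essentially the same route as the paper's own proof: the same schedule $\tilde{\Omega}$ built from a natural topological sort, the same adversarial resolution of non-rigidity to land each move at $\pi_i(j+1)$, and the same induction on the topological index with consistency (Assumption~\ref{ass3}) handling neighbours inside $\Omega_k$ and naturality (Assumption~\ref{ass5}) together with stationarity, pairwise alignment, and the happened-before relation handling neighbours outside it. The argument is correct and matches the paper's structure.
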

\begin{proof}
We construct an execution $\tilde{E} \in \mathcal{E}(\tilde{\Omega}, \mathcal{A}, I)$ 
as follows: 
If $r_i$ moves in $\omega_i(j)$ from $\pi_i(j)$ to $\pi_i(j+1)$ in $E$, 
we move $r_i$ in $\tilde{\omega}_i(j)$ from $\pi_i(j)$ to $\pi_i(j+1)$ 
to construct $\tilde{E}$ that satisfies the conditions. 
We guarantee its feasibility by showing that $P_i(j) = \tilde{P}_i(j)$ and 
$\pi_i(j) = \tilde{\pi}_i(j)$. 
Indeed, if $P_i(j) = \tilde{P}_i(j)$, 
$\pi_i(j) = \tilde{\pi}_i(j)$, and $r_i$ can move from $\pi_i(j)$ 
to $\pi_i(j+1)$ during $\omega_i(j)$ in $E$, 
then it can move to the same position in $\tilde{\omega}_i(j)$. 

Suppose that a topological sort $T = (\Omega_0, \Omega_1, \Omega_2, \ldots)$ 
is natural. 
It suffices to show the following claim for all $k=0, 1, 2, \ldots$: 
For any robot $r_i \in A_k$, 
$S_i(j) = \tilde{S}_i(j)$, $P_i(j) = \tilde{P}_i(j)$, 
$\pi_i(j) = \tilde{\pi}_i(j)$, and $\pi_i(j+1) = \tilde{\pi}_i(j+1)$ hold. 
The proof is by induction on $k$. 

\noindent{\bf Base case ($k=0$).~} 
Let $r_i \in A_0$ be any robot. 
Then, $\omega_i(j) \in \Omega_0$ holds for some $j$ and obviously $j=1$. 
We show that $S_i(1) = \tilde{S}_i(1)$. 
Recall that $\tilde{o}_i(j) = 0$. 

Let $r_{i'}$ be any robot. Suppose that $r_{i'} \in A_0$, i.e., 
$\omega_{i'}(1) \in \Omega_0$ and 
$\omega_i(1) \stackrel{*}{\parallel} \omega_{i'}(1)$. 
If $r_{i'} \in S_i(1)$, then $\omega_i(1) \parallel \omega_{i'}(1)$ 
by the consistency, which implies $r_{i'} \in \tilde{S}_i(1)$. 
Otherwise, i.e., $r_{i'} \not\in S_i(1)$, $r_i \not\in S_{i'}(1)$, and thus 
$dist(\pi_i(1), \pi_{i'}(1)) > 1$ by the consistency. 
Thus, $r_{i'} \not\in \tilde{S}_i(1)$. 

Next, suppose that $r_{i'} \not\in A_0$. 
Then, $\omega_{i'}(1) \in \Omega_{k'}$ for some $k' > 0$. 
If $r_{i'} \in S_i(1)$, then $o_i(1) < o_{i'}(1)$ by the naturality. 
Since $r_i$ and $r_{i'}$ do not move until $o_i(1)$, 
$r_{i'} \in \tilde{S}_i(1)$. 
Otherwise, i.e., $r_{i'} \not\in S_i(1)$, $dist(\pi_i(1), \pi_{i'}(1)) > 1$ 
by the naturality. Thus, $r_{i'} \not\in \tilde{S}_i(1)$. 

Since $E$ and $\tilde{E}$ start with the same initial configuration $I$, 
for any $r_i \in A_0$, $\pi_i(1) = \tilde{\pi}_i(1)$, and hence 
$P_i(1) = \tilde{P}_i(1)$ since $S_i(1) = \tilde{S}_i(1)$. 
Since $\pi_i(2)$ is reachable from $\pi_i(1)$ by algorithm $\mathcal{A}$
when the snapshot is $P_i(1)$, 
$\tilde{\pi}_i(2)(= \pi_i(2))$ is reachable from 
$\tilde{\pi}_i(1)(= \pi_i(1))$ by algorithm $\mathcal{A}$ 
when the snap shot is $\tilde{P}_i(1)(= P_i(1))$. 
Thus, to construct $\tilde{E}$, we move $r_i$ to 
$\tilde{\pi}_i(2)(= \pi_i(2))$ from $\tilde{\pi}_i(1)(= \pi_i(1))$ 
in $\tilde{\omega_i}(1)$. 

\noindent{\bf Induction step.~}
Assume that the claim holds for all $0 \leq k < K$, 
and we show that the claim folds for $k=K$. 
Let $r_i \in A_k$ be any robot such that $\omega_i(j) \in \Omega_K$. 

If $\omega_i(j-1) \in \Omega_k$, then $k < K$ and $\pi_i(j) = \tilde{\pi}_i(j)$ 
by the induction hypothesis. 
Indeed, for all $i'$ and $j'$, if $\omega_{i'}(j') \in \Omega_{k}$ for some 
$k \leq K$, $\pi_{i'}(j') = \tilde{\pi}_{i'}(j')$ by the induction hypothesis. 

Consider any robot $r_{i'} \in S_i(j)$. 
Since $E$ is stationary, $r_{i'}$ is not in move phase at $o_i(j)$. 
Suppose $o_i(j) \in (f_{i'}(j'), o_{i'}(j'+1))$ for some $j'$. 
Then, $\omega_{i'}(j') \rightarrow \omega_i(j)$ and 
$\Omega_{K'} \Rightarrow \Omega_K$, 
where $\omega_{i'}(j') \in \Omega_{K'}$ for some $K' < K$. 
Since (i) $\pi_i(j) = \tilde{\pi}_i(j)$, 
(ii) $\pi_{i'}(j'+1) = \tilde{\pi}_{i'}(j'+1)$ because $K' < K$, and 
(iii) the position of $r_{i'}$ (in $\mathcal{Z}_0$) at $o_i(j)$ is $\pi_{i'}(j'+1)$, 
we have $r_{i'} \in \tilde{S}_i(j)$ and $p_{i'} = \tilde{p}_{i'}$, 
where $p_{i'}$ and $\tilde{p}_{i'}$ are positions of $r_{i'}$ (in $\mathcal{Z}_i$) 
in $P_i(j)$ and $\tilde{P_i(j)}$, respectively. 
Here we use the fact that either $\omega_i(j) \parallel \omega_{i'}(j'+1)$ 
or $\omega_i(j) \rightarrow \omega_{i'}(j'+1)$ holds, 
and $K \leq K''$ holds, where $\omega_{i'}(j'+1) \in \Omega_{K''}$. 

Suppose otherwise $o_i(j) \in [o_{i'}(j'), s_{i'}(j')]$ for some $j'$. 
Then $\omega_i(j) \parallel \omega_{i'}(j')$ and 
$\omega_{i'}(j') \in \Omega_K$. 
Let $\omega_i(j-1) \in \Omega_k$ and $\omega_{i'}(j'-1) \in \Omega_{k'}$. 
Then, $k, k' < K$. 
Thus, $\pi_{i'}(j') = \tilde{\pi}_i(j)$ by the induction hypothesis. 
Since $\pi_{i'}(j')$ is the position of $r_{i'}$ at $o_i(j)$ and 
$\pi_i(j) = \tilde{\pi}_i(j)$, $r_{i'} \in \tilde{S}_{i'}(j')$, 
and $p_{i'} = \tilde{p}_{i'}$. 

Finally, consider any robot $r_{i'} \in \tilde{S}_i(j)$ to confirm 
$S_i(j) = \tilde{S}_i(j)$. 
If there is a $j'$ such that $\omega_i(j) \stackrel{*}{\parallel} \omega_{i'}(j')$, 
$dist(\pi_i(j), \pi_{i'}(j')) > 1$ by the consistency. 
This implies that $r_{i'} \not\in \tilde{S}_i(j)$ by the induction hypothesis. 

Otherwise, $\omega_i(j) \not\stackrel{*}{\parallel} \omega_{i'}(\ell)$ 
for all $\ell \in \mathbf{N}$. 
Let $j'$ be an integer such that $K' < K < K''$, 
where $\omega_i(j) \in \Omega_K$, $\omega_{i'}(j'-1) \in \Omega_{K'}$, 
and $\omega_{i'}(j') \in \Omega_{K''}$. 
By the naturality, $dist(\pi_i(j), \pi_{i'}(j')) > 1$, 
which implies that $r_{i'} \not\in S_i(j)$ by the induction hypothesis. 

Since $\pi_i(j) = \tilde{\pi}_i(j)$, the position of $r_{i'}$ in $\tilde{P}_i(j)$ 
and $\tilde{P}_i(j)$ (both in $\mathcal{Z}_i$) are the same, i.e., 
$P_i(j) = \tilde{P}_i(j)$. 
By the same reason as the base case, 
to construct $\tilde{E}$, we move $r_i$ to $\tilde{\pi}(j+1) (=\pi_i(j+1))$ 
from $\tilde{\pi}_i(j) (= \pi_i(j))$ in $\tilde{\omega}_i(j)$. 
\qed 
\end{proof}

\subsection{Necessity} 

The conjunction of conditions in 
Assumptions~\ref{ass1}, \ref{ass2}, \ref{ass3}, \ref{ass4}, and \ref{ass5} 
is a sufficient condition for an ASYNC execution 
$E \in \mathcal{E}(\Omega, \mathcal{A}, I)$ to have a similar SSYNC execution 
for some $\tilde{\Omega} \in \SSYNC$. 
However, it is not necessary in general. 
Suppose that $\mathcal{A}$ does not move any robot. 
Then, the robot system is at its initial configuration $I$ forever, 
and every execution $E$ has a similar SSYNC execution $\tilde{E}$, 
regardless of whether or not $E$ satisfies each of 
the five assumptions. 
We will show the necessity of the five assumptions 
assuming a randomized adversary that determines rigid movement and 
when moving robots are observed. 

Let $\tau_i(j)$ in $\mathcal{Z}_0$ be the route of $r_i$ computed by $\mathcal{A}$ 
in $\omega_i(j)$ given snapshot $P_i(j)$ in $\mathcal{Z}_i$ as input. 
We assume that $\tau_i(j)$ is a simple curve such that $|\tau_i(j)| > \delta$,
where a curve is said to be {\em simple} if it does not contain an intersection.
Recall that $r_i$ never stops en route if $|\tau_i(j)| \leq \delta$. 
Otherwise, $r_i$ travels an arbitrary initial part $\hat{\tau}_i(j)$ of $\tau_i(j)$ 
such that $|\hat{\tau}_i(j)| > \delta$ at an arbitrary (possibly variable) speed 
during Move in $\omega_i(j)$. 
Thus, another robot $r_{i'}$ can observe $r_i$ at any position $y$ 
in $\hat{\tau}_i(j)$ (thus, $\tau_i(j)$) 
at $o_{i'}(j')$ if $r_i \in S_{i'}(j')$ and 
$o_{i'}(j') \in (s_i(j), f_i(j))$. 

Intuitively, we assume that $\hat{\tau}_i(j)$ satisfying $|\hat{\tau}_i(j)| > \delta$ 
is chosen ``uniformly at random'' and 
that $y$ is chosen ``uniformly at random'' from $\tau_i(j)$ 
(provided that $\hat{\tau}_i(j) = \tau_i(j)$). 
Then, we show that if an execution $E \in \mathcal{E}(\Omega, \mathcal{A}, I)$ has 
a similar execution $\tilde{E} \in \mathcal{E}(\tilde{\Omega}, \mathcal{A}, I)$ for some 
$\tilde{\Omega} \in \SSYNC$, 
then $E$ satisfies each of the five conditions in 
Assumptions~\ref{ass1}, \ref{ass2}, \ref{ass3}, \ref{ass4}, and \ref{ass5} 
with ``probability'' $1$.

\subsubsection{Stationarity} 

For a fixed algorithm $\mathcal{A}$ and initial configuration $I$, 
we show that the stationarity is necessary. 
To make the argument simple, 
we assume that the system is rigid. 
Then, $E \in \mathcal{E}(\Omega, \mathcal{A}, I)$ is uniquely determined by 
$\Omega \in \ASYNC$ if $E$ is stationary. 

Consider any schedule $\Omega \in \ASYNC$ that contains a unique pair of cycles 
$\omega_i(j)$ and $\omega_{i'}(j')$ such that $o_i(j) \in (s_{i'}(j'), f_{i'}(j'))$. 
Let $\tau_{i'}(j')$ be the route (in $\mathcal{Z}_0$) that $\mathcal{A}$ at $r_{i'}$ computes 
given $P_{i'}(j')$ (in $\mathcal{Z}_{i'}$). 
Then, $E \in \mathcal{E}(\Omega, \mathcal{A}, I)$ is uniquely determined by the position 
$y \in \tau_{i'}(j')$ of $r_{i'}$ at $o_i(j)$ 
(regardless of whether or not $y$ is visible from $r_i$ at $o_i(j)$). 
We normalize $y$ by $z = |\hat{\tau}_{i'}(j')|/|\tau_{i'}(j')| \in [0,1]$, 
where $\hat{\tau}_{i'}(j')$ is the prefix of $\tau_{i'}(j')$ before (and including) $y$. 
Then, there is a one-to-one correspondence between $y \in \tau_{i'}(j')$ and $z \in [0,1]$, 
because $\tau_{i'}(j')$ is a simple curve. 
We use the notation $E(z)$ to emphasize that $E$ is determined by $z$. 

We use a Borel measurable space $([0,1], \mathcal{B}([0,1]))$, 
where $\mathcal{B}([0,1])$ is the Borel $\sigma$-algebra on $[0,1]$, 
i.e., the smallest 
$\sigma$-algebra containing all open intervals in $[0,1]$.
Then, the probability measure $\lambda(\cdot)$ 
for each element of $\mathcal{B}([0,1])$
is the (1-dimensional) Lebesgue probability measure on $[0,1]$,
that satisfies,
for all intervals $(a,b)$ where $0 \leq a < b \leq 1$, $\lambda({(a,b)}) = b-a$.
Thus, $\{[a,a]\}$ and $\{ [a,a], [b,b]\}$ are $\lambda$-null sets. 
In the following, we consider a probability space 
$([0,1], \mathcal{B}([0,1]), \lambda)$. 


Let $\mathcal{Y} = \{y \in \tau_{i'}(j') \mid dist(y, \pi_i(j)) \leq 1\}$ 
be the set of positions $y$ in $\tau_{i'}(j')$ visible from $r_i$ at $o_i(j)$, 
and let $\mathcal{D}$ be the set of $z'$s corresponding to the $y$'s in $\mathcal{Y}$. 
Then, $E(z)$ is not stationary if and only if $z \in \mathcal{D}$. 
Since $\tau_{i'}(j')$ is continuous, $\mathcal{D} \in \mathcal{B}([0,1])$. 
We assume that the probability that $z \in \mathcal{D}$ is $\lambda(\mathcal{D})$. 
Thus $\lambda(\mathcal{D}) = 0$ means that $r_{i'}$ is not visible from $r_i$ at $o_i(j)$
and hence the stationarity is not violated with probability 1
(since we assume that the value of $z$ is chosen uniformly at random from $[0,1]$).
Here and in the rest of this section, 
whenever we say that an execution violates a condition,
we assume that the condition is violated with positive probability.
That is, we assume here that $\lambda(\mathcal{D}) > 0$;
the violation occurs with a positive probability.

\begin{lemma}
\label{lemma1} 
Suppose that $\lambda(\mathcal{D}) > 0$. 
Then, $E(z) \not\sim \tilde{E} \in \mathcal{E}(\tilde{\Omega}, \mathcal{A}, I)$ 
for any $\tilde{\Omega} \in \SSYNC$ $\lambda$-almost everywhere on $\mathcal{D}$, 
i.e., $E(z)$ does not have a similar SSYNC execution $\tilde{E}$ 
for all $z \in \mathcal{D}$, except for a countable number of exceptions.  
\end{lemma}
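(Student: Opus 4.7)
\smallskip\noindent\textbf{Proof proposal.~}
The plan is to establish the stronger claim that
$\mathcal{D}^{*} := \{z \in \mathcal{D} : E(z) \text{ has a similar SSYNC execution}\}$
is countable; the $\lambda$-null conclusion then follows since any countable subset of $[0,1]$ has Lebesgue measure zero. The driving intuition is that while the map $z \mapsto y(z)$ sweeps the visible portion of the simple curve $\tau_{i'}(j')$ injectively (so different $z$'s give different global positions of $r_{i'}$ at time $o_i(j)$), the family of snapshots that can arise in \emph{any} SSYNC execution of $\mathcal{A}$ from $I$ is only countable.

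The first step is to prove that the family
$\mathcal{Q} := \{\tilde{P}_{i''}(\tilde{j};\tilde{\Omega}) : i'' \in \{1,\dots,n\},\ \tilde{j} \in \mathbf{N},\ \tilde{\Omega} \in \SSYNC\}$
of admissible SSYNC snapshots is countable. Under the rigidity assumption in force throughout this subsection, each $\tilde{P}_{i''}(\tilde{j};\tilde{\Omega})$ is completely determined by the restriction of $\tilde{\Omega}$ to the first $\tilde{j}$ integer time slots, of which there are only $(2^{n}-1)^{\tilde{j}}$; unioning over $i''$ and $\tilde{j}$ gives a countable union of finite sets.

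The second step is to show that for each fixed $Q \in \mathcal{Q}$ the set $\{z \in \mathcal{D} : P_i(j;z) = Q\}$ is finite. By the ``unique overlapping pair'' hypothesis on $\Omega$, every robot other than $r_{i'}$ is stationary at time $o_i(j)$, so the sole source of $z$-dependence in $P_i(j;z)$ is the position $y(z)$ of $r_{i'}$ itself (visibility of $r_{i'}$ being guaranteed by $z \in \mathcal{D}$). Hence $P_i(j;z) = Q$ reduces, once its $z$-independent part is matched with the corresponding part of $Q$, to an equation of the form $T_i(y(z) - \pi_i(j)) = q$ for some $q \in Q$, where $T_i$ is the fixed affine bijection from $\mathcal{Z}_0$ into $\mathcal{Z}_i$. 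Each such equation has a unique solution because $T_i$ is a bijection and $z \mapsto y(z)$ is injective (as $\tau_{i'}(j')$ is a simple curve), so at most $|Q| \le n$ values of $z$ realize the match. Similarity $E(z) \sim \tilde{E}$ in particular forces $P_i(j;z) \in \mathcal{Q}$ (modulo the anonymous relabeling of robots), so
$\mathcal{D}^{*} \subseteq \bigcup_{Q \in \mathcal{Q}} \{z \in \mathcal{D} : P_i(j;z) = Q\}$
is contained in a countable union of finite sets, and is therefore countable.

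I expect the main delicate point to be verifying that the anonymous robot relabeling (between $E$ and $\tilde{E}$) and the per-robot local coordinate frames $\mathcal{Z}_{i''}$ do not spoil this argument: each of these contributes only a finite multiplicative factor (at most $n!$ bijections and $n$ distinct frame transformations), so the final countability bound is preserved. The rigidity assumption is essential at step one; without it, the positions reached in an SSYNC execution could form a continuum, and one would be forced into a genuinely measure-theoretic argument in place of the cardinality bound used here.
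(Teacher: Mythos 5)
Your argument is correct and is essentially the paper's: both rest on the observation that, under rigidity, the SSYNC side is countable (the paper counts whole executions $\tilde{E}$, one per $\tilde{\Omega} \in \SSYNC$; you count the snapshots they can produce), while distinct $z \in \mathcal{D}$ yield observably distinct data because $z \mapsto y(z)$ is injective on the simple curve $\tau_{i'}(j')$. The only nit is that the $\tilde{j}$th cycle of $r_{i''}$ need not fall within the first $\tilde{j}$ integer time slots, but reindexing your union by the activation time $t$ rather than the cycle index $\tilde{j}$ still exhibits $\mathcal{Q}$ as a countable union of finite sets, so nothing is lost.
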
 
\begin{proof}
Clearly, $E(z) \not\sim E(z')$ if $z \neq z'$ provided $z, z' \in \mathcal{D}$. 
Thus, $\{E(z) \mid z \in \mathcal{D}\}$ is uncountable 
since $\lambda(\mathcal{D}) > 0$. 

On the other hand, $\SSYNC$ is a countable set. 
Since the system is rigid, $\tilde{E} \in \mathcal{E}(\tilde{\Omega}, \mathcal{A}, I)$ 
is uniquely determined by $\tilde{\Omega} \in \SSYNC$ 
(because $\mathcal{A}$ and $I$ are fixed), 
$\{\tilde{E} \in \mathcal{E}(\tilde{\Omega}, \mathcal{A}, I) \mid \tilde{\Omega} \in \SSYNC\}$ 
is countable. 
Thus, only for a countable number of executions $E(z)$, 
there are similar SSYNC executions $\tilde{E}$.
\qed 
\end{proof}

The above lemma states the following: 
If the probability that $r_{i'}$ is visible from $r_i$ at $o_i(j)$ is not $0$, 
then the probability that $E$ has a similar SSYNC execution $\tilde{E}$ is $0$, 
under the condition that $r_{i'}$ is indeed visible from $r_i$ at $o_i(j)$
and the stationarity is indeed violated. 

We extend Lemma~\ref{lemma1} to the case in which $\Omega$ contains 
more than one pair of cycles $\omega_i(j)$ and $\omega_{i'}(j')$ 
such that $o_i(j) \in (s_{i'}(j'), f_{i'}(j'))$. 
We order the pairs $W_h = (\omega_i(j), \omega_{i'}(j'))$ 
of cycles in the increasing order of $o_i(j)$, 
where a tie is resolved arbitrarily. 
We use the concepts and notations above, 
let $y_h$ and $z_h$ be the position of $r_{i'}$ in 
$\tau_{i'}(j')$ at $o_i(j)$ and its normalization, respectively. 
Let $\mathcal{D}_h$ be the set of $z_h$'s such that $y_h$ is visible from $\pi_i(j)$, 
i.e., $r_{i'}$ is visible from $r_i$ at $o_i(j)$. 
Note that $\mathcal{D}_h$ depends on some of $z_1, z_2, \ldots, z_{h-1}$ in general. 
Since the system is rigid, 
$E \in \mathcal{E}(\Omega, \mathcal{A}, I)$ is uniquely determined by $Z = (z_1, z_2, \ldots)$. 
We use the notation $E(Z)$ to emphasize this fact. 

Suppose that $\lambda(\mathcal{D}_h) > 0$ in $E(Z)$ when $z_i = a_i$ for $i = 1, 2, \ldots , h-1$.
By assumption the value of $z_h$ is chosen uniformly at random from $[0,1]$.
If $z_h$ randomly chooses a value in $\mathcal{D}_h$, by Lemma~\ref{lemma1},
regardless of the values chosen for $z_{h+1}, z_{h+2}, \ldots$,
$E(Z) \not\sim \tilde{E} \in \mathcal{E}(\tilde{\Omega}, \mathcal{A}, I)$ 
for any $\tilde{\Omega} \in \SSYNC$ $\lambda$-almost everywhere on $\mathcal{D}_h$,
i.e., $E(Z)$ does not have a similar SSYNC execution $\tilde{E}$ 
for all $z_h \in \mathcal{D}$, except for a countable number of exceptions.  

Thus, if $E$ violates the stationarity, it is unlikely that $E$ has a similar SSYNC execution
(even under rigid system).
In the following, we then assume that $E$ is stationary, and investigate the non-rigid system.

\subsubsection{Pairwise alignment} 

To treat the non-rigid system,
we use an infinite product measure space
$([0,1]^{\infty}, \mathcal{B}^{\infty}([0,1]), \lambda^{\infty})$.
Here $[0,1]^{\infty}$ 
is the Cartesian product of
a countable infinity of copies of $[0,1]$.
The family of events $\mathcal{B}^{\infty}([0,1])$ is the Cartesian product of
a countable infinity of copies of the Borel $\sigma$-field $\mathcal{B}([0,1])$,
which is the $\sigma$-algebra generated by all sets in $[0,1]^{\infty}$ represented by 
a finite union of cylinders, 
where a cylinder is a set in $[0,1]^{\infty}$ with a form 
$B_1 \times B_2 \times \ldots B_n \prod_{i = n+1}^{\infty} [0,1]$
for some natural number $n$ and $B_i \in \mathcal{B}([0,1])$ for all $i = 1, 2, \ldots n$.
Finally $\lambda^{\infty}(\cdot)$ is the product measure on 
$([0,1]^{\infty}, \mathcal{B}^{\infty}([0,1]))$,
which is defined by $\lambda^{\infty}(B) = \prod_{i = 1}^{\infty} \lambda(B_i)$,
for all $B = B_1 \times B_2 \times \ldots \in \mathcal{B}^{\infty}([0,1])$.\footnote{
We can construct $\lambda^{\infty}$ by the Kolmogorov extension theorem 
in the same manner as Theorem 2.4.4 of the book~\cite{T11} by Tao.}

Let $\Phi$ be a property (i.e., predicate) on $[0,1]^{\infty}$ 
and 
let $\Gamma = \{X \mid \Phi(X) \ \text{is true}\} \in \mathcal{B}^{\infty}([0,1])$. 
If $\mathcal{D} \setminus \Gamma$ is a $\lambda^{\infty}$-null set for 
$\mathcal{D} \in \mathcal{B}^{\infty}([0,1])$, 
we say that $\Phi$ holds {\em $\lambda^{\infty}$-almost everywhere} 
on $\mathcal{D}$, 
which means that $\Phi$ holds with probability $1$ under $\lambda^{\infty}$ on $\mathcal{D}$. 

We associate a random variable $y_m$ with the $m$th dimension of $[0,1]^{\infty}$. 
If in $\mathcal{D} \in \mathcal{B}^{\infty}([0,1])$, 
there is a finite set $\{i_1, i_2, \ldots, i_k\} \subset {\mathbf N}$ 
such that $y_{i_k}$ is uniquely determined 
by $y_{i_1}, y_{i_2}, \ldots, y_{i_k-1}$, then $\lambda^{\infty}(\mathcal{D}) = 0$. 

Let $\Omega \in \ASYNC$ be any schedule and 
consider $E \in \mathcal{E}(\Omega, \mathcal{A}, I)$. 
Suppose that in cycle $\omega_i(j)$, 
a robot $r_i$, which is located at $\pi_i(j)$ (in $\mathcal{Z}_0$) at time $o_i(j)$, 
takes a snapshot $P_i(j)$ (in $\mathcal{Z}_i$), 
computes a route $\tau_i(j)$ (in $\mathcal{Z}_0$) by $\mathcal{A}$, 
and moves along $\tau_i(j)$ in its Move. 
Since the system is not rigid, 
it may stop en route after tracing an initial part $\hat{\tau}_i(j)$ 
of $\tau_i(j)$ such that $|\hat{\tau}_i(j)| \geq \delta$. 
Then, the end point of $\hat{\tau}_i(j)$ is $\pi_i(j+1)$ 
of $r_i$ when $\omega_i(j+1)$ starts. 
Thus, $\hat{\tau}_i(j)$ may affect the rest of execution 
including $\pi_i(j+1)$ and $\tau_i(j+1)$. 

The initial part $\hat{\tau}_i(j)$ can be denoted by a real number 
$z_i(j) \in [0,1]$, i.e., 
$z_i(j)$ represents $\hat{\tau}_i(j)$ such that 
$|\hat{\tau}_i(j)| = |\tau_i(j)| z_i(j) - \delta(z_i(j)-1)$. 
Here $z_i(j)$ is well-defined, 
since $\tau_i(j)$ is a simple curve and $|\tau_i(j)| > \delta$.
Since the system is non-rigid, any value $z_i(j) \in [0,1]$ can occur, as assumed. 
More carefully, each of $z_i(j)$'s takes a value in $[0,1]$ uniformly at random,
and the probability that $z_i(j) \in B$ is $\lambda(B)$ for any $B \in \mathcal{B}([0,1])$.

We order $z_i(j)$ in the increasing order of $o_i(j)$, 
where a tie is broken arbitrarily, 
and fix the ordering. 
By associating the $m$th $z_i(j)$ with the $m$th variable $z^{(m)}$, 
we identify $Z = \{z_i(j) \mid r_i \in \mathcal{R}, j \in {\mathbf N}\}$ 
with an infinite vector $(z^{(1)}, z^{(2)}, \ldots) \in [0,1]^{\infty}$.
Then, $E$ is determined uniquely by $Z \in [0,1]^{\infty}$ 
since $E$ is stationary. 
We use notation $E(Z)$ to emphasize that $E$ is determined by $Z$. 
It is easy to observe that $Z = Z'$ if and only if $E(Z) \sim E(Z')$. 

Suppose that $E(Z) \in \mathcal{E}(\Omega, \mathcal{A}, I)$ contains a 
unique triple of cycles $\omega_i(j)$, $\omega_{i'}(j')$, 
and $\omega_{i'}(j'+1)$ such that 
$\omega_i(j)$ and $\omega_{i'}(j'+\ell)$ overlap each other 
and $r_i$ and $r_{i'}$ are mutually visible 
for $\ell = 0,1$.  
Thus, $E(Z)$ does not satisfy the pairwise alignment condition. 
Let $\mathcal{D}$ be the set of $Z$'s such that $E(Z)$ does not satisfy 
the pairwise alignment condition. 
We assume $\lambda^{\infty}(\mathcal{D}) > 0$ by the same reason 
we mentioned immediately above Lemma~\ref{lemma1}.

\begin{lemma}
\label{lemma2} 
Suppose that $\lambda^{\infty}(\mathcal{D}) > 0$. 
Then, $E(Z)$ does not have a similar execution 
$\tilde{E} \in \mathcal{E}(\tilde{\Omega}, \mathcal{A}, I)$ 
for any $\Omega \in \SSYNC$ $\lambda^{\infty}$-almost everywhere on $\mathcal{D}$. 
\end{lemma}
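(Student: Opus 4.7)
The plan is to adapt Lemma~\ref{lemma1}'s counting argument to the non-rigid setting via the infinite-product measure $\lambda^{\infty}$. The essential new observation is that although non-rigidity makes $\mathcal{E}(\tilde{\Omega},\mathcal{A},I)$ uncountable, similarity completely constrains the non-rigid parameters of any SSYNC candidate: requiring $\Pi(\tilde{E}) = \Pi(E(Z))$ pins each $\tilde{z}_k(\ell)$ to the unique value along the shared route $\tau_k(\ell-1)$ that yields $\tilde{\pi}_k(\ell) = \pi_k(\ell)$. Hence for each $\tilde{\Omega} \in \SSYNC$ and each $Z$, at most one SSYNC execution $\tilde{E}_{\tilde{\Omega},Z}$ is a candidate to be similar to $E(Z)$.

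First I would enumerate $\SSYNC = \{\tilde{\Omega}_1,\tilde{\Omega}_2,\ldots\}$, which is countable because each schedule is determined by the sequence of active subsets of $\mathcal{R}$ at $t=0,1,2,\ldots$, and set $\mathcal{D}_m = \{Z \in \mathcal{D} : E(Z) \sim \tilde{E}_{\tilde{\Omega}_m,Z}\}$. The set of $Z \in \mathcal{D}$ admitting some similar SSYNC execution is then $\bigcup_m \mathcal{D}_m$, so by countable subadditivity of $\lambda^{\infty}$ it suffices to prove $\lambda^{\infty}(\mathcal{D}_m) = 0$ for every $m$.

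The heart of the argument is this measure-zero bound. Stationarity forces $r_i$'s snapshot at $o_i(j)$ to see $r_{i'}$ at one of $\pi_{i'}(j'),\pi_{i'}(j'+1)$, and $r_{i'}$'s snapshots at $o_{i'}(j'+\ell)$ to see $r_i$ at one of $\pi_i(j),\pi_i(j+1)$. Matching these to the SSYNC discrete-time configurations under $\tilde{\Omega}_m$ at the activations $T_0 = \tilde{o}_{i'}(j')$, $T = \tilde{o}_i(j)$, $T_2 = \tilde{o}_{i'}(j'+1)$ yields ordering constraints among $T_0, T, T_2$. In the generic sub-branch, where $\omega_i(j)$ strictly straddles the gap between $\omega_{i'}(j')$ and $\omega_{i'}(j'+1)$, these constraints force $T = T_0$ and $T \geq T_2$ simultaneously with $T_0 < T_2$, an outright contradiction; hence $\mathcal{D}_m = \emptyset$ for such $\tilde{\Omega}_m$. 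In the remaining degenerate sub-branches (triggered by a boundary equality $o_i(j) = f_{i'}(j')$ or $o_{i'}(j'+1) = f_i(j)$), matching the ASYNC observed position --- a continuous function of one non-rigid coordinate such as $z_{i'}(j')$ or $z_i(j)$ --- to the prescribed SSYNC rest position pins that coordinate to a measurable function of finitely many others, so by the remark immediately preceding the lemma $\lambda^{\infty}(\mathcal{D}_m) = 0$.

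The main obstacle is the degenerate sub-branches. One must verify, case by case, that the matching condition is a genuine equation pinning exactly one non-rigid coordinate of $Z$, rather than merely an open positive-measure inequality. Identifying the correct coordinate in each subcase --- among $z_{i'}(j'-1), z_{i'}(j'), z_i(j-1), z_i(j)$ --- and verifying its independence from the finitely many coordinates on which the pinning function depends is the technical core of the proof; the Kolmogorov-extension framework stated before the lemma then delivers the null-set conclusion via countable union.
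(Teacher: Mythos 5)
Your overall strategy --- reduce similarity to a case analysis on timing, show each case either is impossible or pins one non-rigid coordinate of $Z$ to a function of finitely many others, and conclude via null sets in $([0,1]^{\infty},\mathcal{B}^{\infty}([0,1]),\lambda^{\infty})$ --- is the paper's strategy, and your treatment of the degenerate branches matches the paper's handling of the cases where $\tilde{o}_i(j)$ does not fall strictly between $\tilde{o}_{i'}(j')$ and $\tilde{o}_{i'}(j'+1)$. However, there are two concrete problems. First, the claim that the generic sub-branch forces $T=T_0$ and $T\ge T_2$ outright, so that $\mathcal{D}_m=\emptyset$, is too strong. Each of those ordering ``forcings'' comes from matching an observed position in $E(Z)$ to a rest position in $\tilde{E}$, and such a match can also be achieved by a coincidental repeat of a position across non-consecutive cycles, e.g.\ $\pi_i(j+\ell)=\pi_i(j)$ for some $\ell\ge 2$; simplicity of the route and $|\tau|>\delta$ only exclude $\pi_i(j+1)=\pi_i(j)$. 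The paper's cases (ii)--(iv) are exactly these coincidences: they are not empty in general and are disposed of only because each one pins a coordinate such as $z_i(j+\ell-1)$, hence is $\lambda^{\infty}$-null. So the correct conclusion in your generic branch is $\lambda^{\infty}(\mathcal{D}_m)=0$, not $\mathcal{D}_m=\emptyset$, and the pinning argument you reserve for the boundary-equality subcases is needed there as well.

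Second, your countable enumeration of $\SSYNC$ is not justified by the reason you give: a schedule is an infinite sequence of nonempty subsets of $\mathcal{R}$, and for $n\ge 2$ the set of such sequences is uncountable. (The paper asserts countability of $\SSYNC$ in the proof of Lemma~\ref{lemma1}, but its proof of Lemma~\ref{lemma2} does not rely on it.) The union that needs to be countable is not over schedules but over the data determining the exceptional null set in each case --- the case type together with the index $\ell$ of the pinned coordinate and the finitely many coordinates it depends on --- which is countable. Reorganizing the countable union this way repairs your argument and is essentially how the paper's proof is structured.
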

\begin{proof}
Let $\Gamma$ be the set of vectors $Z \in \mathcal{D}$ 
such that $E(Z)$ has a similar SSYNC execution $\tilde{E}$. 
We show that $\lambda^{\infty}(\Gamma) = 0$. 

Suppose that $E(Z)$ has a similar SSYNC execution 
$\tilde{E} \in \mathcal{E}(\tilde{\Omega}, \mathcal{A}, I)$ 
for some $\tilde{\Omega} \in \SSYNC$. 
Let $k = \tilde{o}_i(j)$, $k' = \tilde{o}_{i'}(j')$, 
and $k'' = \tilde{o}_{i'}(j'+1)$, 
where $k' < k''$. 
We have the following four cases: 
(i) $k' < k \leq k''$, 
(ii) $k' \leq k < k''$, 
(iii) $k' < k'' < k$, and 
(iv) $k < k' < k''$. 

\noindent{\bf When $k' < k \leq k''$.~}
At $o_i(j)$, $r_i$ is at $\pi_i(j)$ and $r_{i'}$ is at $\pi_{i'}(j')$ in $E(Z)$, 
and at $k = \tilde{o}_i(j)$, 
$r_i$ is at $\tilde{\pi}_i(j) (=\pi_i(j))$ and 
$r_{i'}$ is at $\tilde{\pi}_{i'}(j'+1) (= \pi_{i'}(j'+1))$ in $\tilde{E}(Z)$. 
Since $P_i(j) = \tilde{P}_i(j)$, $\pi_{i'}(j') \neq \pi_{i'}(j'+1)$, 
$E(Z)$ does not have a similar SSYNC execution $E$, 
since $r_{i'}$ moves at least distance $\delta$ and 
$\tau_{i'}(j')$ is simple. 

\noindent{\bf When $k' \leq k < k''$.~} 
By the argument above, $E(Z)$ does not have a similar SSYNC execution 
$\tilde{E}$ unless $k'=k$. 
Let $\ell \geq 1$ be the minimum integer such that 
$\tilde{o}_i(j+\ell) \geq k''$. 
At $o_{i'}(j'+1)$, $r_i$ is at $\pi_i(j)$ and 
$r_{i'}$ is at $\pi_{i'}(j'+1)$ in $E(Z)$, and 
at $k'' = \tilde{o}_{i'}(j'+1)$, 
$r_i$ is at $\tilde{\pi}_i(j+\ell) (= \pi_i(j+\ell))$ and 
$r_{i'}$ is at $\tilde{\pi}_{i'}(j'+1) (= \pi_{i'}(j'+1))$ in $E(Z)$. 
Since $P_{i'}(j'+1) = \tilde{P}_{i'}(j'+1)$, 
$\pi_i(j) = \pi_i(j+\ell)$. 
That is, $z_i(j+\ell-1)$ is uniquely determined by $\pi_i(j+\ell-1)$. 
(If there is no such $z_i(j+\ell-1)$, $E(Z)$ does not have a similar 
SSYNC execution $\tilde{E}$.) 
Thus, $\lambda^{\infty}(\Gamma) = 0$.  

\noindent{\bf When $k' < k'' < k$.~}
Let $\ell$ be the minimum integer such that 
$\tilde{o}_{i'}(j'+ 1 + \ell) \geq k$. 
By the same argument above, we have $\pi_{i'}(j') = \pi_{i'}(j'+1+\ell)$. 
That is, $z_{i'}(j' + \ell)$ must be uniquely determined by $\pi_{i'}(j'+\ell)$. 
(If there is no such $z_{i'}(j'+\ell)$, $E(Z)$ does not have a 
similar SSYNC execution $\tilde{E}$.) 
Thus, $\lambda^{\infty}(\Gamma) = 0$. 

\noindent{\bf When $k < k' < k''$.~}
By the same argument above, we have $\lambda^{\infty}(\Gamma) = 0$. 

Thus, the proof completes. 
\qed 
\end{proof} 

In order for $Z$ to be in $\mathcal{D}$, 
$E(Z)$ needs to satisfy both $r_{i'} \in S_i(j)$ and $r_i \in S_{i'}(j'+1)$, 
or equivalently, $dist(\pi_i(j), \pi_{i'}(j')) \leq 1$ and 
$dist(\pi_i(j), \pi_{i'}(j'+1)) \leq 1$. 
However, unlike the proof for stationarity, 
$\lambda^{\infty}(\mathcal{D}) > 0$ does not follow in general, 
since there may be a pair of an algorithm $\mathcal{A}$ and 
an initial configuration $I$ 
such that for any $Z \in [0,1]^{\infty}$, 
in $E(Z)$, either $r_{i'} \not\in S_i(j)$ or 
$r_i \not\in S_{i'}(j'+1)$ holds. 
If $\lambda^{\infty}(\mathcal{D}) = 0$, $E(Z)$ satisfies the pairwise alignment 
condition with probability $1$, 
and we do not consider $\mathcal{E}(\Omega, \mathcal{A}, I)$ as an instance 
that does not satisfy the pairwise alignment condition, 
even though there is a $Z \in [0,1]^{\infty}$ such that 
$r_{i'} \in S_{i(j)}$ and $r_i \in S_{i'}(j'+1)$ in $E(Z)$. 

Note that the same claim as Lemma~\ref{lemma2} holds for $E(Z)$ 
such that there are more than one triple $\omega_i(j)$, $\omega_{i'}(j')$, 
and $\omega_{i'}(j'+1)$ 
that violates the pairwise alignment condition. 
Suppose that $E(Z) \in \mathcal{E}(\Omega, \mathcal{A}, I)$ contains 
a pair of cycles $\omega_i(j)$ and $\omega_{i'}(j')$ 
such that $\omega_i(j) \parallel \omega_{i'}(j')$. 
Let $\mathcal{D}$ be the set of $Z$'s such that $E(Z)$ satisfies this condition. 

\begin{proposition}
\label{prop6}
Suppose that $\lambda^{\infty}(\mathcal{D}) > 0$. 
If $E(Z)$ has a similar execution 
$\tilde{E} \in \mathcal{E}(\Omega, \mathcal{A}, I)$ 
for some $\tilde{\Omega} \in \SSYNC$, 
then $\tilde{o}_i(j) = \tilde{o}_{i'}(j')$ 
$\lambda^{\infty}$-almost everywhere on $\mathcal{D}$.  
\end{proposition}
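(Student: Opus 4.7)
The plan is to mirror the strategy used in the proof of Lemma~\ref{lemma2}. Define $\Gamma$ to be the set of $Z \in \mathcal{D}$ for which some similar SSYNC execution $\tilde{E} \in \mathcal{E}(\tilde{\Omega}, \mathcal{A}, I)$ has $\tilde{o}_i(j) \neq \tilde{o}_{i'}(j')$; the goal is to show $\lambda^{\infty}(\Gamma) = 0$. Fix $Z \in \Gamma$ together with a witnessing $\tilde{E}$, and set $k = \tilde{o}_i(j)$ and $k' = \tilde{o}_{i'}(j')$. Without loss of generality assume $k < k'$; the case $k > k'$ is treated symmetrically by interchanging the roles of $r_i$ and $r_{i'}$.

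From $\omega_i(j) \parallel \omega_{i'}(j')$ in $E(Z)$, the two robots observe each other in these cycles, with $r_i$ at $\pi_i(j)$ and $r_{i'}$ at $\pi_{i'}(j')$. Similarity yields $\tilde{\pi}_i(j) = \pi_i(j)$, $\tilde{\pi}_{i'}(j') = \pi_{i'}(j')$, and $\tilde{P}_{i'}(j') = P_{i'}(j')$. Evaluating the last equality at $r_{i'}$'s Look at time $k'$, the absolute position of $r_i$ at time $k'$ in $\tilde{E}$ must be exactly $\pi_i(j)$.

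Because $k < k'$, $r_i$ has already executed $\tilde{\omega}_i(j)$ and moved to $\tilde{\pi}_i(j+1) = \pi_i(j+1)$. The standing assumption $|\tau_i(j)| > \delta$ forces $|\hat{\tau}_i(j)| > \delta$, and simplicity of $\tau_i(j)$ then gives $\pi_i(j+1) \neq \pi_i(j)$. So $r_i$ must return to $\pi_i(j)$ through later activations inside $(k, k']$. Let $\ell \geq 1$ be determined by $\tilde{\Omega}$ so that $r_i$'s position at time $k'$ in $\tilde{E}$ equals $\tilde{\pi}_i(j+\ell) = \pi_i(j+\ell)$ (taking $\ell$ to refer to the preceding activation if $r_i$ is also activated at $k'$). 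The matching requirement forces $\pi_i(j+\ell) = \pi_i(j)$, i.e., the endpoint of the prefix $\hat{\tau}_i(j+\ell-1)$ of the simple curve $\tau_i(j+\ell-1)$ coincides with the fixed point $\pi_i(j)$.

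Since the curve is simple, $\pi_i(j)$ can appear on it at most once, so this constraint pins $z_i(j+\ell-1)$ to at most one value in $[0,1]$. By the null-set property recalled earlier in this section, the set of $Z \in [0,1]^{\infty}$ on which a single coordinate is uniquely determined by the others is $\lambda^{\infty}$-null. Taking the union over the countable set of schedules $\tilde{\Omega} \in \SSYNC$, over the integer $\ell$, and over the two orderings $k < k'$ and $k > k'$, a countable union of null sets remains null, so $\lambda^{\infty}(\Gamma) = 0$. The main obstacle I anticipate is the careful bookkeeping of SSYNC activations inside the interval $[k, k']$: different $\tilde{\Omega}$'s pin down different $z$-coordinates (on $r_i$'s side when $k < k'$, and on $r_{i'}$'s side when $k > k'$), and some schedules must be ruled out outright---for instance, when $\ell = 1$ and $r_i$ is activated at $k'$, which would require $\pi_i(j+1) = \pi_i(j)$, directly contradicting the simplicity of $\tau_i(j)$.
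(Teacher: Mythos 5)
Your proposal is correct and takes essentially the same route as the paper: the paper's own proof simply defines $\Gamma$ as the set of $Z \in \mathcal{D}$ admitting a similar SSYNC execution with $\tilde{o}_i(j) \neq \tilde{o}_{i'}(j')$ and invokes ``the same argument as in the proof of Lemma~\ref{lemma2}.'' You have merely unpacked that argument explicitly (pinning a coordinate $z_i(j+\ell-1)$ to a single value via simplicity of the route and $|\hat{\tau}_i(j)| > \delta$, then taking a countable union over schedules and orderings), at the same level of rigor as the paper's Lemma~\ref{lemma2}.
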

\begin{proof}
Let $\Gamma$ be the set of vectors $Z \in \mathcal{D}$ such that 
there is a $\tilde{\Omega} \in \SSYNC$ satisfying 
\begin{enumerate}
\item 
$\tilde{o}_i(j) \neq \tilde{o}_{i'}(j')$, and 
\item 
$E(Z) \sim \tilde{E} \in \mathcal{E}(\Omega, \mathcal{A}, I)$. 
\end{enumerate}
Then, by the same argument in the proof of Lemma~\ref{lemma2}, 
we have $\lambda^{\infty}(\Gamma) = 0$. 
\qed 
\end{proof}

\subsubsection{Consistency} 

Throughout this section, we assume that $E(Z) \in \mathcal{E}(\Omega, \mathcal{A}, I)$ 
contains a pair of cycles $\omega_i(j)$ and $\omega_{i'}(j')$ 
such that $\omega_i(j) \stackrel{*}{\parallel} \omega_{i'}(j')$, 
and that $E(Z)$ has a similar execution 
$\tilde{E} \in \mathcal{E}(\tilde{\Omega}, \mathcal{A}, I)$ 
for some $\tilde{\Omega} \in \SSYNC$. 
Let $\mathcal{D}$ be the set of $Z$'s such that $E(Z)$ satisfies this condition. 

\begin{lemma}
\label{lemma3} 
Suppose that $\lambda^{\infty}(\mathcal{D}) > 0$. 
If $E(Z) \in \mathcal{E}(\Omega, \mathcal{A}, I)$ which contains a pair 
of cycles $\omega_i(j)$ and $\omega_{i'}(j')$ such that 
$\omega_i(j) \stackrel{*}{\parallel} \omega_{i'}(j')$ holds has a similar 
execution $\tilde{E} \in \mathcal{E}(\tilde{\Omega}, \mathcal{A}, I)$ 
for some $\tilde{\Omega} \in \SSYNC$, 
then $\tilde{o}_i(j) = \tilde{o}_{i'}(j')$ 
$\lambda^{\infty}$-almost everywhere on $\mathcal{D}$. 
\end{lemma}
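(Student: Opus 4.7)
The plan is to reduce Lemma~\ref{lemma3} to Proposition~\ref{prop6} via the definition of the transitive closure $\stackrel{*}{\parallel}$. Since $\omega_i(j) \stackrel{*}{\parallel} \omega_{i'}(j')$ in $E(Z)$, there must exist, for each $Z \in \mathcal{D}$, a finite chain of cycles
$\omega_i(j) = \omega_{i_0}(j_0) \parallel \omega_{i_1}(j_1) \parallel \cdots \parallel \omega_{i_h}(j_h) = \omega_{i'}(j')$
certifying that relation. The key observation is that the chain length $h$ and the intermediate indices depend on $Z$, because $\parallel$ depends on the visibility sets $S_\cdot(\cdot)$ which in turn depend on the non-rigid stopping points encoded by $Z$. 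So the strategy is to decompose $\mathcal{D}$ according to the chain that witnesses the relation, apply Proposition~\ref{prop6} to each consecutive link, and glue the results by transitivity of equality.

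Concretely, for every finite sequence $\sigma = ((i_0,j_0),\ldots,(i_h,j_h))$ with $(i_0,j_0)=(i,j)$ and $(i_h,j_h)=(i',j')$, let $\mathcal{D}_\sigma \subseteq \mathcal{D}$ be the set of $Z$ for which the corresponding cycles form a $\parallel$-chain in $E(Z)$. First I would check that each $\mathcal{D}_\sigma$ lies in $\mathcal{B}^{\infty}([0,1])$ (this follows because the $\parallel$ relation is determined by Euclidean distance inequalities between positions that depend continuously on $Z$, modulo stationarity). Since the schedule $\Omega$ has only countably many cycles, the collection of candidate chains is countable, and $\mathcal{D} = \bigcup_\sigma \mathcal{D}_\sigma$.

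Now fix $\sigma$ with $\lambda^{\infty}(\mathcal{D}_\sigma) > 0$ (otherwise that summand is already negligible). For each consecutive pair $\omega_{i_m}(j_m) \parallel \omega_{i_{m+1}}(j_{m+1})$ appearing in $\sigma$, Proposition~\ref{prop6} applied on $\mathcal{D}_\sigma$ yields a $\lambda^{\infty}$-null exceptional set $N_m^\sigma$ off of which $\tilde{o}_{i_m}(j_m) = \tilde{o}_{i_{m+1}}(j_{m+1})$. On $\mathcal{D}_\sigma \setminus \bigcup_{m=0}^{h-1} N_m^\sigma$, chaining these equalities transitively gives $\tilde{o}_i(j) = \tilde{o}_{i'}(j')$. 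The total exceptional set for $\sigma$ is $N^\sigma = \bigcup_{m=0}^{h-1} N_m^\sigma$, a finite (hence null) union of null sets.

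Finally, the overall exceptional set on $\mathcal{D}$ is $N = \bigcup_\sigma N^\sigma$, a countable union of null sets, hence $\lambda^{\infty}(N) = 0$ by countable subadditivity. This proves $\tilde{o}_i(j) = \tilde{o}_{i'}(j')$ $\lambda^{\infty}$-almost everywhere on $\mathcal{D}$. The main obstacle I anticipate is a careful, rigorous handling of the countable decomposition: one must ensure that the sets $\mathcal{D}_\sigma$ are indeed measurable and that the application of Proposition~\ref{prop6} to each link is legitimate even though the link belongs only to a sub-event $\mathcal{D}_\sigma$ of $\mathcal{D}$ rather than to the full event considered in Proposition~\ref{prop6}. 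Since Proposition~\ref{prop6}'s proof argument (through the case analysis in Lemma~\ref{lemma2}) depends only on the local consequences of $\omega_{i_m}(j_m) \parallel \omega_{i_{m+1}}(j_{m+1})$, restricting the ambient event to $\mathcal{D}_\sigma$ preserves its validity, so the reduction goes through.
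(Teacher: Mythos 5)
Your proposal is correct and follows essentially the same route as the paper: expand $\omega_i(j) \stackrel{*}{\parallel} \omega_{i'}(j')$ into a finite chain of $\parallel$-links, apply Proposition~\ref{prop6} to each link to get $\tilde{o}_{i_{m}}(j_{m}) = \tilde{o}_{i_{m+1}}(j_{m+1})$ off a null set, and chain the equalities, with the exceptional set a (countable) union of null sets. Your extra decomposition of $\mathcal{D}$ into the countably many events $\mathcal{D}_\sigma$ indexed by the witnessing chain is a careful refinement of a point the paper's proof glosses over (it tacitly fixes one chain), but it does not change the underlying argument.
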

\begin{proof}
Since $\omega_i(j) \stackrel{*}{\parallel} \omega_{i'}(j')$, 
there are cycles $\omega_{i_{\ell}}(j_{\ell})$ ($\ell = 0, 1, \ldots, m$) 
such that $(i_0, j_0) = (i,j)$ and $(i_m, j_m) = (i', j')$, 
and $\omega_{i_{\ell-1}}(j_{\ell-1}) \parallel \omega_{i_{\ell}}(j_{\ell})$ 
for all $\ell = 1, 2, \ldots, m$. 
By Proposition~\ref{prop6}, 
if $E(Z)$ has a similar SSYNC execution $\tilde{E}$, 
then, for $\ell = 1, 2, \ldots, m$, 
$\tilde{o}_{i_{\ell-1}}(j_{\ell-1}) = \tilde{o}_{i_{\ell}}(j_{\ell})$ 
$\lambda^{\infty}$-almost everywhere on $\mathcal{D}$, 
which implies that $\tilde{o}_i(j) = \tilde{o}_{i'}(j')$ 
$\lambda^{\infty}$-almost everywhere on $\mathcal{D}$. 
\qed
\end{proof}

\begin{lemma}
\label{lemma4} 
Suppose that $\lambda^{\infty}(\mathcal{D}) > 0$. 
If $E(Z)$ has a similar execution 
$\tilde{E} \in \mathcal{E}(\tilde{\Omega}, \mathcal{A}, I)$ 
for some $\tilde{\Omega} \in \SSYNC$, 
then $r_{i'} \in S_i(j)$ if and only if $r_i \in S_{i'}(j')$, 
$\lambda^{\infty}$-almost everywhere on $\mathcal{D}$. 
\end{lemma}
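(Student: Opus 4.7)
The plan is to leverage Lemma~\ref{lemma3} to reduce the statement to symmetry of visibility in the SSYNC world, then transport that symmetry back to $E(Z)$ via the similarity relation. First I would apply Lemma~\ref{lemma3} to conclude that, $\lambda^{\infty}$-almost everywhere on $\mathcal{D}$, one has $\tilde{o}_i(j) = \tilde{o}_{i'}(j') = k$ for a common discrete time $k \in \mathbf{N}$. Call $\mathcal{D}'$ the subset of $\mathcal{D}$ on which this equality holds; by Lemma~\ref{lemma3} the complement $\mathcal{D} \setminus \mathcal{D}'$ is a $\lambda^{\infty}$-null set, so it is enough to prove the biconditional pointwise on $\mathcal{D}'$.

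Fix $Z \in \mathcal{D}'$, the associated similar SSYNC execution $\tilde{E} \in \mathcal{E}(\tilde{\Omega}, \mathcal{A}, I)$, and set $k = \tilde{o}_i(j) = \tilde{o}_{i'}(j')$. Since $\tilde{\Omega} \in \SSYNC$, both $r_i$ and $r_{i'}$ take their snapshots simultaneously at time $k$. Visibility in a synchronous snapshot is purely a function of Euclidean distance in $\mathcal{Z}_0$ at that instant, which is symmetric: $r_{i'} \in \tilde{S}_i(j)$ iff $dist(\tilde{\pi}_i(j), \tilde{\pi}_{i'}(j')) \leq 1$ iff $r_i \in \tilde{S}_{i'}(j')$. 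This is the key intermediate fact, and it is immediate from the definition of SSYNC.

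Next I would transfer this symmetry back to $E(Z)$ using $E(Z) \sim \tilde{E}$. Similarity gives $\Pi(E(Z)) = \Pi(\tilde{E})$ and $P(E(Z)) = P(\tilde{E})$; in particular $\pi_i(j) = \tilde{\pi}_i(j)$, $\pi_{i'}(j') = \tilde{\pi}_{i'}(j')$, and $P_i(j) = \tilde{P}_i(j)$, $P_{i'}(j') = \tilde{P}_{i'}(j')$. Since $S_i(j)$ and $\tilde{S}_i(j)$ are both determined by which other robots lie within distance $1$ of the (common) position $\pi_i(j) = \tilde{\pi}_i(j)$, together with the (equal) snapshot sets, we conclude $r_{i'} \in S_i(j) \Leftrightarrow r_{i'} \in \tilde{S}_i(j)$, and symmetrically $r_i \in S_{i'}(j') \Leftrightarrow r_i \in \tilde{S}_{i'}(j')$. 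Chaining these four equivalences yields the desired biconditional on $\mathcal{D}'$, which is exactly the statement $\lambda^{\infty}$-almost everywhere on $\mathcal{D}$.

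The main obstacle is purely bookkeeping: I must be careful that the null set produced by Lemma~\ref{lemma3} is the only set I discard, and that the identifications $\pi_i(j) = \tilde{\pi}_i(j)$ and $P_i(j) = \tilde{P}_i(j)$ obtained from similarity are applied at the matching indices $j$ (and not merely at the level of unordered sets of positions/snapshots, which is how $P(E)$ and $\Pi(E)$ were defined). Once this indexing is justified (by the fact that cycles are enumerated in temporal order for each robot and the definition of similarity preserves the per-robot sequence of snapshots and positions), the rest of the argument is just symmetry of Euclidean distance plus transport through $\sim$.
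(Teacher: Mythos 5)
Your use of Lemma~\ref{lemma3} and the symmetry of visibility at a common SSYNC time $k$ are fine, but the transfer steps $r_{i'} \in S_i(j) \Leftrightarrow r_{i'} \in \tilde{S}_i(j)$ and $r_i \in S_{i'}(j') \Leftrightarrow r_i \in \tilde{S}_{i'}(j')$ do not follow from similarity, and this is where the actual content of the lemma lies. The set $S_i(j)$ is determined by the positions of the \emph{other} robots at the instant $o_i(j)$ of $E$, and in the asynchronous execution $r_{i'}$ need not be at $\pi_{i'}(j')$ at that instant: if, say, $o_{i'}(j') < o_i(j)$, then at time $o_i(j)$ robot $r_{i'}$ sits at $\pi_{i'}(j'+\ell')$ for some $\ell' \geq 0$, having possibly completed further cycles in between. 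In $\tilde{E}$, by contrast, at the common time $k$ robot $r_{i'}$ sits at $\tilde{\pi}_{i'}(j') = \pi_{i'}(j')$. So $r_{i'} \in S_i(j)$ is a statement about $dist(\pi_i(j), \pi_{i'}(j'+\ell'))$ while $r_{i'} \in \tilde{S}_i(j)$ is a statement about $dist(\pi_i(j), \pi_{i'}(j'))$; these can genuinely disagree. Equality of the anonymous snapshots $P_i(j) = \tilde{P}_i(j)$ does not rescue the step, since it only says that the sets of occupied relative positions coincide, not that the same robots occupy them.

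The paper's proof closes exactly this hole: it notes that for similarity to survive such a disagreement one would need $\pi_{i'}(j') = \pi_{i'}(j'+\ell')$ with $\ell' > 0$ (a robot returning to precisely the same point after a non-rigid move of length at least $\delta$ along a simple curve), or symmetrically $\pi_i(j) = \pi_i(j-\ell)$ with $\ell > 0$, and these events are $\lambda^{\infty}$-null. That is the only place the randomized adversary enters, and it is why the conclusion is stated only $\lambda^{\infty}$-almost everywhere rather than everywhere on $\mathcal{D}$; your argument, as written, would prove the biconditional pointwise on all of $\mathcal{D}'$ with no further exceptional set, which signals that a measure-theoretic step has been skipped. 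To repair the proof you must add the argument that, outside a null set, the position of $r_{i'}$ at $o_i(j)$ in $E$ coincides with $\pi_{i'}(j')$ (and likewise for $r_i$ at $o_{i'}(j')$); only then does the symmetry of the Euclidean distance finish the job.
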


\begin{proof}
Since $\lambda^{\infty}(\mathcal{D}) > 0$, 
$E(Z)$ has a similar execution 
$\tilde{E} \in \mathcal{E}(\tilde{\Omega}, \mathcal{A}, I)$ 
for some $\tilde{\Omega} \in \SSYNC$ such that 
$\tilde{o}_i(j) = \tilde{o}_{i'}(j')$ 
$\lambda^{\infty}$-almost everywhere on $\mathcal{D}$ 
by Lemma~\ref{lemma3}. 

Suppose that $o_{i'}(j') < o_i(j)$. 
Let $\omega_{i'}(j'+ \ell'-1)$ (resp. $\omega_i(j-\ell-1)$) be the 
cycle of $r_{i'}$ (resp. $r_i$) such that 
$\pi_{i'}(j'+\ell')$ (resp. $\pi_i(j+\ell)$) 
be the position of $r_{i'}$ (resp. $r_i$) 
at $o_i(j)$ (resp. $o_{i'}(j')$). 
Since $\pi_i(j) = \tilde{\pi}_i(j)$, 
$\pi_i(j-\ell) = \tilde{\pi}_i(j-\ell)$, 
$\pi_{i'}(j') = \tilde{\pi}_{i'}(j')$, 
$\pi_{i'}(j'+\ell) = \tilde{\pi}_{i'}(j'+\ell)$, 
$\pi_i(j) = \tilde{\pi}_i(j-\ell)$, 
and $\pi_{i'}(j') = \tilde{\pi}_{i'}(j' + \ell)$, 
provided that $\tilde{o}_i(j) = \tilde{o}_{i'}(j')$. 
Obviously, the set of $Z \in \mathcal{D}$ satisfying 
$\pi_i(j) = \pi_i(j-\ell)$ (resp. $\pi_{i'}(j') = \pi_{i'}(j' + \ell')$) 
is the $\lambda^{\infty}$-null set when $\ell > 0$ (resp. $\ell' > 0$). 
Thus, $r_{i'} \in S_i(j)$ if and only if $r_i \in S_{i'}(j')$ 
$\lambda^{\infty}$-almost everywhere on $\mathcal{D}$. 

The case $o_{i'}(j') > o_i(j)$ is analogous 
and the case $o_{i'}(j') = o_i(j)$ is trivial. 
\qed 
\end{proof}

By the proof of above lemma, we have the following corollary. 

\begin{corollary}
\label{corl2} 
Suppose that $\lambda^{\infty}(\mathcal{D}) > 0$. 
If $E(Z)$ has a similar execution 
$\tilde{E} \in \mathcal{E}(\tilde{\Omega}, \mathcal{A}, I)$ 
for some $\tilde{\Omega} \in \SSYNC$, 
and $r_{i'} \in S_i(j)$ and $r_i \in S_{i'}(j')$ hold, 
then $\omega_i(j) \parallel \omega_{i'}(j')$ 
$\lambda^{\infty}$-almost everywhere on $\mathcal{D}$. 
\end{corollary}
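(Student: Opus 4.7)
The plan is to lift the machinery of Lemmas~\ref{lemma3} and~\ref{lemma4} with essentially no new ideas, so the argument becomes a careful case analysis rather than a new measure-theoretic construction. By Lemma~\ref{lemma3}, on $\mathcal{D}$ (up to a $\lambda^{\infty}$-null set) every similar SSYNC execution $\tilde{E}$ satisfies $\tilde{o}_i(j) = \tilde{o}_{i'}(j')$; call this common value $k$. Since similarity gives $\pi_i(j) = \tilde{\pi}_i(j)$ and $\pi_{i'}(j') = \tilde{\pi}_{i'}(j')$, in $\tilde{E}$ the two robots look from exactly these positions at the same SSYNC tick $k$. I use this rigid placement, together with mutual visibility $r_{i'} \in S_i(j)$ and $r_i \in S_{i'}(j')$, to rule out every timing configuration in $E(Z)$ incompatible with $\omega_i(j) \parallel \omega_{i'}(j')$.

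Assume $i \neq i'$ (the case $i = i'$ is immediate by Proposition~\ref{prop:2}) and, by symmetry, $o_{i'}(j') \leq o_i(j)$. If $\omega_i(j) \not\parallel \omega_{i'}(j')$, then inspecting clause~3 of the definition of $\parallel$, either (A) $o_i(j) > s_{i'}(j')$, or (B) $o_{i'}(j') \leq f_i(j-1)$. In Case~(A), stationarity forces $o_i(j) \geq f_{i'}(j')$, so the position of $r_{i'}$ seen by $r_i$ in $E$ at time $o_i(j)$ is $\pi_{i'}(j'+1)$, not $\pi_{i'}(j')$. Matching snapshots $P_i(j) = \tilde{P}_i(j)$ in the SSYNC copy (where $r_{i'}$ sits at $\tilde{\pi}_{i'}(j') = \pi_{i'}(j')$ at time $k$) then demands $\pi_{i'}(j'+1) = \pi_{i'}(j')$; but since $|\tau_{i'}(j')| > \delta$, $|\hat{\tau}_{i'}(j')| \geq \delta$, and $\tau_{i'}(j')$ is simple, the endpoint of $\hat{\tau}_{i'}(j')$ differs from its start, contradicting this equality outright.

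Case~(B) is the more delicate one and is the main obstacle. Here $r_{i'}$ looks while $r_i$ is still in some earlier cycle $\omega_i(j'')$ with $j'' < j$; by stationarity $r_i$ is not moving at $o_{i'}(j')$, so its observed position is precisely a cycle-boundary point $\pi_i(j''+1)$ (or $\pi_i(j'')$). The same snapshot-matching argument applied to $\tilde{P}_{i'}(j')$ — where $r_i$ appears at $\tilde{\pi}_i(j) = \pi_i(j)$ — forces $\pi_i(j'') = \pi_i(j)$ (or $\pi_i(j''+1) = \pi_i(j)$). This is exactly the situation handled in Cases~(ii)--(iv) of Lemma~\ref{lemma2}'s proof: it pins down the intermediate coordinates $z_i(j''), z_i(j''+1), \ldots, z_i(j-1)$ of $Z$ to uniquely determined values, which is a $\lambda^{\infty}$-null event.

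Combining the cases, the set of $Z \in \mathcal{D}$ for which $\omega_i(j) \not\parallel \omega_{i'}(j')$ while a similar SSYNC execution exists is a finite union of $\lambda^{\infty}$-null sets, and so is itself $\lambda^{\infty}$-null. Hence $\omega_i(j) \parallel \omega_{i'}(j')$ $\lambda^{\infty}$-almost everywhere on $\mathcal{D}$, which is the claim. The only genuinely new bookkeeping relative to Lemma~\ref{lemma4} is the explicit use of stationarity in Case~(B) to identify the cycle index $j''$; everything else is a direct transcription of the endpoint-coincidence trick already established.
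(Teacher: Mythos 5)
Your proposal takes essentially the same route as the paper: the paper proves Corollary~\ref{corl2} simply by pointing back to the proof of Lemma~\ref{lemma4}, where it is shown that, $\lambda^{\infty}$-almost everywhere on $\mathcal{D}$, $r_{i'}$ is observed by $r_i$ at position $\pi_{i'}(j')$ (not at a later $\pi_{i'}(j'+\ell')$) and symmetrically for $r_i$, and these two facts are exactly the timing constraints of clause~3 of $\parallel$. Your case split (A)/(B) is a correct and more explicit unpacking of that observation.

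Two small points. First, in Case~(A) you only treat the situation where $r_{i'}$ is observed at $\pi_{i'}(j'+1)$ and claim an ``outright'' contradiction from simplicity of $\tau_{i'}(j')$; but $o_i(j)$ may fall after several further cycles of $r_{i'}$, in which case the observed position is $\pi_{i'}(j'+\ell')$ for $\ell'\geq 2$, the concatenated route need not be simple, and you must fall back on the null-set argument (pinning down $z_{i'}(j'),\ldots,z_{i'}(j'+\ell'-1)$) exactly as you already do in Case~(B) and as the paper does in Lemma~\ref{lemma4}. Second, citing Proposition~\ref{prop:2} for the case $i=i'$ is circular, since that proposition presupposes consistency, which is what this subsection is establishing as necessary; the case follows instead directly from Lemma~\ref{lemma3}, because $\tilde{o}_i(j)=\tilde{o}_i(j')$ in an $\SSYNC$ schedule forces $j=j'$, whence $\omega_i(j)\parallel\omega_{i'}(j')$ by Proposition~\ref{prop:1}. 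Both are local repairs; the argument is otherwise sound.
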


\begin{lemma}
\label{lemma5} 
Suppose that $\lambda^{\infty}(\mathcal{D}) > 0$. 
If $E(Z)$ has a similar execution 
$\tilde{E} \in \mathcal{E}(\tilde{\Omega}, \mathcal{A}, I)$ 
for some $\tilde{\Omega} \in \SSYNC$, 
and $r_{i'} \not\in S_i(j)$ and $r_i \not\in S_{i'}(j')$ hold, 
then $dist(\pi_i(j), \pi_{i'}(j')) > 1$ 
$\lambda^{\infty}$-almost everywhere on $\mathcal{D}$. 
\end{lemma}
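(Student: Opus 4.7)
The plan is to combine Lemma~\ref{lemma3} with a position-coincidence argument of the same flavour as the ones used for Lemmas~\ref{lemma2} and~\ref{lemma4}. First, by Lemma~\ref{lemma3}, outside a $\lambda^{\infty}$-null subset of $\mathcal{D}$ I may assume that the similar SSYNC execution $\tilde{E}$ satisfies $\tilde{o}_i(j) = \tilde{o}_{i'}(j') = k$. Set $\mathcal{D}' = \{Z \in \mathcal{D} : dist(\pi_i(j), \pi_{i'}(j')) \leq 1\}$; the task reduces to proving $\lambda^{\infty}(\mathcal{D}') = 0$.

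On $\mathcal{D}'$, similarity gives $\tilde{\pi}_i(j) = \pi_i(j)$ and $\tilde{\pi}_{i'}(j') = \pi_{i'}(j')$, and because in $\tilde{E}$ no robot moves during the synchronous Look at time $k$, the inequality $dist(\pi_i(j), \pi_{i'}(j')) \leq 1$ forces $r_{i'} \in \tilde{S}_i(j)$. The equality $P_i(j) = \tilde{P}_i(j)$ then transplants the relative position of $r_{i'}$ in $\mathcal{Z}_i$ into $P_i(j)$, so there must exist some $r_{i''} \in S_i(j)$ whose position in $\mathcal{Z}_0$ at time $o_i(j)$ in $E$ equals $\pi_{i'}(j')$. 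By stationarity (Assumption~\ref{ass1}) $r_{i''}$ is not moving at $o_i(j)$, hence sits at $\pi_{i''}(\tau)$ for the unique $\tau$ with $o_i(j) \in [f_{i''}(\tau-1), o_{i''}(\tau))$, giving $\pi_{i''}(\tau) = \pi_{i'}(j')$. If $i'' = i'$ this directly contradicts the hypothesis $r_{i'} \not\in S_i(j)$, so I may restrict to $i'' \neq i'$ (and $i'' \neq i$, since the transported point is non-origin).

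The main obstacle is then to show that the equation $\pi_{i''}(\tau) = \pi_{i'}(j')$ for distinct $i''$ and $i'$ is a $\lambda^{\infty}$-null condition on $Z$. The plan is to reuse the simple-curve argument underlying Lemmas~\ref{lemma2} and~\ref{lemma4}: when $\tau \geq 2$, fixing every coordinate of $Z$ other than $z_{i''}(\tau-1)$ leaves $\pi_{i''}(\tau)$ tracing the simple curve $\hat{\tau}_{i''}(\tau-1)$ as $z_{i''}(\tau-1)$ varies over $[0,1]$, so the required equality pins $z_{i''}(\tau-1)$ to at most one value and yields a null slice; a symmetric argument handles $j' \geq 2$, and the boundary case $\tau = j' = 1$ is ruled out because the standing assumption forbids distinct robots at a common initial position. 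Fubini then gives measure zero, and the countable union over the (countably many) possible witnesses $(i'', \tau)$ preserves the null property, yielding $\lambda^{\infty}(\mathcal{D}') = 0$. I expect the trickiest bookkeeping to be enumerating these witnesses while verifying measurability of each slice, but this proceeds exactly as in the preceding lemmas.
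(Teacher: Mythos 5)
Your argument follows the paper's proof of this lemma essentially step for step: invoke Lemma~\ref{lemma3} to get $\tilde{o}_i(j)=\tilde{o}_{i'}(j')$ almost everywhere, observe that $dist(\pi_i(j),\pi_{i'}(j'))\leq 1$ together with $P_i(j)=\tilde{P}_i(j)$ forces some robot $r_{i''}$ to occupy $\pi_{i'}(j')$ at time $o_i(j)$ in $E(Z)$, and conclude that this coincidence is a $\lambda^{\infty}$-null event. The paper simply asserts that last nullity claim, whereas you correctly fill in the witness enumeration and the simple-curve/Fubini justification, so this is the same approach carried out in more detail.
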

\begin{proof}
Since $\lambda^{\infty}(\mathcal{D}) > 0$, 
$E(Z)$ has a similar execution 
$\tilde{E} \in \mathcal{E}(\tilde{\Omega}, \mathcal{A}, I)$ 
for some $\tilde{\Omega} \in \SSYNC$, 
such that $\tilde{o}_i(j) = \tilde{o}_{i'}(j')$ 
$\lambda^{\infty}$-almost everywhere on $\mathcal{D}$ 
by Lemma~\ref{lemma3}. 

If $dist(\pi_i(j), \pi_{i'}(j')) \leq 1$, 
since $\pi_i(j) = \tilde{\pi}_i(j)$, $\pi_{i'}(j') = \tilde{\pi}_{i'}(j')$, 
and $\tilde{o}_i(j) = \tilde{o}_{i'}(j')$ 
$\lambda^{\infty}$-almost everywhere on $\mathcal{D}$, 
there is a robot $r_{i''}$ and cycle $\omega_{i''}(j'')$ 
such that $r_{i''}$ is at position $\pi_{i'}(j')$ at time $o_i(j)$, 
since $P_i(j) = \tilde{P}_i(j)$. 
Such event does not occur 
$\lambda^{\infty}$-almost everywhere on $\mathcal{D}$. 
Thus, $dist(\pi_i(j), \pi_{i'}(j')) > 1$ 
$\lambda^{\infty}$-almost everywhere on $\mathcal{D}$. 
\qed 
\end{proof}

\subsubsection{Serializability} 

\begin{lemma}
 \label{lemma6} 
If $E(Z) \in \mathcal{E}(\Omega, \mathcal{A}, I)$ which contains 
a pair of cycles $\omega_i(j)$ and $\omega_{i'}(j')$ such that 
$\omega_i(j) \rightarrow \omega_{i'}(j')$ holds 
has a similar execution $\tilde{E} \in \mathcal{E}(\Omega, \mathcal{A}, I)$ 
for some $\tilde{\Omega} \in \SSYNC$, 
then $\tilde{o}_i(j) < \tilde{o}_{i'}(j')$ 
$\lambda^{\infty}$-almost everywhere on $\mathcal{D}$. 
\end{lemma}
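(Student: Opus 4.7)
The plan is to do a case analysis on the three clauses that define $\omega_i(j)\rightarrow\omega_{i'}(j')$, handling the trivial clause by direct inspection and reducing each of the remaining two to a position-collision event of the kind used in Lemma~\ref{lemma2}. In Case~1 ($i'=i$, $j'=j+1$) every $\tilde\Omega\in\SSYNC$ already satisfies $\tilde o_i(j)<\tilde o_i(j+1)$, so no probabilistic work is needed.

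For Case~2 ($i'\ne i$, $r_i\in S_{i'}(j')$, $o_{i'}(j')\in(f_i(j),s_i(j+1)]$) I would first use stationarity to place $r_i$ at $\pi_i(j+1)$ at time $o_{i'}(j')$, so $P_{i'}(j')$ records $r_i$ at $\pi_i(j+1)$. Assuming for contradiction that a similar $\tilde E$ exists with $\tilde o_i(j)\ge\tilde o_{i'}(j')$, at the discrete time $\tilde o_{i'}(j')$ the robot $r_i$ is still seated at $\tilde\pi_i(j)=\pi_i(j)$ (either cycle $j$ has not yet started, or it starts simultaneously and the Look is taken before the Move). Matching $\tilde P_{i'}(j')$ with $P_{i'}(j')$ then either requires $\pi_i(j)=\pi_i(j+1)$ (when $r_i$ also lies within visibility range of $r_{i'}$ in $\tilde E$) or yields an outright mismatch ($r_i$ in one snapshot but not the other). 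Since $\tau_i(j)$ is a simple curve with $|\tau_i(j)|>\delta$ and $|\hat\tau_i(j)|\ge\delta$, the identity $\pi_i(j)=\pi_i(j+1)$ is forced only on a single value of $z_i(j)\in[0,1]$, so the set of $Z$'s on which the similarity and $\tilde o_i(j)\ge\tilde o_{i'}(j')$ co-exist is a $\lambda^{\infty}$-null cylinder.

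Case~3 ($i'\ne i$, $r_{i'}\in S_i(j)$, $f_{i'}(j'-1)<o_i(j)<f_i(j)<o_{i'}(j')$) is handled symmetrically. Stationarity places $r_{i'}$ at $\pi_{i'}(j')$ at time $o_i(j)$, so $P_i(j)$ records $r_{i'}$ there. Under the contradictory assumption $\tilde o_i(j)\ge\tilde o_{i'}(j')$, if $\tilde o_i(j)>\tilde o_{i'}(j')$ then at $\tilde o_i(j)$ cycle $\tilde\omega_{i'}(j')$ has already finished and $r_{i'}$ sits at $\tilde\pi_{i'}(j'+1)=\pi_{i'}(j'+1)$, so matching the snapshots requires $\pi_{i'}(j')=\pi_{i'}(j'+1)$, a $\lambda^{\infty}$-null event via $z_{i'}(j')$. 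If $\tilde o_i(j)=\tilde o_{i'}(j')$, I would invoke the symmetry of Euclidean visibility: from $r_{i'}\in S_i(j)$ we get $dist(\pi_i(j),\pi_{i'}(j'))\le 1$, so $r_i$ must appear in $\tilde P_{i'}(j')$ at $\pi_i(j)$, whereas in $E$ at time $o_{i'}(j')$ robot $r_i$ has finished cycle $j$ and sits at $\pi_i(j+1)$; matching $\tilde P_{i'}(j')$ with $P_{i'}(j')$ therefore reduces once more to the null event $\pi_i(j)=\pi_i(j+1)$ (or to a direct snapshot mismatch).

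The main obstacle will be the equal-time sub-case of Case~3, where one has to use the reverse direction of visibility (from $r_{i'}$ back to $r_i$) that is not hypothesized directly but does follow from the symmetry of Euclidean distance together with $r_{i'}\in S_i(j)$; one then must propagate this through the SSYNC snapshot identity $\tilde P_{i'}(j')=P_{i'}(j')$ even though $r_i$ may not belong to $S_{i'}(j')$ in $E$. Once this subtlety is handled, each bad event is a $\lambda^{\infty}$-null cylinder that fixes one coordinate $z_i(j)$ or $z_{i'}(j')$, and the finite union taken over the contradictory sub-cases remains $\lambda^{\infty}$-null, yielding $\tilde o_i(j)<\tilde o_{i'}(j')$ almost everywhere on $\mathcal{D}$.
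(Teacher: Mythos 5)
Your proposal is correct and follows essentially the same route as the paper: dispose of the same-robot clause trivially, then reduce each of the remaining two clauses of $\rightarrow$ to a position-coincidence event ($\pi_i(j)=\pi_i(j+1)$ or $\pi_{i'}(j')=\pi_{i'}(j'+1)$, or a covering-robot coincidence) that pins down a single coordinate of $Z$, exactly as in the proof of Lemma~\ref{lemma2}. Your write-up is in fact more explicit than the paper's, which only spells out the clause $r_i\in S_{i'}(j')$ with $o_{i'}(j')\in(f_i(j),o_i(j+1))$ and defers everything else (including the third clause and the equal-time sub-case needed to upgrade ``not $>$'' to ``$<$'') to ``a similar argument''; the details you supply for those sub-cases are the right ones.
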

\begin{proof}
If $i = i'$, since $j < j'$, $\tilde{o}_i(j) < \tilde{o}_{i'}(j')$ by definition. 

Suppose that $i \neq i'$. 
Then, $r_i \in S_{i'}(j')$ and $o_{i'}(j') \in (f_i(j), o_i(j+1))$. 
Then, by a similar argument in the proof of Lemma~\ref{lemma2}, 
$E(Z)$ does not have a similar execution $\tilde{E} \in \mathcal{E}(\Omega, \mathcal{A}, I)$ for any $\tilde{\Omega} \in \SSYNC$ such that 
$\tilde{o}_i(j) > \tilde{o}_{i'}(j')$ 
$\lambda^{\infty}$-almost everywhere on $\mathcal{D}$. 
\qed 
\end{proof}

The following corollary immediately holds by Lemma~\ref{lemma3} and 
Lemma~\ref{lemma6}. 
\begin{corollary}
If $E(Z)$ does not satisfy the serializability, 
then $E(Z)$ does not have a similar execution 
$\tilde{E} \in \mathcal{E}(\tilde{\Omega}, \mathcal{A}, I)$ 
for any $\tilde{\Omega} \in \SSYNC$ 
$\lambda^{\infty}$-almost everywhere on $\mathcal{D}$. 
\end{corollary}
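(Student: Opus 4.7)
The plan is to unpack what serializability-violation means and then chain together Lemma~\ref{lemma3} and Lemma~\ref{lemma6} around the offending cycle.

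First I would recall that $E(Z)$ violates serializability exactly when the graph $\mathcal{G} = (\{\Omega_0, \Omega_1, \ldots\}, \Rightarrow)$ contains a directed cycle, i.e.\ there exist distinct equivalence classes $\Omega_{k_0}, \Omega_{k_1}, \ldots, \Omega_{k_{m-1}}$ (with indices read modulo $m$) and, for each $\ell \in \{0, 1, \ldots, m-1\}$, cycles $\omega_{a_\ell}(b_\ell) \in \Omega_{k_\ell}$ and $\omega_{c_\ell}(d_\ell) \in \Omega_{k_{\ell+1}}$ witnessing $\Omega_{k_\ell} \Rightarrow \Omega_{k_{\ell+1}}$, i.e.\ $\omega_{a_\ell}(b_\ell) \rightarrow \omega_{c_\ell}(d_\ell)$. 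Since the $\rightarrow$-witnesses within $\Omega_{k_{\ell+1}}$ need not be the same cycle $\omega_{a_{\ell+1}}(b_{\ell+1})$ chosen at the next step, I would also observe that $\omega_{c_\ell}(d_\ell) \stackrel{*}{\parallel} \omega_{a_{\ell+1}}(b_{\ell+1})$ by definition of the equivalence classes.

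Next I would pass to any candidate similar SSYNC execution $\tilde{E} \in \mathcal{E}(\tilde{\Omega}, \mathcal{A}, I)$. For each $\ell$, Lemma~\ref{lemma6} applied to the pair $\omega_{a_\ell}(b_\ell) \rightarrow \omega_{c_\ell}(d_\ell)$ gives $\tilde{o}_{a_\ell}(b_\ell) < \tilde{o}_{c_\ell}(d_\ell)$ outside a $\lambda^{\infty}$-null set $N_\ell^{\rightarrow}$, and Lemma~\ref{lemma3} applied to $\omega_{c_\ell}(d_\ell) \stackrel{*}{\parallel} \omega_{a_{\ell+1}}(b_{\ell+1})$ gives $\tilde{o}_{c_\ell}(d_\ell) = \tilde{o}_{a_{\ell+1}}(b_{\ell+1})$ outside a $\lambda^{\infty}$-null set $N_\ell^{\parallel}$. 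Letting $N = \bigcup_{\ell=0}^{m-1}(N_\ell^{\rightarrow} \cup N_\ell^{\parallel})$, which is still $\lambda^{\infty}$-null as a finite union of null sets, I would chain the resulting (in)equalities around the cycle to obtain, for every $Z \in \mathcal{D} \setminus N$,
\begin{equation*}
\tilde{o}_{a_0}(b_0) < \tilde{o}_{c_0}(d_0) = \tilde{o}_{a_1}(b_1) < \tilde{o}_{c_1}(d_1) = \cdots < \tilde{o}_{c_{m-1}}(d_{m-1}) = \tilde{o}_{a_0}(b_0),
\end{equation*}
a contradiction. Hence no such $\tilde{E}$ exists for $Z \in \mathcal{D}\setminus N$, which is the claim.

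The only subtle point, and the one I would be careful about, is that in general $\mathcal{G}$ can contain infinitely many cycles and that the cycle witnessing non-serializability may itself depend on $Z$. To handle this I would reduce to a countable union: the set of possible directed cycles in $\mathcal{G}$ is countable (cycles are finite tuples of indices drawn from a countable set), so the total exceptional set remains $\lambda^{\infty}$-null, and violating serializability forces at least one such cycle to be active, on which the argument above applies. The main technical obstacle is therefore not the algebraic chaining, which is essentially forced, but making the ``$\lambda^{\infty}$-almost everywhere'' bookkeeping rigorous across a countable family of potential witness cycles; once that is handled the contradiction falls out immediately from Lemmas~\ref{lemma3} and~\ref{lemma6}.
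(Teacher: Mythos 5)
Your proposal is correct and matches the paper's intended argument: the paper states the corollary "immediately holds by Lemma~\ref{lemma3} and Lemma~\ref{lemma6}," and your chaining of the strict inequalities from Lemma~\ref{lemma6} along the $\Rightarrow$-edges with the equalities from Lemma~\ref{lemma3} within each equivalence class is exactly the implicit reasoning. Your additional care about the finite (and countable) union of null sets is a reasonable piece of bookkeeping that the paper omits.
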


\subsubsection{Naturality} 

Recall that $\mathcal{G} = (\{\Omega_0, \Omega_1, \ldots\}, \Rightarrow)$ 
and the set $\mathcal{T}$ of topological sorts of $\mathcal{G}$ are 
determined by execution $E(Z) \in \mathcal{E}(\Omega, \mathcal{A}, I)$. 
We sometimes associate $E(Z)$ with $\mathcal{G}$ and $\mathcal{T}$ 
to emphasize that they are uniquely determined by $E(Z)$. 
Let $\mathcal{TS}(E(Z)) (\subset \SSYNC)$ be the set of schedules 
constructed from topological sorts in $\mathcal{T}(E(Z))$. 
By Lemma~\ref{lemma3} and Lemma~\ref{lemma6}, 
if $\tilde{E} \in \mathcal{E}(\tilde{\Omega}, \mathcal{A}, I)$ 
is similar to $E(Z) \in \mathcal{E}(\Omega, \mathcal{A}, I)$, 
then $\tilde{\Omega} \in \mathcal{TS}(E(Z))$ 
$\lambda^{\infty}$-almost everywhere on $\mathcal{D}$, 
provided that $\lambda^{\infty}(\mathcal{D}) > 0$. 
Here $\mathcal{D}$ is the set of $U \in [0,1]^{\infty}$ 
such that $\mathcal{G}(E(U)) = \mathcal{G}(E(Z))$. 

Consider any schedule $\tilde{\Omega} \in \mathcal{TS}(E(Z))$. 
Suppose that there is a pair of cycles $\omega_i(j)$ and $\omega_{i'}(j')$ 
such that $k' \leq k < k''$, 
where, under $\tilde{\Omega}$, 
$\tilde{o}_i(j) = k$, $\tilde{o}_{i'}(j'-1) = k'$, and 
$\tilde{o}_{i'}(j') = k''$. 
Obviously, $\omega_i(j) \not\stackrel{*}{\parallel} \omega_{i'}(j')$. 

First assume that $r_{i'} \in S_i(j)$ in $E(Z)$, and 
let $\mathcal{D'} \subseteq \mathcal{D}$ be the set of $Z \in \mathcal{D}$ 
such that $E(Z)$ satisfies this condition. 

\begin{lemma}
\label{lemma7} 
Suppose that $\lambda^{\infty}(\mathcal{D'}) > 0$. 
If $E(Z)$ has a similar  execution 
$\tilde{E} \in \mathcal{E}(\Omega, \mathcal{A}, I)$, 
then $o_i(j) < o_{i'}(j')$ 
$\lambda^{\infty}$-almost everywhere on $\mathcal{D'}$.  
\end{lemma}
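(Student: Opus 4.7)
The plan is to reason by contradiction, ruling out $o_i(j) \geq o_{i'}(j')$ on all but a $\lambda^{\infty}$-null subset of $\mathcal{D}'$. First I would observe that equality is impossible: if $o_i(j) = o_{i'}(j')$, the two cycles overlap and pairwise alignment together with $r_{i'} \in S_i(j)$ would force $\omega_i(j) \parallel \omega_{i'}(j')$, contradicting $\omega_i(j) \not\stackrel{*}{\parallel} \omega_{i'}(j')$. So the case left to eliminate is $o_i(j) > o_{i'}(j')$, and stationarity combined with non-overlap (forced by pairwise alignment in conjunction with non-concurrency and $r_{i'} \in S_i(j)$) actually sharpens this to $o_i(j) > f_{i'}(j')$.

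Next I would exploit the similarity $E(Z) \sim \tilde{E}$ to locate $r_{i'}$ in both executions at the relevant Look times. In $\tilde{\Omega}$, the cycle $\tilde{\omega}_{i'}(j'-1)$ ends by $k' + 3/4 < k''$, so at $\tilde{o}_i(j) = k$ the robot $r_{i'}$ is idle at $\tilde{\pi}_{i'}(j')$. Similarity forces $\tilde{\pi}_{i'}(j') = \pi_{i'}(j')$ and $\tilde{\pi}_i(j) = \pi_i(j)$, so from $P_i(j) = \tilde{P}_i(j)$ the position of $r_{i'}$ in $E$ at time $o_i(j)$ must likewise equal $\pi_{i'}(j')$. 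Under the assumption $o_i(j) > f_{i'}(j')$, stationarity says this position is $\pi_{i'}(j' + \ell)$ for some $\ell \geq 1$, so I must arrive at the identity $\pi_{i'}(j' + \ell) = \pi_{i'}(j')$ and then show it is a $\lambda^{\infty}$-null event.

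The main obstacle is this last measure-theoretic step, which I would handle as follows. I would view $\pi_{i'}(j' + \ell)$ as a function of the random variables $z_{i'}(j'), \ldots, z_{i'}(j' + \ell - 1)$, with all other coordinates of $Z$ held fixed. Since each $\tau_{i'}(j' + h)$ is a simple curve of length strictly greater than $\delta$, the terminal point of the traversed prefix depends continuously and one-to-one on the corresponding normalization $z_{i'}(j' + h)$; in particular, fixing all earlier coordinates, the preimage of the single target $\pi_{i'}(j')$ under the last-variable map is at most a single point, hence $\lambda$-null. Applying Fubini over the finitely many preceding $z_{i'}$-coordinates, together with a countable union over $\ell \geq 1$, yields a $\lambda^{\infty}$-null exceptional set. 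Excluding this set gives $o_i(j) < o_{i'}(j')$ $\lambda^{\infty}$-almost everywhere on $\mathcal{D}'$.
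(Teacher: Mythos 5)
Your proof is correct and takes essentially the same route as the paper's: both rule out $o_i(j) = o_{i'}(j')$ via non-concurrency, and for $o_i(j) > o_{i'}(j')$ both use $P_i(j) = \tilde{P}_i(j)$ together with $\pi_i(j) = \tilde{\pi}_i(j)$ to force a positional coincidence that is a $\lambda^{\infty}$-null event (your Fubini-plus-countable-union justification of nullity is in fact more explicit than the paper's). The only slight imprecision is that, since snapshots are anonymous point sets, the robot occupying $\pi_{i'}(j'+\ell)$ in $\tilde{E}$ at time $k$ need not be $r_{i'}$; the paper allows an arbitrary $r_{i''}$ (``which may be $r_{i'}$''), and the resulting coincidence $\tilde{\pi}_{i''}(j'') = \pi_{i'}(j'+\ell)$ is null by the same argument you give.
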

\begin{proof}
By definition, $r_{i'}$ is at position $\tilde{\pi}_{i'}(j')$ 
at time $\tilde{o}_i(j) = k$ in $\tilde{E}$. 
Since  $\omega_i(j) \not\stackrel{*}{\parallel} \omega_{i'}(j')$, 
$o_i(j) \neq o_{i'}(j')$. 
Suppose that $o_i(j) > o_{i'}(j')$. 
There is a cycle $\omega_{i'}(j'+\ell')$ such that $r_{i'}$ 
is at $\pi_{i'}(j' + \ell')$ at $o_i(j)$ for some $\ell' \geq 1$. 
Since $\pi_i(j) = \tilde{\pi}_i(j)$, $P_i(j) = \tilde{P}_i(j)$, 
and $r_{i'} \in S_i(j)$, there is a robot $r_{i''}$ 
(which may be $r_{i'}$) and a cycle $\omega_{i''}(j'')$ such that 
$\tilde{\pi}_{i''}(j'') = \pi_{i'}(j' + \ell')$. 
Thus, $E(Z)$ does not have a similar execution 
$\tilde{E} \in \mathcal{E}(\tilde{\Omega}, \mathcal{A}, I)$ 
$\lambda^{\infty}$-almost everywhere on $\mathcal{D'}$. 
\qed 
\end{proof}

Next assume that $r_{i'} \not\in S_i(j)$ in $E(Z)$, 
and let $\mathcal{D}'' \subseteq \mathcal{D}$ be the set of $Z \in \mathcal{D}$ 
such that $E(Z)$ satisfies this condition. 

\begin{lemma}
\label{lemma8} 
Suppose that $\lambda^{\infty}(\mathcal{D''}) > 0$. 
If $E(Z)$ has a similar  execution 
$\tilde{E} \in \mathcal{E}(\Omega, \mathcal{A}, I)$, 
then $dist(\pi_i(j), \pi_{i'}(j')) > 1$ 
$\lambda^{\infty}$-almost everywhere on $\mathcal{D''}$. 
\end{lemma}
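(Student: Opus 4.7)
The plan is to adapt the arguments of Lemma~\ref{lemma7} and Lemma~\ref{lemma5}. Let $\Gamma \subseteq \mathcal{D}''$ be the set of $Z$ for which both $dist(\pi_i(j),\pi_{i'}(j'))\le 1$ and $E(Z)$ admits a similar SSYNC execution $\tilde E$; I will show $\lambda^{\infty}(\Gamma) = 0$, which yields the lemma.

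Fix $Z \in \Gamma$. In the candidate SSYNC schedule $\tilde\Omega$ the slot indices satisfy $k' < k < k''$ (the boundary case $k' = k$ being absorbed by the consistency arguments of Lemmas~\ref{lemma4} and \ref{lemma5}, since then $\omega_i(j) \stackrel{*}{\parallel} \omega_{i'}(j'-1)$), so $r_{i'}$ completes its $(j'-1)$st Move at time $k' + 3/4$ and does not begin cycle $j'$ until time $k''$; in particular, at $\tilde o_i(j) = k$ it sits at $\tilde\pi_{i'}(j') = \pi_{i'}(j')$. Combined with $\pi_i(j) = \tilde\pi_i(j)$ and the hypothesis $dist(\pi_i(j),\pi_{i'}(j'))\le 1$, this forces the relative position $\pi_{i'}(j') - \pi_i(j)$ to belong to $\tilde P_i(j)$, and hence to $P_i(j)$ by similarity. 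But $r_{i'} \not\in S_i(j)$ in $E(Z)$ means that the actual position of $r_{i'}$ at $o_i(j)$ lies at distance $>1$ from $\pi_i(j)$ and thus differs from $\pi_{i'}(j')$, so the point $\pi_{i'}(j') - \pi_i(j)$ in $P_i(j)$ must be contributed by some other robot $r_{i''}$, in some cycle $\omega_{i''}(j'')$, occupying $\pi_{i'}(j')$ in $\mathcal{Z}_0$ at time $o_i(j)$. This gives a positional identity $\pi_{i''}(j'') = \pi_{i'}(j')$.

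The main step, and where the essential work lies, is to show that this identity forces $Z$ into a $\lambda^{\infty}$-null set, in the spirit of Lemmas~\ref{lemma4}, \ref{lemma5}, and \ref{lemma7}. As soon as either of $r_{i'}, r_{i''}$ has executed at least one Move, the corresponding position $\pi_{i'}(j')$ or $\pi_{i''}(j'')$ is an injective function of the truncation parameter $z_{i'}(j'-1)$ or $z_{i''}(j''-1)$ along a simple curve of length exceeding $\delta$, and so the identity $\pi_{i''}(j'') = \pi_{i'}(j')$ pins one of these $z$-variables as a deterministic function of the remaining variables, a $\lambda^{\infty}$-null event by the observation preceding Lemma~\ref{lemma2}. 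In the remaining degenerate case where neither robot has moved yet, the identity would place two distinct robots at the same point of the initial configuration $I$, contradicting the standing assumption that no two robots share a position. Taking the union over the finitely many candidates $(i'', j'')$ compatible with $\Omega$ at time $o_i(j)$ gives $\lambda^{\infty}(\Gamma) = 0$, completing the argument.
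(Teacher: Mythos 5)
Your proof follows essentially the same route as the paper's: from $dist(\pi_i(j),\pi_{i'}(j'))\le 1$ and $\tilde\pi_{i'}(j')=\pi_{i'}(j')$ at time $k$ you deduce $r_{i'}\in\tilde S_i(j)$, so $P_i(j)=\tilde P_i(j)$ forces some robot $r_{i''}$ to occupy $\pi_{i'}(j')$ at $o_i(j)$ in $E(Z)$, a positional coincidence that is a $\lambda^{\infty}$-null event. You actually spell out the final null-set step (pinning a truncation variable along a simple curve, plus the no-moves degenerate case) that the paper leaves implicit, which is a welcome addition rather than a deviation.
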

\begin{proof}
Suppose that $dist(\pi_i(j), \pi_{i'}(j')) \leq 1$. 
Since $r_{i'}$ is at $\tilde{\pi}_{i'}(j')$ at $\tilde{o}_i(j) = k$ 
in $\tilde{E}$, 
$\tilde{\pi}_{i'}(j') = \pi_{i'}(j')$ and $\tilde{\pi}_i(j) = \pi_i(j)$, 
$r_{i'} \in \tilde{S}_i(j)$. 
There is a robot $r_{i''}$ for some $i''(\neq i)$ and a cycle 
$\omega_{i''}(j'')$ such that $r_{i''}$ is at $\pi_{i'}(j')$ at $o_i(j)$ in $E(Z)$, 
since $\tilde{P}_i(j) = P_i(j)$. 
Thus, $E(Z)$ does not have a similar execution 
$\tilde{E} \in \mathcal{E}(\tilde{\Omega}, \mathcal{A}, I)$ 
$\lambda^{\infty}$-almost everywhere on $\mathcal{D''}$. 
\qed 
\end{proof}

Consequently, we have the following theorem. 

\begin{theorem}
\label{theorem:necessity} 
Each of the five properties, 
stationarity, pairwise alignment, consistency, serializability, and naturality 
is necessary for an execution $E(Z) \in \mathcal{E}(\Omega, \mathcal{A}, I)$ to have 
a similar execution $\tilde{E} \in \mathcal{E}(\tilde{\Omega}, \mathcal{A}, I)$ 
for some schedule $\tilde{\Omega} \in \SSYNC$ with probability $1$. 
\end{theorem}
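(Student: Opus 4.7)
The plan is to assemble the theorem as a direct consequence of the lemmas and corollaries already established in the preceding subsections. Each of the five properties has been addressed individually: Lemma~\ref{lemma1} (together with its extension to multiple triggering pairs) handles stationarity in the rigid setting; Lemma~\ref{lemma2} handles pairwise alignment; Lemmas~\ref{lemma3}, \ref{lemma4}, \ref{lemma5} and Corollary~\ref{corl2} handle the three clauses of consistency; the corollary following Lemma~\ref{lemma6} handles serializability; and Lemmas~\ref{lemma7} and~\ref{lemma8} handle the two clauses of naturality. The theorem is essentially the conjunction of these statements, so the proof is an orchestration rather than a new argument.

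First I would fix an execution $E(Z) \in \mathcal{E}(\Omega, \mathcal{A}, I)$ and, for each of the five properties $\Phi_k$ ($k=1,\ldots,5$), define the event $\mathcal{D}_k \subseteq [0,1]^{\infty}$ consisting of those $Z$ for which $E(Z)$ violates $\Phi_k$. By the contrapositive of the preceding lemmas, on each $\mathcal{D}_k$ with $\lambda^{\infty}(\mathcal{D}_k) > 0$, the subset $\Gamma_k \subseteq \mathcal{D}_k$ of vectors for which $E(Z)$ admits a similar SSYNC execution $\tilde{E} \in \mathcal{E}(\tilde{\Omega}, \mathcal{A}, I)$ satisfies $\lambda^{\infty}(\Gamma_k) = 0$. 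Taking a countable union over the (countably many) pairs, triples, or chains of cycles involved in each violation — indexed by the robot-cycle tuples $(i,j)$, $(i',j')$, $(i_0,j_0),\ldots,(i_m,j_m)$ arising in the respective definitions — preserves the $\lambda^{\infty}$-null property and yields the global statement for each $\Phi_k$.

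The theorem then reads: if $E(Z)$ does not satisfy one of the five properties with positive probability, the conditional probability of having a similar SSYNC execution is $0$; equivalently, existence of a similar SSYNC execution forces all five properties to hold $\lambda^{\infty}$-almost everywhere. I would state this conclusion by writing the event ``$E(Z) \sim \tilde{E}$ for some $\tilde{\Omega} \in \SSYNC$'' as a subset of $\bigcap_{k=1}^{5}\{Z : \Phi_k \text{ holds in } E(Z)\}$ modulo a $\lambda^{\infty}$-null set, which is precisely the ``with probability $1$'' claim.

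The main obstacle I anticipate is bookkeeping rather than mathematical depth: one must verify that the various exceptional null sets appearing in Lemmas~\ref{lemma1}--\ref{lemma8} are measurable in $\mathcal{B}^{\infty}([0,1])$ and that the countable union over all violating tuples remains null, so that the five ``almost-everywhere'' conclusions can be intersected into a single ``almost-everywhere'' statement. A minor subtlety is that the events defining the consistency, serializability, and naturality violations are themselves defined relative to the graph $\mathcal{G}(E(Z))$ and its topological sorts, so I would first argue that the map $Z \mapsto \mathcal{G}(E(Z))$ partitions $[0,1]^{\infty}$ into at most countably many measurable cells, on each of which the lemmas apply uniformly; this reduces the global argument to a countable union of the already-proven cell-wise statements.
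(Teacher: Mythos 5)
Your proposal is correct and takes essentially the same route as the paper: the paper itself offers no separate argument for Theorem~\ref{theorem:necessity}, presenting it simply as the conjunction of Lemmas~\ref{lemma1}--\ref{lemma8} and their corollaries, each covering one of the five properties exactly as you assign them. Your additional measure-theoretic bookkeeping (countable unions of null sets over violating tuples, measurability of the cells determined by $\mathcal{G}(E(Z))$) is sound and in fact supplies detail the paper leaves implicit.
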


\section{Luminous synchronizer for ASYNC robots} 
\label{sec:synchronizer} 

In this section, we present a synchronizer for oblivious luminous ASYNC robots, 
that produces ASYNC executions satisfying 
Assumptions~\ref{ass1}, \ref{ass2}, \ref{ass3}, \ref{ass4}, and \ref{ass5}. 
When the robots are not equipped with lights, 
each robot cannot recognize which robots are moving. 
We compensate for this weak capability by a single light 
at each robot.\footnote{
Remember that the color of a light is changed at the end of a Compute, 
and it is kept until the end of the Compute of the next cycle.} 

Let $C$ be the set of colors that a light can take. 
Each light initially takes Black ($Bk \in C$). 
When robot $r_i$ takes a snapshot $Q_i$ at time $t$, 
$Q_i$ is the set of pairs $(p_{i'}, c_{i'})$ 
for each $r_{i'}$ visible for $r_i$ at $t$, 
where $p_{i'}$ is the the position of $r_{i'}$ at $t$ in $\mathcal{Z}_i$ 
and $c_{i'}$ is the color of $r_{i'}$'s light at $t$. 
Let $P(Q_i) = \{p_{i'} \mid (p_{i'}, c_{i'}) \in Q_i\}$, 
and $C(Q_i) = \{c_{i'} \mid (p_{i'}, c_{i'}) \in Q_i\}$. 
That is, $P(Q_i)$ is the set of positions occupied by robots 
visible for $r_i$ in $\mathcal{Z}_i$, 
and $C(Q_i)$ is the set of colors visible for $r_i$. 
Recall that $r_i$ is aware of its color, i.e., the color of its light $c_i$ 
since $p_i=(0,0)$ and the robots occupy distinct points. 

For an initial configuration $I$ of the system of non-luminous robots, 
let $\hat{I} = \{(p, Bk) \mid p \in I\}$ be an initial configuration 
of the system of luminous robots. 
By definition, $P(\hat{I}) = I$ and $C(\hat{I}) = \{Bk\}$. 

We now define a luminous 
synchronizer $\mathcal{S}$ on a robot $r_i$ under any schedule 
$\Omega \in \ASYNC$. 
Given any algorithm $\mathcal{A}$ and initial configuration $I$ for non-luminous robots, 
the initial configuration for $\mathcal{S}$ is the corresponding configuration $\hat{I}$. 
Luminous synchronizer 
$\mathcal{S}$ on $r_i$ inhibits {\em on-the-fly} $r_i$'s motion 
in some cycle so that the resulting execution have a similar SSYNC execution. 
Precisely, $\mathcal{S}$ works as follows: 
In a cycle $\omega = (o, s, f)$ of robot $r$, 
$r$ takes a snapshot $Q$ at time $o$ in Look. 
In Compute, depending on $Q$, $\mathcal{S}$ on $r$ first decides whether or not it 
``accepts'' $\omega$, and then computes a move route $\tau$. 
If it accepts $\omega$, 
$\tau$ is the one that $\mathcal{A}$ computes given $P(Q)$;
otherwise, if it ``rejects'' $\omega$, 
$\tau$ is the point $(0,0)$. 
Finally, it decides a color $c \in C$ and 
the color of $r_i$'s light is changed to $c$ at the end of Compute. 
Thus, the color $c$ is visible from other robots at $s$ and thereafter. 
In Move, $r$ traces $\tau$ but it may stop en route after moving distance $\delta$. 

The set of executions $F$ of $\mathcal{S}$ for $\mathcal{A}$ and $I$ under scheduler $\Omega$ 
is denoted by $\mathcal{E}(\Omega, \mathcal{S}(\mathcal{A}), \hat{I})$. 
Let $\Lambda$ be the set of cycles accepted by $\mathcal{S}$ in $F$. 
Note that $\Lambda$ depends on $F$. 
From $F$ (and $\Lambda$), we can construct an execution 
$\check{F} \in \mathcal{E}(\Lambda, \mathcal{A}, I)$ for non-luminous robots 
by first extracting the behaviors of the robots for 
cycles in $\Lambda$ and then ignoring the colors of lights. 
Since the next position is computed from $P(Q)$ (not from $Q$) 
and the robots do not change their positions in rejected cycles, 
indeed $\check{F} \in \mathcal{E}(\Lambda, \mathcal{A}, I)$. 

We say that luminous 
synchronizer $\mathcal{S}$ is {\em correct} if the following conditions 
hold for any $\Omega$, $\mathcal{A}$, $I$, and $F \in \mathcal{E}(\Omega, \mathcal{S}(\mathcal{A}), \hat{I})$. 
\begin{enumerate}
\item 
$\Lambda$ is fair. 
\item 
$\check{F}$ satisfies Assumptions~\ref{ass1}, \ref{ass2}, \ref{ass3}, 
\ref{ass4}, and \ref{ass5}, which implies that 
$\check{F} \in \mathcal{E}(\Lambda, \mathcal{A}, I)$ has a similar execution 
$\tilde{E} \in \mathcal{E}(\tilde{\Lambda}, \mathcal{A}, I)$ for some 
$\tilde{\Lambda} \in \SSYNC$. 
\end{enumerate}

\subsection{Limit of color-based synchronizer} 

The {\em visibility graph} of a configuration of the robots 
consists of a set of vertices corresponding to the robots 
and a set of edges between any pair of robots 
within distance $1$ (in $\mathcal{Z}_0$). 
A {\em visibility preserving} algorithm guarantees that 
the visibility graph does not change in any execution. 
Formally, 
an execution $E \in \mathcal{E}(\Omega, \mathcal{A}, I)$ for non-luminous robots 
is {\em visibility preserving}, 
if the following condition holds: 
For any $r_i$ and $r_{i'}$, and for any time $t \in {\mathbf R}^+$, 
$dist(p_i(t), p_{i'}(t)) \leq 1$ if and only if 
$dist (p_i(0), p_{i'}(0)) \leq 1$, 
where $p_j(u)$ is the position of $r_j$ at time $u$ in $\mathcal{Z}_0$. 
We say that an algorithm $\mathcal{A}$ is {\em visibility preserving}, 
if every execution $E \in \mathcal{E}(\Omega, \mathcal{A}, I)$ is 
visibility preserving for any $\Omega \in \SSYNC$ and $I$. 

Let $Q_i(j) = \{(p_{i'}, c_{i'}) \mid r_{i'} \in S_i(j) \}$ be the 
snapshot taken by a robot $r_i$ at $o_i(j)$ in $\omega_i(j)$. 
Clearly, $p_i = (0,0)$. 
Let $X_i(j) = \{ c_{i'} \mid i' \neq i, r_{i'} \in S_i(j)\}$. 
In general, a luminous 
synchronizer on $r_i$ can use the full information on $Q_i(j)$ 
to decide whether it accepts $\omega_i(j)$ or not. 
A luminous synchronizer $\mathcal{S}$ is {\em color-based} 
if it uses $c_i$ and $X_i(j)$ for the selection. 
A color-based synchronizer $\mathcal{S}$ is {\em greedy} 
if it accepts $\omega_i(j)$ if and only if $C(Q_i(j)) = \{Bk\}$, i.e., 
$c_i = Bk$ and $X_i(j)$ is either $\{Bk\}$ or $\emptyset$. 
We show that any greedy synchronizer is not powerful enough in general. 

\begin{lemma}
\label{lemma:removal}
There exists a rigid system of five luminous robots such that 
for any greedy synchronizer $\mathcal{S}$, 
there is a triple $(\Omega, \mathcal{A}, I)$ such that, 
for some execution 
$F \in \mathcal{E}(\Omega, \mathcal{S}(\mathcal{A}), \hat{I})$, 
$\check{F}$ is not consistent. 
\end{lemma}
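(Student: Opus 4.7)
The plan is to construct, for any given greedy synchronizer $\mathcal{S}$, an adversarial triple $(\Omega, \mathcal{A}, I)$ whose resulting execution $F$ yields an inconsistent $\check{F}$. I would place the five robots on a line at $((i-1)\cdot 0.9, 0)$ for $i=1,\ldots,5$, so that adjacent pairs are mutually visible while non-adjacent pairs lie at distance $\geq 1.8$ and are not mutually visible.

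In the first round the adversary schedules all five snapshots densely, with $o_1(1)<o_2(1)<\cdots<o_5(1)$ all preceding any Compute completion. Since the initial configuration $\hat{I}$ has every light equal to $Bk$, $\mathcal{S}$ accepts all five cycles, and by the density of the snapshots adjacent cycles are pairwise $\parallel$, so $\omega_1(1),\ldots,\omega_5(1)$ fall into a single $\stackrel{*}{\parallel}$-equivalence class of $\Lambda$. All $\pi_i(1)$ equal the initial positions and non-adjacent pairs remain at distance $\geq 1.8>1$, so this round is trivially consistent; the violation must therefore be engineered later.

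Next I would tailor $\mathcal{A}$ so that $r_3$'s round-$1$ route is a long, simple curve from $(1.8,0)$ to a final position $\pi_3(2)$ inside the unit disk about $r_1$'s nearly stationary resting point $\pi_1(2)$, while the route passes through intermediate points that lie outside that disk. The schedule then places $o_1(2)$ strictly inside $(s_3(1),f_3(1))$, at an instant when $r_3$'s moving position is outside $r_1$'s visibility, so $r_3\not\in S_1(2)$; yet $\pi_3(\ell)$, the resting position of $r_3$ at its next snapshot $o_3(\ell)>f_3(1)$, lies inside the unit disk about $\pi_1(2)$. This is precisely the configuration needed to violate consistency condition~3 across $\omega_1(2)$ and $\omega_3(\ell)$, provided we can force $\omega_1(2)\stackrel{*}{\parallel}\omega_3(\ell)$ to hold within $\Lambda$.

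The chain must be realized through accepted bridging cycles of $r_2$, with the two remaining robots $r_4,r_5$ supplying auxiliary neighbors whose brief activations the adversary uses to manipulate the local colors seen by $r_2$. Since greedy acceptance depends only on whether the visible lights are all $Bk$, the adversary steers $r_2$'s light through $Bk$ at a moment close to $o_1(2)$ and again at a moment close to $o_3(\ell)$, yielding accepted cycles $\omega_2(k)\parallel\omega_1(2)$ and $\omega_2(k')\parallel\omega_3(\ell)$. Composing these bridges gives $\omega_1(2)\stackrel{*}{\parallel}\omega_3(\ell)$ in $\Lambda$ and hence the desired violation of condition~3. The hardest step, I expect, is verifying that these bridging cycles can be arranged against every possible color-assignment rule $\mathcal{S}$ might employ: $\mathcal{S}$ may try to keep $r_2$'s color away from $Bk$ during the critical windows, and the adversary's use of $r_4$ and $r_5$ as auxiliary triggers is precisely what preserves the attack uniformly over all greedy $\mathcal{S}$.
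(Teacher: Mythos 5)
There is a genuine gap, and it lies exactly where you flag it yourself: the bridging cycles of $r_2$ in later rounds. A greedy synchronizer is only constrained in its \emph{acceptance} rule (accept iff all visible colors are $Bk$); its color-\emph{transition} function is arbitrary. After your first round, every robot has executed a Compute and its color is whatever $\mathcal{S}$ chose, and the adversary has no control over this. A greedy $\mathcal{S}$ that, say, switches every robot to a non-$Bk$ color and never returns it to $Bk$ accepts no cycle after round~1; then $\check{F}$ consists only of the round-1 cycles, which you yourself verified are consistent, and your attack produces no counterexample for that $\mathcal{S}$. (Such an $\mathcal{S}$ is not fair and hence not a \emph{correct} synchronizer, but the lemma quantifies over all greedy synchronizers, so the proof must defeat it too.) Your collinear placement forces you into this multi-round strategy, because in a path-shaped visibility graph any two robots that are both mutually visible and $\parallel$-chained in round~1 are adjacent, and with all snapshots preceding all moves no round-1 violation of consistency can arise.

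The paper's proof avoids the problem entirely by staging the violation inside the \emph{first} cycles, where every light is still guaranteed to be $Bk$ and greedy acceptance is therefore forced. It places the robots so that the visibility graph contains a $4$-cycle: $r_1=(0,0)$, $r_2=(0,1)$, $r_3=(1,1)$, $r_4=(1,0)$ (plus $r_5=(2,0)$). Then $r_1$ and $r_4$ are mutually visible at time $0$, and a staircase schedule makes $\omega_1(1)\parallel\omega_2(1)\parallel\omega_3(1)\parallel\omega_4(1)$, hence $\omega_1(1)\stackrel{*}{\parallel}\omega_4(1)$, with all four cycles accepted because all observed colors are still $Bk$. Letting $\mathcal{A}$ move $r_1$ to $(0,3/4)$ before $o_4(1)$ yields $r_4\in S_1(1)$ but $r_1\notin S_4(1)$, violating consistency condition~1 without ever needing to predict or steer the synchronizer's colors. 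If you want to salvage your own argument, you would need to either restrict attention to the initial round as the paper does, or prove that \emph{every} greedy synchronizer must eventually restore $Bk$ along the relevant chain at adversary-chosen instants --- which is false as stated.
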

\begin{proof}
Consider a system of five luminous robots. 
We illustrate an initial part of an execution 
$F \in \mathcal{E}(\Omega, \mathcal{S}(\mathcal{A}), I)$. 
Initially, $r_1, r_2, \ldots, r_5$ are at $(0,0)$, $(0,1)$, $(1,1)$, $(1,0)$, 
and $(2,0)$, respectively in $\mathcal{Z}_0$. 
Thus, 
$\hat{I} = \{((0,0), Bk), ((0,1), Bk), ((1,1), Bk), ((1,0), Bk), ((2, 0), Bk)\}$. 
Since the visibility range is $1$, $r_2$ and $r_4$ are visible from $r_1$, 
but $r_5$ and $r_3$ are not visible from $r_1$. 
The first cycles in $\Omega$ are 
$\omega_1(1) = (0, 3/4, 1)$, 
$\omega_2(1) = (1/2, 5/4, 3/2)$, 
$\omega_3(1) = (1, 7/4, 2)$, 
$\omega_4(1) = (3/2, 9/4, 5/2)$, and 
$\omega_5(1) = (3, 15/4, 4)$.  
Recall that the initial color of light is $Bk$. 
Since $\mathcal{S}$ is greedy, it accepts $\omega_1(1)$, $\omega_2(1)$, 
and $\omega_3(1)$. 

Suppose that $\mathcal{A}$ moves $r_1$ to $(0, 3/4)$ in $\omega_1(1)$. 
Then, at time $3/2$, $r_1$ is not visible from $r_4$. 
Thus, all robots visible from $r_4$ is still have color $Bk$, 
and $\mathcal{S}$ on $r_4$ accepts $\omega_4(1)$. 

Observe that $\omega_1(1) \stackrel{*}{\parallel} \omega_4(1)$. 
Then, $\check{F}$ is not consistent regardless of the rest of $F$, 
because $r_4 \in S_1(1)$ but $r_1 \not\in S_4(1)$. 
\qed 
\end{proof}

We extend Lemma~\ref{lemma:removal} to the color-based synchronizer. 
A fully synchronous scheduler $\FSYNC$ produces a schedule $\Omega$ 
such that $\omega_i(j) = (j-1, j-3/4, j-1/4)$ for all $i$ and $j$. 
Thus, $\FSYNC \subset \SSYNC$ in the sense that if $\Omega \in \FSYNC$, 
$\Omega \in \SSYNC$ holds.  

\begin{theorem}
\label{theorem:removal} 
There exists a rigid system of five luminous robots such that 
for any color-based synchronizer $\mathcal{S}$, 
there is a triple $(\Omega, \mathcal{A}, I)$ such that, 
for some execution 
$F \in \mathcal{E}(\Omega, \mathcal{S}(\mathcal{A}), \hat{I})$, 
$\check{F}$ is not consistent. 
\end{theorem}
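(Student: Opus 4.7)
The plan is to generalise the construction of Lemma~\ref{lemma:removal} to an arbitrary color-based synchronizer, keeping the five-robot initial configuration and the underlying geometric idea. As in the greedy case, the key is to have $r_1$ move out of $r_4$'s visibility before $r_4$ snapshots, while $\omega_1 \stackrel{*}{\parallel} \omega_4$ is still forced through the transitive chain $r_1 \parallel r_2 \parallel r_3 \parallel r_4$. If $\mathcal{S}$ can be made to accept $\omega_4$, the inconsistency $r_4 \in S_1$ and $r_1 \not\in S_4$ carries over unchanged.

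First I would extract from $\mathcal{S}$ a concrete accepting view. Run $\mathcal{S}$ with the zero-motion algorithm $\mathcal{A}_0$ under $\FSYNC$ starting from $\hat{I}$. Since $\mathcal{S}$ is correct, $\Lambda$ is fair, so some round $k$ contains an accepted cycle. Throughout rounds $1, \ldots, k{-}1$ no robot moves (all cycles are rejected), hence the entire light evolution is a deterministic function of $\mathcal{S}$ alone, and I can read off a pair $(c^*, X^*)$ such that, whenever a robot of colour $c^*$ sees exactly the colours in $X^*$, $\mathcal{S}$ accepts.

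Next I would build $(\Omega, \mathcal{A}, I)$ as follows. Replay the first $k{-}1$ FSYNC rounds in ASYNC as a sequence of non-overlapping complete rounds so that the lights of the five robots reach their round-$k$ colours without any movement. Then switch to the timing of Lemma~\ref{lemma:removal} for the cycles $\omega_1(k), \omega_2(k), \omega_3(k), \omega_4(k)$, defining $\mathcal{A}$ so that $r_1$'s accepted move carries it to $(0, 3/4)$, outside the unit disc of $r_4$. When $r_4$ snapshots at time $3/2$, $r_1$ is absent from $S_4$, while $r_3$ and $r_5$ remain visible. It then suffices that $r_4$'s own colour together with the set $X_4$ of colours it currently sees equals $(c^*, X^*)$; then $\mathcal{S}$ deterministically accepts $\omega_4(k)$ and the inconsistency argument of Lemma~\ref{lemma:removal} applies verbatim.

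The main obstacle is this matching step. Because $X_i(j)$ is a \emph{set} of colours (not a multiset), it reduces to arranging that $r_1$'s round-$k$ colour coincides with the colour of some other visible neighbour of $r_4$, so that removing $r_1$ from $S_4$ does not perturb $X^*$. Since the palette $C$ is of fixed finite size, this can be forced by inserting extra preparatory rounds before round $k$ and invoking pigeonhole on the finitely many $(c, X)$ patterns that $\mathcal{S}$ can exhibit along the way; alternatively, one can reindex which robot plays the role of ``$r_4$'' in the inconsistency argument, exploiting the fact that the five-robot configuration has enough symmetry to move the accepting role to whichever robot is isolated from the mover $r_1$.
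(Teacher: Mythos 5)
Your overall strategy is the paper's: bootstrap under a fully synchronous schedule from $\hat{I}$ (where a color-based synchronizer keeps all five lights identical, so fairness yields a first round $j_0$ in which every robot accepts with own color $c_{j_0-1}$ and observed color set $\{c_{j_0-1}\}$), then graft the staggered timing of Lemma~\ref{lemma:removal} onto round $j_0$. Where you diverge is at the decisive step: you declare that making $r_4$'s view at its Look equal the accepting pattern $(c^*,X^*)$ is the ``main obstacle,'' and you offer only an unworked pigeonhole argument over ``finitely many $(c,X)$ patterns'' plus an unspecified reindexing of which robot plays $r_4$. Neither is carried out, and neither is obviously sound as stated (pigeonhole over which set, and why would extra preparatory rounds do anything but reproduce the same uniform-color pattern?). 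As written, the one claim that actually decides the theorem is left unproven.

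The obstacle is in fact illusory, and the paper closes it with a timing observation you already have all the ingredients for: a light's new color becomes visible only at the start of Move. In the staggered round, $r_4$ Looks at relative time $3/2$; its visible neighbours are then $r_3$ (which Looked at $1$ but recolors only at $7/4$) and $r_5$ (which has not yet Looked), both still displaying $c_{j_0-1}$, while $r_1$ --- the only robot that has already recolored, at $3/4$ --- has moved to $(0,3/4)$, at distance $5/4$ from $r_4$ and hence invisible. So $c_4=c^*$ and $X_4=\{c_{j_0-1}\}=X^*$ exactly, $\mathcal{S}$ accepts $\omega_4(j_0)$, and the violation of consistency ($\omega_1(j_0)\stackrel{*}{\parallel}\omega_4(j_0)$ via the chain through $r_2,r_3$, with $r_4\in S_1$ but $r_1\notin S_4$) follows verbatim from Lemma~\ref{lemma:removal}. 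Your worry that removing $r_1$ from $S_4$ might ``perturb $X^*$'' conflates $r_4$'s actual view with the counterfactual one in which $r_1$ stays put and shows its new color; only the former matters, and it contains no recolored robot. With that one observation substituted for the pigeonhole paragraph, your argument coincides with the paper's proof.
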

\begin{proof}
Remember the execution $F \in \mathcal{E}(\Omega, \mathcal{S}(\mathcal{A}), I)$ 
for five luminous robots  in the proof of Lemma~\ref{lemma:removal}. 
The counter example relies on the assumptions 
that $\mathcal{S}$ is greedy and 
that the initial color of each robot is $Bk$. 
We show that there is a $\Lambda$ which changes the configuration to 
a one such that $\mathcal{S}$ on each robot accepts the current cycle. 

Consider an execution 
$F \in \mathcal{E}(\Lambda, \mathcal{S}(\mathcal{A}), \hat{I})$, 
where $\Lambda \in \FSYNC$. 
Then, all lights have the same color $c_j$ 
when the robots simultaneously start their $j$th cycles 
for any $j \in {\mathbf N}$ since $\mathcal{S}$ is color-based. 
Hence, there is a $j_0$ such that each robot $r_i$ accepts 
$\omega_i(j_0)$ since $\mathcal{S}$ is fair. 
That is, $\mathcal{S}$ on each robot $r_i$ accepts $\omega_i(j_0)$, 
since it accepts a cycle when $c = c_{j_0-1}$ and $X=\{C_{j_0-1}\}$. 
We assume without loss of generality that $\omega_i(j_0)$ is the first 
cycle accepted by $\mathcal{S}$. 

Now the configuration is 
$\{(p, c_{j_0-1}) \mid (p, Bk) \in \hat{I} \}$ 
immediately before the $j_0$th cycle starts at time $j_0-1$. 
We replace $\omega_i(j_0)$ for each $r_i \in \Lambda$ with 
\begin{itemize}
\item 
$\omega_1(j_0) = (j_0-1, (j_0-1) + 3/4, (j_0-1)+1)$, 
\item 
$\omega_2(j_0) = ((j_0-1) + 1/2, (j_0-1) + 5/4, (j_0-1) + 3/2)$, 
\item 
$\omega_3(j_0) = ((j_0-1) + 1, (j_0-1) + 7/4, (j_0-1) + 2)$, 
\item 
$\omega_4(j_0) = ((j_0-1) + 3/2, (j_0-1) + 9/4, (j_0-1) + 5/2)$, and 
\item 
$\omega_2(j_0) = ((j_0-1) + 3, (j_0-1) + 15/4, (j_0-1) + 4)$. 
\end{itemize}
Then, the same argument as the proof of Lemma~\ref{lemma:removal} 
concludes the theorem.  
\qed 
\end{proof}

The above example shows that 
there exists no color-based synchronizer that works correctly 
if the algorithm is not visibility preserving. 

\subsection{Color-based synchronizer for vicinity preserving algorithms} 

Lemma~\ref{lemma:removal} and Theorem~\ref{theorem:removal} demonstrate 
that there is no color-based synchronizer for an arbitrary 
visibility preserving algorithm. 
Moreover, it is difficult for oblivious luminous robots 
to satisfy the second condition of naturality, 
because it requires remembering the positions of other robots.
In this section, we consider algorithms 
with more restricted changes in the visibility graph and 
propose a color-based synchronizer for such algorithms. 

An execution $E \in \mathcal{E}(\Omega, \mathcal{A}, I)$ for non-luminous robots 
is {\em vicinity preserving}, 
if the following condition holds: 
For any $r_i$, $r_{i'}$, and for any time $t, t' \in {\mathbf R}^+$, 
$dist(p_i(t), p_{i'}(t')) \leq 1$ if and only if 
$dist (p_i(0), p_{i'}(0)) \leq 1$. 
In other words, $r_i$ has to stay in the vicinity of its initial position. 
We say that an algorithm $\mathcal{A}$ is {\em vicinity preserving}, 
if every execution $E \in \mathcal{E}(\Omega, \mathcal{A}, I)$ is vicinity preserving 
for any $\Omega \in \SSYNC$ and $I$. 
In this section, we propose a color-based synchronizer 
$\mathcal{S}_{VP}$ 
that uses a set 
$C = \{Bk, R, B, G, W\}$ of colors 
and show its correctness provided that $\mathcal{A}$ is vicinity preserving 
and designed for non-luminous SSYNC robots. 
We describe $\mathcal{S}_{VP}$ as a finite-state machine 
with a state set $C$, an input alphabet $2^C$, 
and an output alphabet $\{\text{accept}, \text{reject} \}$. 
When $\mathcal{S}_{VP}$ is executed on $r_i$, 
the sate of $r_i$ is the color (of the light) of $r_i$ 
and the input is the set $X$ of colors of the robots visible from $r_i$, excluding $r_i$'s color. 
Table~\ref{table:S-ST} shows the transition function and 
the output function of $\mathcal{S}_{VP}$, where 
\begin{itemize}
\item 
$\exists c$ means any $X$ such that $c \in X$, 
\item 
$\forall(c_1, c_2, \ldots, c_k)$ means any $X$ such that 
$X \subseteq \{c_1, c_2, \ldots, c_k\}$, and 
\item 
$\exists c \ \wedge \ \forall(c_1, c_2, \ldots, c_k)$ means 
any $X$ such that $c \in X$ and $X \subseteq \{c_1, c_2, \ldots, c_k\}$. 
\end{itemize}
The initial state of each robots is $Bk$. 
We assume that without loss of generality, 
the visibility graph of the initial configuration $I$ is connected. 
Since $\mathcal{A}$ is vicinity preserving, 
let $S_i$ be the set of neighbors of $r_i$ 
in the visibility graph which is defined by $I$. 

\begin{table}[t]
\centering 
\caption{Finite-state machine $\mathcal{S}_{VP}$.}
\begin{tabular}{|c|l|c|c|} \hline
Current state & Input & Next state & Output \\ \hline \hline
Bk & $\forall (Bk,B,W)$ & $R$ & accept \\ \cline{2-4}
      & $\exists R \wedge \forall (Bk,R,B,W)$ & $W$ & reject \\  \hline
R & $\forall (R,B,W)$ & $B$ & reject \\ \hline
B & $\forall (B,G)$ & $G$ & reject \\ \hline
G & $\forall (Bk,G)$ & $Bk$ & reject \\ \hline
W & $\forall (B,W)$ & $Bk$ & reject \\ \hline
\end{tabular}
\label{table:S-ST}
\end{table}

Robot $r_i$ is waiting when its state is $Bk$ (Black) 
and moving when its state is $R$ (Red). 
It rejects the current cycle when its state is $Bk$ and 
observes another robot in state $R$, 
and changes its state to $W$ (White). 
It changes its state from $W$ to $Bk$ when it does not observe any robot in state $R$. 
Robot $r_i$ has finished moving when its state is $B$ (Blue) and 
it changes its state to $G$ (Green) when the states of robots in $S_i$ 
are $B$ and $G$. 
Finally, it changes its state to $Bk$ when the states of robots in $S_i$ are $G$ and $Bk$. 

Let $F \in \mathcal{E}(\Omega, \mathcal{S}_{VP}(\mathcal{A}), \hat{I})$ 
be any execution of $\mathcal{S}_{VP}$ for a vicinity preserving algorithm $\mathcal{A}$, 
and let $\Lambda \subseteq \Omega$ be the set of accepted cycles 
(which depends on $F$). 
Then, $\check{F} \in \mathcal{E}(\Lambda, \mathcal{A}, I)$ is an 
execution of non-luminous robots. 
We will show that $\check{F}$ 
has a corresponding SSYNC execution for non-luminous robots. 
We demonstrate that (i) $\Lambda$ is fair, 
(ii) $\check{F}$ satisfies stationarity, pairwise alignment, 
consistency, serializability, and naturality. 

\begin{lemma}
\label{lemma:fairness} 
$\Lambda$ is fair. 
\end{lemma}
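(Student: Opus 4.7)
The plan is to argue by contradiction. Suppose $\Lambda$ is not fair; then there exist a robot $r_i$ and a time $T$ such that no cycle $\omega_i(j)$ with $o_i(j) > T$ is accepted. Since $\Omega$ is fair, $r_i$ still executes infinitely many cycles after $T$, and its color evolves only through rejecting rows of Table~\ref{table:S-ST}.

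I would first inspect the state machine and note that its rows form two closed walks returning to $Bk$: the accepting walk $Bk \to R \to B \to G \to Bk$ and the purely rejecting walk $Bk \to W \to Bk$. Hence, as long as $r_i$ is not frozen in a single state, its color must keep revisiting $Bk$ after $T$, and at every such visit non-acceptance forces the observed input to satisfy $\exists R \wedge \forall(Bk, R, B, W)$; in particular, some neighbor of $r_i$ must then have color $R$.

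Next I would rule out $r_i$ being frozen in any non-$Bk$ state. Because $\mathcal{A}$ is vicinity preserving, the set $S_i$ of neighbors of $r_i$ in the visibility graph of $I$ is fixed forever, so each leave-state condition depends only on the colors of this fixed finite set. For each of $R, B, G, W$ I would identify which neighbor colors would have to persist to block the transition and, using the fairness of $\Omega$ applied to each neighbor, argue inductively that the required pattern must eventually appear in some look of $r_i$. This state-by-state analysis reduces the whole problem to the $Bk \to W \to Bk$ detour: after time $T$, $r_i$ must cycle through $Bk$ and $W$, with a neighbor of color $R$ visible at every one of its looks.

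The hard part will be ruling out this chronic chain-deadlock. My plan is to work inside the connected component $K$ of $r_i$ in the time-invariant visibility graph of $I$ and to define a potential function on the multiset of states occurring in $K$ — ordered so that ``completing a global round of the synchronizer'' strictly decreases it — and then to show, using fairness of $\Omega$ and the local transition rules established in the previous step, that the potential must strictly decrease over any sufficiently long window in which $r_i$ does not accept. A careful bookkeeping will be needed because a robot moving from $G$ back to $Bk$ can temporarily increase some naive counts, so the ordering must be chosen to treat ``finished but not yet reset'' states as more advanced than $Bk$. Establishing monotonicity of the resulting potential is the key obstacle; once it is in place, the contradiction with $r_i$ never accepting after $T$ is immediate.
\qed
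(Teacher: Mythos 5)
Your proposal correctly identifies the shape of the problem --- the two loops $Bk \to R \to B \to G \to Bk$ and $Bk \to W \to Bk$, the need to rule out freezing, and the fact that the chronic $Bk \to W \to Bk$ detour is the crux --- but it stops exactly where the mathematical content begins. The ``potential function on the multiset of states in $K$'' is never defined, and you yourself flag its monotonicity as ``the key obstacle.'' That obstacle \emph{is} the lemma. The paper discharges it with two concrete devices you would need some version of: (a) a local round-synchronization invariant, proved by induction after collapsing $\{Bk, R, W\}$ into a single virtual state $Y$, showing that at every Look of $r_i$ each neighbor's state is within one transition of $r_i$'s in the cyclic progression $Y^{(k)} \to B^{(k)} \to G^{(k)} \to Y^{(k+1)}$; and (b) a charging argument for the detour: $r_i$ moves $Bk \to W$ only when some neighbor is $R$, and any such neighbor subsequently reaches $B$ and cannot re-enter the $\{Bk,R,W\}$ sub-loop for the rest of the current round, so $r_i$ repeats the detour at most $n-1$ times (and the $W \to Bk$ waits are bounded by $(n-1)^2$ across the neighborhood). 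This finite charging bound is precisely the ``strictly decreasing potential'' you are hoping for; without exhibiting it, the argument is a plan rather than a proof.

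There is also a local error: you assert that when $r_i$ is at $Bk$ and does not accept, its input must satisfy $\exists R \wedge \forall(Bk,R,B,W)$, hence some neighbor is $R$. That is false. A robot at $Bk$ may instead see a neighbor still colored $G$ from the previous round, in which case \emph{neither} row of Table~\ref{table:S-ST} applies and $r_i$ simply stays at $Bk$ without transitioning to $W$. Ruling out an indefinite wait of this kind is exactly what the round-indexed induction (your ``rule out freezing'' step, which must include freezing \emph{at} $Bk$, not only at non-$Bk$ states) has to handle, and it interacts with the detour-counting argument rather than being separable from it.
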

\begin{proof}
We show that every robot reaches state $R$ infinitely many times in 
$F$. 
We first regard three states $Bk$, $W$, and $R$ as a virtual state 
$Y$. 
Every robot starts with state $Y$. 
When the state of a robot $r_i$ is $Y$ (i.e., $R$) and 
the sate of each robot $r_{i'} \in S_i$ is either $Y$ 
(i.e., $R$ or $W$) or $B$, 
then it can change its state to $B$. 
When the state of $r_i$ is $G$ and the state of each robot 
$r_{i'} \in S_i$ is either $G$ or $Y$ (i.e., $Bk$), 
then it can change its state to $Y$ (i.e., $Bk$). 
For the time being, we assume that if the state of $r_i$ is $Bk$, 
then it will eventually change its state to $R$. 
We show that every robot $r_i$ reaches state $Y$ infinitely 
many times in $F$. 
This implies that it reaches state $R$ infinitely many times. 
Then, we can conclude that $\check{F}$ is fair, 
because $r_i$ changes its color to $R$ in $\omega_i(j)$ 
if and only if $\omega_i(j)$ is accepted by $\mathcal{S}_{VP}$. 

Consider any robot $r_i$. 
Let $\Psi_i = (\psi_i(1), \psi_i(2), \ldots) \subseteq \Omega_i$, 
where $\psi_i(j)$ is the cycle of $r_i$ in which 
it changes its state for the $j$th time. 
The state of $r_i$ changes from state $Y$ to $B$ in $\psi_i(1)$, 
from state $B$ to $G$ in $\psi_i(2)$, 
from state $G$ to $Y$ in $\psi_i(3)$, ans so on. 
For each $C \in \{Y, B, G\}$, 
$C^{(k)}$ denotes the the state that $r_i$ takes $C$ for the 
$k$th time. 
Thus, the state of $r_i$ changes from state $Y^{(1)}$ to $B^{(1)}$ in $\psi_i(1)$, 
from state $B^{(1)}$ to $G^{(1)}$ in $\psi_i(2)$, 
from state $G^{(1)}$ to $Y^{(2)}$ in $\psi_i(3)$, and so on. 
Independently of $i$, in $\psi_i(j)$, 
$r_i$ changes its state from $c(j)$ to $c(j+1)$, where 
$c(j) = Y^{(\lfloor j/3 \rfloor +1)}$ if $j \pmod{3} = 1$, 
$c(j) = B^{(\lfloor j/3 \rfloor +1)}$ if $j \pmod{3} = 2$, and 
$c(j) = G^{(\lfloor j/3 \rfloor)}$ if $j \pmod{3} = 0$. 

Let $\psi_i(j) = (o_i(j), s_i(j), f_i(j))$. 
Then, $r_i$ takes a snapshot at $o_i(j)$ 
and changes the color of its light (i.e., its state) by $s_i(j)$. 
The new color becomes visible from other robots at $s_i(j)$. 
Thus, the state of $r_i$ at $o_i(j)$ is $c(j)$ and 
is $c(j+1)$ at $s_i(j)$. 

For any robot $r_{\ell}$, let $\sigma_{\ell}(t)$ be the sate of $r_{\ell}$ 
at time $t \in {\mathbf R}^+$. 
Thus, $\sigma_i(o_i(j)) = c(j)$ and $\sigma_i(s_i(j)) = c(j+1)$. 
We first claim that 
for any $j \in {\mathbf N}$, 
$\sigma_{i'}(o_i(j))$ is either $c(j)$ or $c(j+1)$ for any robot 
$r_{i'} \in S_i$. 
The proof is by induction on $j$. 

When $j=1$, $c_i(o_i(1)) = c(1) = Y^{(1)}$. 
Suppose that $\sigma_{i'}(o_i(j)) = c(j')$ for some $j'\geq 3$. 
Then $r_{i'}$ changes its state from $B^{(1)}$ to $G^{(1)}$ 
in $\psi_{i'}(2)$ and $o_{i'}(2) < o_i(1)$. 
It is a contradiction, because $r_i \in S_{i'}$ and 
$\sigma_i(o_{i'}(2)) = Y^{(1)}$. 
Thus, $\sigma_{i'}(o_i(1))$ is either $c(1) (= Y^{(1)})$ 
or $c(2) (= B^{(1)})$. 

Provided that $\sigma_{i'}(o_i(j))$ is either $c(j)$ or $c(j+1)$, 
we show that $\sigma_{i'}(o_i(j+1))$ is either $c(j+1)$ or $c(j+2)$
. 
By definition, if $\sigma_{i'}(o_i(j+1)) = c(\ell)$, then $\ell \geq j$. 
If $\ell = j$, then $r_i$ cannot change its state from $c(j+1)$ to $c(j+2)$ 
in $\psi_i(j+1)$. 
Thus, $\ell \geq j+1$. 
Suppose that $\ell \geq j+3$. In time interval $[o_i(j), s_i(j))$, 
the state of $r_i$ is still $c_i(j)$. 
During this interval, the state of $r_i'$ is either $c(j)$ or $c(j+1)$. 
Thus, by the same argument as the base case for $\psi_{i'}(j+2)$, 
a contradiction is derived. 

To show that $\Psi_i$ is an infinite sequence, 
we assume that it is a finite sequence and derive a contradiction. 
Let $h$ be the length of $\Psi_i$, 
i.e., $\psi_i(h)$ is the last cycle of $\Psi_i$. 
By the claim above, $\Psi_k$ is finite for any $k$. 
Without loss of generality, we assume that $\Psi_i$ is the shortest one. 
By the claim, the length of $\Psi_{i'}$ is either $h$ or $h+1$, 
if $r_{i'} \in S_i$. 
Let $t^*$ be a time instant that $r_i$ and all $r_{i'} \in S_i$ have 
finished their last cycles in $\Psi$. 
Since $\Omega$ is fair, there is a cycle $\omega = (o,s,f) \in \Omega_i$ 
such that $t^* < o$. 
The state of $r_i$ is $c(h+1)$ at $o$, 
and the state of each $r_{i'} \in S_i$ is either $c_i(h+1)$ or $c_i(h+2)$ at $o$. 
Thus, the state of $r_i$ changes in $\omega$, and hence $\omega \in \Psi_i$.  
It is a contradiction. 

We next show that, for all $k \in {\mathbf N}$, 
if the state of a robot $r_i$ is $Bk$, 
then it will eventually change its state to $R$ in $Y^{(k)}$ 
for any $k \in {\mathbf N}$. 
The proof is by induction on $k$. 

\noindent{\bf Base Case (when $k=1$):}~ 
Let $r_{i'} \in S_i$. 
The state of $r_{i'}$ is either $Y^{(1)}$ or $B^{(1)}$ 
(and not $G^{(1)}$) as long as the state of $r_i$ is $Y^{(1)}$ 
(i.e., either $Bk$, $R$, or $W$). 
Recall the notation 
$\omega_{i}(j) = (o_i(j), s_i(j), f_i(j))$ for all $i$ and $j$. 

By $\mathcal{S}_{VP}$, if $\sigma_i(o_i(j)) = Bk$, 
then it changes its state either $W$ or $R$ in $\omega_i(j)$. 
It changes its state to $W$ if there is an $r_{i'} \in S_i$ 
such that $\sigma_{i'}(o_i(1)) = R$, otherwise 
it changes its state to $R$. 

Suppose that $\sigma_i(o_i(j)) = R$. 
If $\sigma_{i'}(o_i(j)) \in \{R, B, W\}$ for each $r_{i'} \in S_i$, 
since the state of $r_i$ is $Y^{(1)}$ 
then $r_i$ changes its state to $B$ in $\omega_i(j)$. 
Otherwise, if there is an $r_{i'} \in S_i$ such that 
$\sigma_{i'}(o_i(j)) = Bk$, by the observation above, 
$r_{i'}$ changes its state to $R$ or $W$ in $\omega_{i'}(j')$, 
where $j'$ satisfies $o_{i'}(j'-1) < o_i(j) \leq o_{i'}(j')$. 
Furthermore, if $r_{i'}$ changes its state state to $W$, 
it maintains the state as long as the state of $r_i$ is $R$. 
Thus, $r_i$ eventually changes its state to $B$. 

We show that $r_i$ whose state is $Bk$ will eventually change  
its state to $R$ after repeating the loop between $Bk$ and $W$ 
a finite number of times. 
Suppose that $\sigma_i(o_i(j)) = Bk$, and 
let $R_i(j)$ be the set of $r_{i'} \in S_i$ such that 
$\sigma_{i'}(o_i(j)) = R$. 
Robot $r_i$ changes its state to $W$ in $\omega_i(j)$ 
if and only if $R_i(j) = \emptyset$. 
Since $|S_i| < n$, 
$r_i$ will eventually change its state to $R$ after 
repeating the loop at most $n-1$ times, 
since any robot $r_{i'} \in S_i$ with state $B$ will never 
return to $Bk$ as long as the state of $r_i$ is $X^{(1)}$. 

Finally, we show that $r_i$ whose state is $W$ will eventually 
change its state to $Bk$. 
Suppose that $\sigma_i(o_i(j)) = W$. 
If $\sigma_{i'}(o_i(j)) \in \{B, W\}$ for each $r_{i'} \in S_i$, 
then $r_i$ changes its state to $Bk$ in $\omega_i(j)$. 
Otherwise, if there is an $r_{i'} \in S_i$ such that 
$\sigma_{i'}(o_i(j)) \in \{Bk, R\}$, 
it does not change its state in $\omega_i(j)$. 
If $\sigma_{i'}(o_i(j)) = R$, 
then $r_{i'}$ will eventually change its state to $B$. 

If $\sigma_{i'}(o_i(j)) = Bk$, then $r_{i'}$ changes its state to $W$ 
in $\omega_{i'}(j')$, 
where $j'$ satisfies $o_{i'}(j') < o_i(j) \leq o_{i'}(j'+1)$. 
Thus, there is a $r_k \in S_i \cup \{r_i\}$ and $\ell \in {\mathbf N}$ such that 
$o_i(j) < o_k(\ell)$ such that $r_k$ changes its state from 
$W$ to $Bk$ in $\omega_k(\ell)$. 
Since the total number of times that robots other than $r_i$ 
repeat the loops is bounded by $(n-1)^2$, 
$r_i$ will eventually change its state from $W$ to $Bk$. 
The proof of the base case completes. 

\noindent{\bf Induction Step:}~
Suppose that $\sigma_i(o_i(j)) = G^{(k-1)}$ and 
$\sigma_{i'}(o_i(j)) \in \{Bk, G^{(k-1)}\}$ 
for all $r_{i'} \in S_i$. 
Then, $r_i$ changes its state from $G^{k-1}$ to $Bk$ in $\omega_i(j)$. 
The state of $r_i$ is $Bk$ as long as there is an $r_{i'} \in S_i$ 
such that $\sigma_{i'}(o_i(j)) = G^{(k-1)}$. 
We show that $r_{i'}$ will eventually change its state from 
$G^{(k-1)}$ to $Bk$. 
If $r_{i'}$ cannot change its state from $G^{(k-1)}$, 
then there is an $r_{i''} \in S_{i'}$ whose state is neither 
$Bk$ nor $G^{(k-1)}$. 
If the state of $r_{i''}$ is $B^{(k-1)}$, 
then it will eventually change its state to $G^{(k-1)}$. 
If it is $W$ or $R$, we can derive a contradiction, 
because $r_{i''}$ can change its state to $W$ or $R$ when the 
state of $r_i$ is $G^{(k-1)}$. 
Thus, we can apply the proof for the base case to complete the induction step. 
\qed 
\end{proof}

\begin{lemma}
\label{lemma:stationary} 
$\check{F}$ is stationary. 
\end{lemma}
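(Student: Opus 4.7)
The plan is to derive stationarity directly from the acceptance rule of $\mathcal{S}_{VP}$: the only entry in Table~\ref{table:S-ST} that outputs \emph{accept} requires the current state to be $Bk$ and the observed set of other colors to be contained in $\{Bk,B,W\}$, and an accepted cycle is precisely one that sets the robot's light to $R$ at the end of Compute. Thus a ``robot in its move phase after acceptance'' is the same thing as ``a robot whose light currently displays $R$'', and the acceptance rule forbids such a color from being seen by another accepting robot.

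Concretely, I would fix an arbitrary pair of accepted cycles $\omega_i(j),\omega_{i'}(j')\in\Lambda$ with $r_{i'}\in S_i(j)$ and assume for contradiction that $o_i(j)\in(s_{i'}(j'),f_{i'}(j'))$. First I would observe that, because $\omega_{i'}(j')$ is accepted, the $Bk$-row of Table~\ref{table:S-ST} forces $r_{i'}$'s light to switch to $R$ at $s_{i'}(j')$. The only table entries with source state $R$ fire during a later Compute phase, whose snapshot time $o_{i'}(j'+1)$ strictly exceeds $f_{i'}(j')$; hence $r_{i'}$'s color is $R$ throughout the whole open interval $(s_{i'}(j'),f_{i'}(j'))$, and in particular at time $o_i(j)$. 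Second, because $\omega_i(j)$ is accepted, the same table forces $X_i(j)\subseteq\{Bk,B,W\}$, so no robot visible from $r_i$ at $o_i(j)$ carries color $R$. Since $r_{i'}\in S_i(j)$, this contradicts the conclusion of the first step, and stationarity follows.

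Two small bookkeeping items remain. First, the argument above concerns $F\in\mathcal{E}(\Omega,\mathcal{S}_{VP}(\mathcal{A}),\hat I)$ whereas the lemma is about $\check F\in\mathcal{E}(\Lambda,\mathcal{A},I)$; but in rejected cycles $\mathcal{S}_{VP}$ computes the null route $(0,0)$, so no robot ever moves outside an accepted cycle, and positions, footprints, and the visibility relations $S_i(j)$ transfer verbatim from $F$ to $\check F$. Second, one might worry about the instant $o_i(j)=s_{i'}(j')$ at which the light atomically switches to $R$, but the stationarity definition explicitly uses the open interval $(s_{i'}(j'),f_{i'}(j'))$, so the endpoints need no special treatment. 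I do not foresee a substantive obstacle: the proof is essentially a single application of the acceptance row of Table~\ref{table:S-ST}, with the invariant that $R$ persists across the entire move phase of any accepted cycle.
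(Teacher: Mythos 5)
Your proposal is correct and follows essentially the same route as the paper: an accepted cycle forces the robot's light to $R$ throughout its Move phase, and the acceptance row of Table~\ref{table:S-ST} forbids an accepting robot from seeing color $R$, so no accepted Look can fall inside another accepted neighbor's Move interval. The paper's proof is just a terser statement of the same contradiction, without the (harmless but welcome) bookkeeping about transferring from $F$ to $\check{F}$.
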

\begin{proof}
A robot $r_i$ can move in $\omega_i(j)$ if the cycle is accepted. 
If $\omega_i(j)$ is accepted, 
the state of $r_i$ is $R$ during the interval $[s_i(j), f_i(j)]$. 
If a cycle $\omega_{i'}(j')$ of a robot $r_{i'} \in S_i$ satisfies 
that $\sigma_i(o_{i'}(j')) = R$, then $\omega_{i'}(j')$ is rejected. 
Thus, $\omega_{i'}(j')$ is accepted, 
only if $o_{i'}(j') \not\in [s_i(j), f_i(j)]$. 
\qed 
\end{proof}

\begin{lemma}
\label{lemma:aligned}
$\check{F}$ is pairwise aligned. 
\end{lemma}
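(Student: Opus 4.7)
The plan is to verify that any two overlapping accepted cycles $\omega_i(j), \omega_{i'}(j') \in \Lambda$ with $i \neq i'$ are in the $\parallel$ relation. Without loss of generality, assume $o_i(j) \leq o_{i'}(j')$; I aim at condition~2 of concurrency, namely $r_{i'} \in S_i(j)$, $o_{i'}(j') \in [o_i(j), s_i(j)]$, and $o_i(j) \in (f_{i'}(j'-1), o_{i'}(j')]$. The visibility requirement is immediate from vicinity preservation of $\mathcal{A}$: the visibility graph is invariant, so $r_i$ and $r_{i'}$ are mutually visible at every instant.

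To pin down $o_{i'}(j')$, I would track $r_i$'s color. Acceptance of $\omega_i(j)$ forces the light of $r_i$ to transition $Bk \to R$ at $s_i(j)$; by inspection of $\mathcal{S}_{VP}$ the state of $r_i$ can change only at its own $s_i(\cdot)$ events, so the color of $r_i$ is $R$ throughout $[s_i(j), s_i(j+1))$. Overlap gives $o_{i'}(j') \leq f_i(j) < o_i(j+1) < s_i(j+1)$, so if $o_{i'}(j') \geq s_i(j)$, then the snapshot of $r_{i'}$ sees $r_i$ in color $R$, violating the $\forall(Bk, B, W)$ input required for $r_{i'}$ to accept $\omega_{i'}(j')$. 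Hence $o_{i'}(j') \in [o_i(j), s_i(j))$.

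The crux is showing $o_i(j) > f_{i'}(j'-1)$ (vacuous when $j'=1$). Suppose for contradiction that $o_i(j) \leq f_{i'}(j'-1)$. Lemma~\ref{lemma:stationary} together with $r_{i'} \in S_i(j)$ excludes $o_i(j) \in (s_{i'}(j'-1), f_{i'}(j'-1))$, and at the boundary $o_i(j) = f_{i'}(j'-1)$ the color $R$ of $r_{i'}$ blocks $r_i$'s acceptance; thus $o_i(j) \leq s_{i'}(j'-1)$. Now the decisive observation: between the two $\Lambda$-consecutive accepted cycles $\omega_{i'}(j'-1)$ and $\omega_{i'}(j')$ (indices understood in $\Omega$), the state machine forces the color of $r_{i'}$ to traverse $R \to B \to G \to Bk$, so some intermediate rejected cycle $\omega_{i'}(k)$ must perform the $R \to B$ transition at $s_{i'}(k)$, which demands the color of $r_i$ at $o_{i'}(k)$ to lie in $\{R, B, W\}$. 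However, $o_{i'}(k) > f_{i'}(j'-1) \geq s_{i'}(j'-1) \geq o_i(j)$ and $o_{i'}(k) < o_{i'}(j') < s_i(j)$ place $o_{i'}(k)$ strictly inside $(o_i(j), s_i(j))$, an interval on which $r_i$'s color is constantly $Bk$---a contradiction. I expect this third step to be the main obstacle: it is the only step that genuinely exploits the fine structure of the $\mathcal{S}_{VP}$ state machine in combination with both stationarity and vicinity preservation, and the boundary arguments at the $s$- and $f$-events require a careful bookkeeping that the first two steps do not.
\qed
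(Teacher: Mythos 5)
Your proof is correct and follows essentially the same approach as the paper's: both arguments hinge on the facts that $r_i$'s light stays $Bk$ throughout $[o_i(j),s_i(j))$ while $r_{i'}$ must traverse $R \to B \to G \to Bk$ between consecutive accepted cycles, so one of these forced transitions would have to occur while $r_{i'}$ still sees a $Bk$ neighbor. The only cosmetic differences are that you verify the two timing conditions of concurrency directly and use the $R\to B$ transition as the impossible witness, whereas the paper argues by contradiction from the configuration of two consecutive accepted cycles of $r_{i'}$ overlapping $\omega_i(j)$ and uses the $B\to G$ transition.
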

\begin{proof}
Suppose that $\check{F}$ is not pairwise aligned. 
In $\Lambda$, there are cycles $\omega_i(j)$, $\omega_{i'}(j')$, 
and $\omega_{i'}(j'+ \ell)$ satisfying the following conditions: 
\begin{itemize}
\item 
$r_{i'} \in S_i$ (thus, $r_i \in S_{i'}$), 
\item 
$\omega_{i}(j)$ and $\omega_{i'}(j')$ overlap each other, 
\item 
$\omega_i(j)$ and $\omega_{i'}(j'+ \ell)$ overlap each other, 
and 
\item 
$\omega_{i'}(j'+1), \omega_{i'}(j'+2), \ldots, \omega_{i'}(j'+ \ell-1)$ 
are not accepted (hence, they are not the elements of $\Lambda$). 
\end{itemize}
Since $\check{F}$ is stationary, we have 
\begin{equation*}
o_{i'}(j') \leq o_i(j) < s_{i'}(j') < f_{i'}(j') 
< o_{i'}(j' + \ell ) < s_i(j). 
\end{equation*}
Then, for some $0 < k < \ell$, 
there is a cycle $\omega_{i'}(j' + k)$ 
where $r_{i'}$ changes its state from $B$ to $G$. 
It is a contradiction since $\sigma_i(o_{i'}(j')+k) = Bk$. 
\qed 
\end{proof}

Recall that $\Psi_i = (\psi_i(1), \psi_i(2), \ldots) \subseteq \Omega_i$, 
where $\psi_i(j)$ is the cycle of $r_i$ in which it changes its state 
for the $j$th time. 
Scheduler $\mathcal{S}_{VP}$ accepts every cycle $\psi_i \in \Psi_i$ 
in which $r_i$ changes its state from $Bk$ to $R$ and 
accepts no other cycles. 

\begin{proposition}
\label{prop:}
Suppose that $\mathcal{A}$ is vicinity preserving. 
Then if any pair of cycles $\psi_i(j)$ and $\psi_{i'}(j')$ 
in $\Lambda$ 
such that (in $\check{F}$) $r_{i'} \in S_i$ and 
$\psi_i(j) \stackrel{*}{\parallel} \psi_{i'}(j')$ satisfies 
that $\psi_i(j) \parallel \psi_{i'}(j')$, 
then $\check{F}$ is consistent. 
\end{proposition}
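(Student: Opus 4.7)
The plan is to verify the three clauses of the consistency definition in turn for an arbitrary pair $\psi_i(j),\psi_{i'}(j')\in\Lambda$ with $\psi_i(j)\stackrel{*}{\parallel}\psi_{i'}(j')$. The central lever is vicinity preservation, which collapses the time-dependent visibility relation $r_{i'}\in S_i(j)$ to the time-independent initial visibility $r_{i'}\in S_i$, and likewise controls cross-time distances. After this translation, the explicit hypothesis of the proposition (directly concurrent whenever the underlying robots are initial neighbours) supplies the only nontrivial missing piece.

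First I would dispatch Condition~1: because $\mathcal{A}$ is vicinity preserving, $r_{i'}\in S_i(j)$ iff $r_{i'}\in S_i$ and $r_i\in S_{i'}(j')$ iff $r_i\in S_{i'}$, and symmetry of the initial visibility graph finishes the equivalence. For Condition~2, assume $r_{i'}\in S_i(j)$; the translation above gives $r_{i'}\in S_i$, so the pair $(\psi_i(j),\psi_{i'}(j'))$ falls under the hypothesis of the proposition and $\psi_i(j)\parallel\psi_{i'}(j')$ follows directly. For Condition~3, assume $r_{i'}\not\in S_i(j)$; then vicinity preservation forces $dist(p_i(0),p_{i'}(0))>1$, and the cross-time form of vicinity preservation, applied at the instants $o_i(j)$ and $o_{i'}(j')$, yields $dist(\pi_i(j),\pi_{i'}(j'))>1$.

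The main obstacle I anticipate is a small modelling subtlety rather than a hard argument: vicinity preservation is stated only for executions under an $\SSYNC$ scheduler, while $\check{F}\in\mathcal{E}(\Lambda,\mathcal{A},I)$ is produced by a subset $\Lambda\subseteq\Omega$ of cycles of an $\ASYNC$ scheduler. I would handle this by appealing to the previously proved Lemma~\ref{lemma:stationary} and Lemma~\ref{lemma:aligned}: since $\check{F}$ is stationary and pairwise aligned, every snapshot in $\check{F}$ coincides with one that could arise in some $\SSYNC$ execution of $\mathcal{A}$ from $I$, and because $\mathcal{A}$ is deterministic the vicinity-preserving guarantee transfers cycle by cycle to $\check{F}$. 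Beyond this remark, the verification of the three consistency clauses is essentially bookkeeping driven by the translation $S_i(j)\leftrightarrow S_i$.
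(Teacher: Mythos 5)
Your proposal is correct and follows essentially the same route as the paper: vicinity preservation collapses the time-dependent visibility $S_i(j)$ to the initial neighbourhood $S_i$ (giving Conditions~1 and~3 of consistency), and the proposition's hypothesis supplies Condition~2 directly. Your closing remark about vicinity preservation being defined only for $\SSYNC$ executions identifies a definitional subtlety that the paper's one-line proof silently glosses over, so you are if anything more careful than the original.
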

\begin{proof}
Since $\mathcal{A}$ is vicinity preserving, 
$r_{i'} \in S_i$ if and only if $r_i \in S_{i'}$  
and $dist(\pi_i(j), \pi_{i'}(j')) > 1$ if $r_{i'} \not\in S_i$. 
\qed 
\end{proof}

\begin{lemma}
\label{lemma:consistent}
$\check{F}$ is consistent. 
\end{lemma}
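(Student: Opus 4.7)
The plan is to invoke the preceding proposition, which has already reduced consistency to a single antecedent: for every pair of accepted cycles $\psi_i(j), \psi_{i'}(j') \in \Lambda$ with $r_{i'} \in S_i$ and $\psi_i(j) \stackrel{*}{\parallel} \psi_{i'}(j')$, one has the direct concurrency $\psi_i(j) \parallel \psi_{i'}(j')$. Vicinity preservation automatically takes care of the visibility ``iff'' clause and the distance-one bound in the consistency definition, so the only work left is to verify this antecedent.

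The key observation, driven by the transition table of $\mathcal{S}_{VP}$, is a timing exclusion. An accepted cycle is precisely a $Bk \to R$ transition, and the only accepting transition out of $Bk$ requires neighbor input $\forall(Bk, B, W)$. Therefore, during the entire interval in which $r_i$ carries color $R$ (from $s_i(j)$ until the subsequent $R \to B$ transition in some later cycle of $r_i$), no neighbor of $r_i$ may take the snapshot of an accepted cycle: such a neighbor would have to be in state $Bk$ and would observe $r_i$'s $R$, forcing a transition to $W$ and a rejection. Symmetrically, $o_i(j)$ cannot lie inside the $R$-interval of any neighbor's accepted cycle.

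To exploit this, I would proceed by induction along the chain $\psi_i(j) = \psi_{i_0}(j_0) \parallel \psi_{i_1}(j_1) \parallel \cdots \parallel \psi_{i_m}(j_m) = \psi_{i'}(j')$, showing that every snapshot time $o_{i_k}(j_k)$ lies in a common ``round window'' surrounding $o_i(j)$, namely the maximal interval during which $r_i$'s color is $Bk$ or its immediately ensuing $R$-phase. The base case is immediate. For the inductive step, pairwise alignment (Lemma~\ref{lemma:aligned}) forces each consecutive pair of cycles in the chain to share overlapping time intervals with mutual visibility, and the $R$-exclusion applied at the appropriate link forbids the snapshot from drifting across a round boundary. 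Once the invariant reaches $k = m$, the endpoint $\psi_{i'}(j')$ is a cycle of a neighbor of $r_i$ whose snapshot lies in $[o_i(j), s_i(j)]$ (or symmetrically $o_i(j) \in [o_{i'}(j'), s_{i'}(j')]$), and since the visibility graph is fixed by vicinity preservation, this yields $\psi_i(j) \parallel \psi_{i'}(j')$; the preceding proposition then delivers consistency.

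The main obstacle will be the chain bookkeeping: the intermediate robots $r_{i_k}$ are generally not neighbors of $r_i$, so the $R$-exclusion cannot be invoked at $r_i$ across arbitrary links of the chain but must be transferred locally from one link to the next. To make this rigorous I would localize the invariant so that each $o_{i_k}(j_k)$ sits inside the round window of its own robot $r_{i_k}$, and then use the overlapping-interval structure of direct concurrency together with the ``neighbors' rounds differ by at most one'' regularity of $\mathcal{S}_{VP}$ (essentially the color-tracking argument already used inside the proof of Lemma~\ref{lemma:fairness}) to carry the bound forward. This monotonicity along the chain is the core technical content; once it is in hand, the remainder of the proof is a direct unfolding of the definitions.
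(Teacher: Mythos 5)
Your overall strategy matches the paper's: reduce to the antecedent of Proposition~\ref{prop:}, decompose $\stackrel{*}{\parallel}$ into a chain of direct concurrencies, and exploit the round-tracking fact from the proof of Lemma~\ref{lemma:fairness} (all cycles in one equivalence class are $Bk\to R$ transitions of the \emph{same} round $k$) together with the observation that a neighbor showing $R$ forces rejection. Those are exactly the ingredients the paper uses.

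However, the decisive step is the one you defer as ``the core technical content,'' and your proposed mechanism for it does not obviously close. Round agreement plus the $R$-exclusion is \emph{not} enough to force the two endpoint cycles to overlap: for the pair $(r_i, r_{i'})$ in isolation it is perfectly consistent that $r_i$ accepts $R^{(k)}$, later transitions $R\to B$ (which requires $r_{i'}$ to be in $\{R,B,W\}$ at that Look, hence in $W$, since round-tracking rules out $R^{(k)}$ and $B^{(k)}$), and that $r_{i'}$ then goes $W\to Bk\to R^{(k)}$ strictly afterwards --- no overlap, no contradiction at this link. The contradiction in the paper comes from propagating that forced $W$ \emph{backward along the entire chain}: $r_{i_\ell}$ in $W$ must pass through a $W\to Bk$ cycle, whose Look forces $r_{i_{\ell-1}}$ into $W$ at a still later time, and so on until $r_{i_1}=r_i$ itself must be in $W$ within round $Y^{(k)}$ \emph{after} it has already completed its $R^{(k)}\to B^{(k)}$ transition, which the state machine forbids. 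Your forward ``round-window monotonicity'' invariant, localized to each robot's own window, is essentially vacuous for the intermediate robots and does not by itself produce this collision; the missing idea is the specific backward $W$-propagation (or an equivalent global temporal-ordering argument along the chain). Without it, the proof stops exactly where the difficulty begins.
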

\begin{proof}
Assume that there are two cycles $\psi_i(j)$ and $\psi_{i'}(j)$ 
in $\Lambda$ such that (in $\check{F}$) 
$\psi_i(j) \stackrel{*}{\parallel} \psi_{i'}(j')$, 
$r_i \in S_{i'}$, and $\psi_i(j) \not\parallel \psi_{i'}(j')$, 
to derive a contradiction. 
Let $\psi_i(j) = (o_i(j), s_i(j), f_i(j))$. 

Since $\psi_i(j) \stackrel{*}{\parallel} \psi_{i'}(j')$, 
there are cycles $\psi_{i_h}(j_h)$ such that 
$\psi_{i_h}(j_h) \parallel \psi_{j_{h+1}}(j_{h+1})$ 
for all $h = 1, 2, \ldots, \ell-1$, 
where $(i_1, j_1) = (i,j)$ and $(i_{\ell}, j_{\ell}) = (i', j')$. 
Note that in cycle $\psi_{i_h}(j_h)$, 
$r_{i_h}$ changes its state from $Bk$ to $R$. 
Recall that in the proof of Lemma~\ref{lemma:fairness}, 
we showed that $\sigma_{i'}(o_i(j))$ is either $c(j)$ or $c(j+1)$. 
Then in cycles $\psi_{i_1}(j_1)$ and $\psi_{i_2}(j_2)$, 
$r_{i_1}$ and $r_{i_2}$ change their states to he same state 
$R$ for the $k$th time for some $k$, 
since $\psi_{i_1}(j_1) \parallel \psi_{i_2}(j_2)$. 
Thus, in $\psi_{i_1}(j_1)$ and $\psi_{i_{\ell}}(j_{\ell})$, 
$r_i$ and $r_{i_{\ell}}$ change their states to $R$ for 
the $k$th time. 

Since $r_{i_{\ell}} \in S_{i_1}$ and 
$\psi_{i_1}(j_1) \not\parallel \psi_{i_{\ell}}(j_{\ell})$, 
we assume that $f_{i_1}(j_1) < o_{i_{\ell}}(j_{\ell})$ 
by the stationarity. 
Since $\psi_{i_{\ell}} \in \Lambda$, 
$\sigma_{i_1}(o_{i_{\ell}}(j_{\ell})) \neq R$, 
which means that there is a cycle $\psi_{i_1}(h_1)$ for 
some $h_1 > j_1$, in which $r_{i_1}$ changes its state from $R$ 
to $B$. 
Obviously, $s_{i_1}(h_1) < o_{i_{\ell}}(j_{\ell})$.

Consider $c_{\ell} = \sigma_{i_{\ell}}(o_{i_1}(h_1))$. 
By definition, $c_{\ell} \in \{W, R, B \}$. 
Observe that $c_{\ell} = R$ means that $c_{\ell} = R^{(k)}$ and 
$c_{\ell} = B$ means that $c_{\ell} = B^{(k)}$. 
Thus, $c_{\ell} = W$, 
because $r_{i_{\ell}}$ changes its state to $R^{(k)}$ in 
$\phi_{i_{\ell}}(j_{\ell})$. 
Then, there is a cycle $\psi_{i_{\ell}}(h_{\ell})$ 
such that $o_{i_1}(h_1) < o_{i_{\ell}}(h_{\ell})$ 
and $h_{\ell} < j_{\ell}$, 
and $r_{i_{\ell}}$ changes its state from $W$ to $Bk$ 
in $\psi_{i_{\ell}}(h_{\ell})$. 

Consider $c_{\ell-1} = \sigma_{i_{\ell-1}}(o_{i_{\ell}}(h_{\ell}))$. 
By the same argument above, $c_{\ell-1} = W$. 
Thus, $c_h = W$ or all $1 \leq h \leq \ell$. 
It is a contradiction, 
since $r_{i_1}$ changes its state to $W$ (in $Y^{(k)}$) from $R^{(k)}$ 
after $\psi_{i_1}(h_1)$. 
\qed 
\end{proof}

\begin{lemma}
\label{lemma:serializable}
$\check{F}$ is serializable. 
\end{lemma}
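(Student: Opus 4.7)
The plan is to show $\mathcal{G}$ is acyclic by equipping each accepted cycle with a ``round'' number and proving that the edge relation $\Rightarrow$ on equivalence classes weakly respects it; then reducing to a within-round argument driven by the time order of $R$-entries.

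First, I define $R(\alpha_i(j)) := j$, the round in which $r_i$ executes its $j$-th transition from $Bk$ to $R$. Using the invariant established inside the proof of Lemma~\ref{lemma:fairness}---that for every state-change cycle $\psi_i(j)$ and every neighbor $r_{i'}$, $\sigma_{i'}(o_i(j))$ is either $c(j)$ or $c(j+1)$---I would prove $\alpha_i(j) \parallel \alpha_{i'}(j')$ forces $j = j'$. Applying the invariant at $o_i(\alpha_i(j))$ and intersecting with the admissible color set $\{Bk, B, W\}$ of the acceptance rule in Table~\ref{table:S-ST} pins $r_{i'}$'s real color to $\{Bk^{(j)}, W^{(j)}\}$ (virtual $Y^{(j)}$) or $\{B^{(j)}\}$ (virtual $B^{(j)}$). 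The $B^{(j)}$ branch is ruled out because reaching $B^{(j)}$ requires a $B \to G$ transition which, during the concurrency window $[o_i, s_i]$, would need $r_i$ to be visible in $\{B, G\}$, whereas $r_i$ is visible as $Bk$ until $s_i$ and as $R$ thereafter. Hence $r_{i'}$ is still in virtual state $Y^{(j)}$ with next $R$-entry $R^{(j)}$, so $j' = j$. Taking the transitive closure gives each equivalence class $\Omega_K$ a well-defined round $R(\Omega_K)$.

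A parallel analysis applied to Cases~2 and~3 of the definition of $\rightarrow$ shows that $\omega \rightarrow \omega'$ in $\Lambda$ always forces $R(\omega) \leq R(\omega')$: the admissible real colors for the neighbor at the relevant snapshot collapse to $B^{(j)}$, $Bk^{(j+1)}$ or $W^{(j+1)}$, each consistent only with $R(\omega') \in \{R(\omega), R(\omega)+1\}$. Hence $\Omega_K \Rightarrow \Omega_{K'}$ implies $R(\Omega_K) \leq R(\Omega_{K'})$, and every hypothetical cycle of $\mathcal{G}$ lies inside the round-$k$ classes for a single fixed $k$. Next, set $T_i := s_i(\alpha_i(k))$; both Case~2 and Case~3 of $\rightarrow$ yield $o_{i'}(\omega') > f_i(\omega) > T_i$ when $\omega = \alpha_i(k) \rightarrow \omega' = \alpha_{i'}(k)$ with $i \neq i'$, so $T_{i'} > T_i$. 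My plan for within-round acyclicity is to show that the equivalence class $\Omega_{K^\ast}$ containing the robot $r_{i^\ast}$ with minimum $T_{i^\ast}$ is a source of the within-round sub-graph: any incoming edge $\omega \rightarrow \omega_0$ with $\omega \in \Omega_{K'} \neq \Omega_{K^\ast}$ and $\omega_0 \in \Omega_{K^\ast}$ would give $T(\omega) < T(\omega_0)$ while $T(\omega) \geq T_{i^\ast}$, so $\omega_0 \neq \alpha^\ast := \alpha_{i^\ast}(k)$; walking the $\stackrel{*}{\parallel}$-chain inside $\Omega_{K^\ast}$ from $\omega_0$ back to $\alpha^\ast$ and using that $r_{i^\ast}$'s color---already at $R^{(k)}$ or later---constrains, via Table~\ref{table:S-ST}, the colors observed by the robot of $\omega$ contradicts the acceptance of $\omega$. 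Iterated source-removal then topologically sorts the round-$k$ classes, completing acyclicity of $\mathcal{G}$.

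The delicate step is precisely this chain argument: because equivalence classes in round $k$ can contain cycles with widely separated $T$-values (linked by chains of pairwise concurrencies), the minimum-$T$ class is not obviously ``temporally earliest.'' Making the contradiction rigorous requires carefully tracing the $\parallel$-chain inside $\Omega_{K^\ast}$ and exhibiting a concrete neighbor of $r_{i^\ast}$ whose visible color at the incoming snapshot is incompatible with the acceptance rule.
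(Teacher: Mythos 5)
Your round function $R(\cdot)$ on accepted cycles is the same device the paper uses: cycles in one equivalence class all enter $R$ for the same $k$-th time (this is exactly what is established inside the proof of Lemma~\ref{lemma:consistent}), so each class carries a well-defined round. The divergence, and the gap, is in how $\Rightarrow$ interacts with the round. The paper's proof asserts that $\omega_i(j) \rightarrow \omega_{i'}(j')$ forces a \emph{strict} increase $k < k'$, because $\sigma_i(o_{i'}(j')) \neq R$ at the later Look; with strict monotonicity along every edge, acyclicity of $\mathcal{G}$ is immediate and no within-round case ever arises. You only establish $R(\omega) \leq R(\omega')$ --- your own case analysis correctly admits the $B^{(j)}$ branch, in which $r_i$ has already passed through $R^{(j)}$ to $B^{(j)}$ while its neighbour $r_{i'}$ is still waiting to make its $j$-th acceptance --- and you are therefore forced into a second, within-round acyclicity argument. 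That argument is exactly where your proof stops. You propose taking the class of the minimum-$T$ robot as a source and iterating source removal, but you concede that an incoming edge may land on a member $\omega_0 \neq \alpha^\ast$ of that class, and you do not exhibit the promised contradiction along the $\stackrel{*}{\parallel}$-chain from $\omega_0$ back to $\alpha^\ast$. Since classes are not singletons and the $T$-values of their members can be widely spread, ``the class containing the minimum-$T$ cycle has no incoming within-round edge'' is precisely the nontrivial claim, and it is left unproven. As written, the proposal does not prove the lemma.

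There are two ways to close it. Either prove the strict inequality $k < k'$ for every $\rightarrow$ edge between distinct robots, which is the paper's route; for that you must rule out the very $B^{(j)}$ branch you identified, i.e., show that an accepted cycle of $r_{i'}$ whose Look sees the neighbour $r_i$ coloured $B^{(j)}$ cannot be $r_{i'}$'s $j$-th acceptance. Or keep the weak inequality and actually carry out the within-round argument, which needs more than the pointwise bound $T_i < T_{i'}$ per edge --- for instance, an argument that the $R^{(k)}$-colour intervals of the robots in one class are chained by overlaps while an edge $\alpha_c \rightarrow \alpha_d$ forces $r_c$'s interval to end before $r_d$'s begins, combined with the neighbourhood structure. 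Either way, the step you have deferred is the heart of the lemma, not a routine verification.
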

\begin{proof}
Let $\mathcal{G} = (\{\Lambda_0, \Lambda_1, \ldots, \}, \Rightarrow)$, 
where $\Lambda_0,  \Lambda_1, \ldots$ is the equivalence classes 
of $\Lambda$ with respect to $\stackrel{*}{\parallel}$. 

If two cycles $\omega_i(j)$ and $\omega_{i'}(j')$ are in the same class $\Lambda_{m}$, 
i.e., $\omega_i(j) \stackrel{*}{\parallel} \omega_{i'}(j')$, 
as we showed in the proof of Lemma~\ref{lemma:aligned}, 
$r_i$ and $r_{i'}$ change their states to the same state $R^{(k)}$ 
for some $k$. 

Suppose that $\omega_i(j) \rightarrow \omega_{i'}(j')$. 
Then, in $\omega_i(j)$, $r_i$ changes its state to $R^{(k)}$ for some $k$, 
and in $\omega_{i'}(j')$, $r_{i'}$ changes its state to $R^{(k')}$ 
for some $k'$. 
By definition, $\sigma_i(o_{i'}(j')) \neq R$, which implies $k < k'$. 

If $\check{F}$ is not serializable, there is a loop in $\mathcal{G}$, 
which is a contradiction. 
\qed 
\end{proof}

\begin{lemma}
\label{lemma:natural}
Suppose that $\mathcal{A}$ is vicinity preserving. 
$\check{F}$ is natural. 
\end{lemma}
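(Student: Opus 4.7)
The plan is to construct a specific topological sort of $\mathcal{G}$ that exploits the ``round structure'' imposed by $\mathcal{S}_{VP}$, and then verify both naturality conditions against it. For each accepted cycle $\omega_i(j) \in \Lambda$, let $\mu(\omega_i(j))$ denote its position in the subsequence of accepted cycles of $r_i$; that is, $\mu(\omega_i(j)) = m$ iff $\omega_i(j)$ is the $m$-th $Bk \to R$ transition of $r_i$, in which $r_i$ enters $R^{(m)}$. Using the ``within-one'' macro-state invariant established in the proof of Lemma~\ref{lemma:fairness} --- neighbors' macro-states in the cyclic order $Y^{(1)}, B^{(1)}, G^{(1)}, Y^{(2)}, B^{(2)}, \ldots$ differ by at most one step at any common time --- any two cycles related by $\parallel$ must share the same $\mu$-value. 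Hence, by transitivity, every equivalence class $\Omega_k$ carries a well-defined round number $m_k$.

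I would then define the sort $T$ by ordering classes primarily by $m_k$, and within each round by any valid topological sort of the intra-round sub-DAG of $\Rightarrow$. To check that ordering by $m_k$ is consistent with $\Rightarrow$: in Case~1 of $\to$ the round increments by one; in Cases~2 and~3, the monotone macro-phase progression of $r_{i'}$ from $\omega_{i'}(j'-1)$ to $\omega_{i'}(j')$, combined with the within-one constraint applied at $o_i(j)$, rules out $m_{k'} < m_k$. Serializability (Lemma~\ref{lemma:serializable}) guarantees that the intra-round sub-DAG is acyclic, so $T$ is well defined. Condition~2 of naturality is then immediate from vicinity preservation of $\mathcal{A}$: if $r_{i'} \not\in S_i(j)$, then $r_{i'}$ and $r_i$ are never neighbors in the visibility graph, whence $dist(\pi_i(j), \pi_{i'}(j')) > 1$.

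The main obstacle is condition~1. Here $\omega_{i'}(j'-1)$ and $\omega_{i'}(j')$ are consecutive accepted cycles of $r_{i'}$, so $m_{k''} = m_{k'} + 1$; together with $k' \leq k < k''$ in $T$, this forces $m_{k'} = m_k$ and $m_{k''} = m_k + 1$. At $o_i(j)$, $r_i$'s state is $Bk$ in phase $Y^{(m_k)}$, while at $o_{i'}(j')$, $r_{i'}$'s state is $Bk$ in phase $Y^{(m_k+1)}$. Applying the within-one property at $o_{i'}(j')$ to $r_i$ as a visible neighbor, $r_i$'s macro-phase at $o_{i'}(j')$ must lie in $\{G^{(m_k)}, Y^{(m_k+1)}, B^{(m_k+1)}\}$. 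Since each robot's macro-phase is monotone non-decreasing along the cyclic order, $o_{i'}(j') < o_i(j)$ would force $r_i$'s phase at $o_{i'}(j')$ to be at most $Y^{(m_k)}$, contradicting the admissible set; and $o_{i'}(j') = o_i(j)$ would place $r_i$ (phase $Y^{(m_k)}$) and $r_{i'}$ (phase $Y^{(m_k+1)}$) three macro-steps apart, again violating the within-one property. Therefore $o_i(j) < o_{i'}(j')$, which completes the verification and yields the lemma.
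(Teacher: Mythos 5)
Your architecture is workable but genuinely different from the paper's, and as written it contains one concrete gap. The paper proves naturality for an \emph{arbitrary} topological sort $T$ of $\mathcal{G}$: for condition~1 it notes that if $r_{i'}\in S_i(j)$ and $o_i(j)\geq o_{i'}(j')$, then (the cycles being in different classes, hence not concurrent, hence non-overlapping) $\omega_{i'}(j')\rightarrow\omega_i(j)$, so $\Lambda_{k''}\Rightarrow\Lambda_k$, contradicting $k<k''$ in the sort; condition~2 is immediate from vicinity preservation, exactly as in your argument. No round structure is needed there.

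The gap is in your reduction to the case $m_{k'}=m_k$. From $m_{k''}=m_{k'}+1$ and $k'\leq k<k''$ your round-major ordering only yields $m_{k'}\leq m_k\leq m_{k''}$, so $m_k=m_{k''}$ is equally possible, and it actually occurs under $\mathcal{S}_{VP}$: robot $r_i$ accepts and enters $R^{(m)}$; it cannot turn $B^{(m)}$ until its neighbour $r_{i'}$ leaves $Bk$, which $r_{i'}$ does by rejecting into $W$; after $r_i$ turns $B^{(m)}$, $r_{i'}$ returns to $Bk$ and then accepts, entering $R^{(m)}$ in a cycle that does not overlap $\omega_i(j)$ and therefore lies in a \emph{different} equivalence class of the \emph{same} round. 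This configuration satisfies the naturality premise with $m_k=m_{k''}$, and there your phase argument gives nothing: at the relevant instants both robots are in macro-phase $Y^{(m_k)}$, so the within-one invariant produces no contradiction from $o_{i'}(j')\leq o_i(j)$. You must close this case separately --- for instance, $f_{i'}(j')<o_i(j)$ would give $\omega_{i'}(j')\rightarrow\omega_i(j)$ and hence $\Lambda_{k''}\Rightarrow\Lambda_k$, contradicting $k<k''$ in your intra-round topological sort (alternatively, a direct state-machine argument: $r_{i'}$ cannot leave state $R$ while its neighbour $r_i$ is still $Bk$). Once that case is added your proof goes through; note, however, that this $\Rightarrow$-based contradiction already handles every case uniformly, which is why the paper dispenses with the round machinery altogether.
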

\begin{proof}
Let $T = (\Lambda_0, \Lambda_1, \Lambda_2, \ldots)$ 
be any topological sort of $\mathcal{G}$. 
Consider any pair of cycles $\psi_i(j)$ and $\psi_{i'}(j')$ in $\Lambda$ 
such that (in $\check{F}$) $k' \leq k < k''$, 
where $\psi_i(j) \in \Lambda_k$, $\psi_{i'}(j'-1) \in \Lambda_{k'}$, 
and $\psi_{i'}(j') \in \Lambda_{k''}$. 
(We assume that $k'=-1$ when $j'=1$ for the consistency.) 

If $r_{i'} \not\in S_i(j)$, then $dist(\pi_i(j), \pi_{i'}(j')) > 0$ 
since $\mathcal{A}$ is vicinity preserving. 

Suppose that $r_{i'} \in S_i(j)$. 
If $o_i(j) \geq o_{i'}(j')$, since $\omega_i(j) \not\parallel \omega_{i'}(j')$, 
$\omega_{i'}(j') \rightarrow \omega_i(j)$, 
which is a contradiction since $k < k'$. 
Thus, $o_i(j) < o_{i'}(j')$. 
\qed 
\end{proof}

By Lemmas~\ref{lemma:fairness}, \ref{lemma:stationary}, 
\ref{lemma:aligned}, \ref{lemma:consistent}, 
\ref{lemma:serializable}, and \ref{lemma:natural}, 
we have the following theorem. 

\begin{theorem}
\label{theorem:SST}
For any vicinity preserving algorithm $\mathcal{A}$ 
for non-luminous SSYNC mobile robots, 
color-based synchronizer $\mathcal{S}_{VP}$ is correct. 
\end{theorem}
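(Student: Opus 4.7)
My plan is to treat Theorem~\ref{theorem:SST} as a direct bookkeeping consequence of the six lemmas already established in this subsection, since the definition of correctness of a synchronizer is a conjunction of exactly the properties proved there. The argument proceeds by unpacking that definition and citing the appropriate lemma for each clause.

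First, I would recall the definition of correctness from the start of Section~\ref{sec:synchronizer}: a synchronizer $\mathcal{S}$ is correct if, for any schedule $\Omega\in\ASYNC$, algorithm $\mathcal{A}$, initial configuration $I$, and execution $F\in\mathcal{E}(\Omega,\mathcal{S}(\mathcal{A}),\hat{I})$, (i)~$\Lambda$ is fair, and (ii)~$\check{F}$ satisfies Assumptions~\ref{ass1}--\ref{ass5}. Thus fix $\Omega$, $\mathcal{A}$, $I$, and an arbitrary $F\in\mathcal{E}(\Omega,\mathcal{S}_{VP}(\mathcal{A}),\hat{I})$, assuming $\mathcal{A}$ is vicinity preserving. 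By Lemma~\ref{lemma:fairness}, $\Lambda$ is fair, establishing clause~(i).

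For clause~(ii), I would invoke the five assumption-specific lemmas one by one: Lemma~\ref{lemma:stationary} yields Assumption~\ref{ass1}; Lemma~\ref{lemma:aligned} yields Assumption~\ref{ass2}; Lemma~\ref{lemma:consistent} yields Assumption~\ref{ass3}; Lemma~\ref{lemma:serializable} yields Assumption~\ref{ass4}; and Lemma~\ref{lemma:natural}, which uses the vicinity-preserving hypothesis on $\mathcal{A}$, yields Assumption~\ref{ass5}. Combining these, $\check{F}$ satisfies all five assumptions and hence, by Theorem~\ref{theorem:sufficiency}, $\check{F}\in\mathcal{E}(\Lambda,\mathcal{A},I)$ has a similar SSYNC execution $\tilde{E}\in\mathcal{E}(\tilde{\Lambda},\mathcal{A},I)$ for some $\tilde{\Lambda}\in\SSYNC$, matching the second bullet of the correctness definition.

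Since every ingredient has already been proved, there is no genuine obstacle at this final step; the only care needed is to verify that the hypotheses used by each lemma (notably vicinity preservation in Lemmas~\ref{lemma:consistent} and \ref{lemma:natural}, and the use of the helper observation from Proposition~\ref{prop:} and the color-class invariants from the proof of Lemma~\ref{lemma:fairness}) are in fact available under the assumptions of the theorem. Because the theorem quantifies over arbitrary $(\Omega,\mathcal{A},I)$ with $\mathcal{A}$ vicinity preserving, and each lemma has been stated under exactly those hypotheses for a generic $F$, the quantifiers line up and the conclusion follows immediately.
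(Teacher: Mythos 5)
Your proposal matches the paper exactly: the theorem is stated there as an immediate consequence of Lemmas~\ref{lemma:fairness}, \ref{lemma:stationary}, \ref{lemma:aligned}, \ref{lemma:consistent}, \ref{lemma:serializable}, and \ref{lemma:natural}, which is precisely the bookkeeping you carry out. Your additional care in checking that the vicinity-preserving hypothesis is available where needed is consistent with (and slightly more explicit than) the paper's treatment.
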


\section{Conclusion}
\label{sec:conclusion}

In this paper, we investigated synchronization by ASYNC robots 
with limited visibility. 
We started with a sufficient condition for an ASYNC execution 
to have a similar SSYNC execution. 
Our condition consists of stationarity, 
pairwise alignment, consistency, serializability, and 
naturality on the timing of Look-Compute-Move cycles 
and visibility relation among the robots. 
We then showed the necessity of the five properties 
under a randomized adversary that  selects non-rigid movement and asynchronous observations 
of the robots. 
Our randomized impossibility argument is a novel and stronger 
technique than a worst-case (deterministic) analysis. 
Finally, we presented a color-based synchronizer for luminous ASYNC robots 
together with the limit of color-based synchronizers. 
We showed that there  exists  an algorithm for which 
no color-based synchronizer can guarantee the five properties, 
if the algorithm is not visibility preserving. 
Then, we provided a color-based synchronizer 
that, for a given vicinity preserving algorithm $\mathcal{A}$, 
produces an ASYNC execution that satisfies the 
five properties. 
Thus, luminous ASYNC robots can simulate 
vicinity preserving algorithms designed for (non-luminous) SSYNC robots. 
There are important open problems about a necessary and sufficient condition 
for an algorithm to have a luminous synchronizer. 
The requirement of our color-based synchronizer 
is that an algorithm $\mathcal{A}$ is vicinity preserving. 
It is open whether there exists a color-based synchronizer 
and a general luminous 
synchronizer that works for visibility preserving algorithms. 

\section*{Acknowledgment}

The authors would like to thank Prof. Toshio Nakata for his
precious comments on the infinite product of a Borel 
probability measure space and 
Prof. Giovanni Viglietta for precious discussion in 
University of Ottawa. 

\bibliographystyle{plain}
\bibliography{bibitems}

\end{document}